\newcommand{\equals}{\stackrel{\mathrm{def}}{=}}
\title{Mixed-criticality Scheduling with Dynamic Redistribution of Shared Cache\footnote{This work was partially supported by National Funds through FCT/MEC (Portuguese Foundation for Science and Technology) and co-financed by ERDF (European Regional Development Fund) under the PT2020 Partnership, within the CISTER Research Unit (CEC/04234); also by by FCT/MEC and the EU ARTEMIS JU within project ARTEMIS/0001/2013- JU grant nr. 621429 (EMC2).}}
\titlerunning{Mixed-criticality Scheduling with Dynamic Redistribution of Shared Cache} 
\author[1]{Muhammad Ali Awan}
\author[2]{Konstantinos Bletsas}
\author[3]{Pedro F. Souto}
\author[4]{Benny \r{A}kesson}
\author[5]{Eduardo Tovar}
\affil[1]{CISTER Research Centre and ISEP, Porto Portugal\\
  \texttt{muaan@isep.ipp.pt}}
\affil[2]{CISTER Research Centre and ISEP, Porto Portugal\\
  \texttt{ksbs@isep.ipp.pt}}
\affil[3]{University of Porto, Faculty of Engineering and CISTER Research Centre and ISEP, Porto Portugal\\
  \texttt{pfs@fe.up.pt}}
\affil[4]{CISTER Research Centre and ISEP, Porto Portugal\\
  \texttt{kbake@isep.ipp.pt}}
\affil[5]{CISTER Research Centre and ISEP, Porto Portugal\\
  \texttt{emt@isep.ipp.pt}}
\authorrunning{M.\,A. Awan, K. Bletsas, P.\,F. Souto, B. \r{A}kesson, and E. Tovar} 
\subjclass{C.3 Real-time and embedded systems}
\keywords{Mixed Criticality Scheduling, Vestal Model, Dynamic Redistribution of Shared Cache, Shared Last-level Cache Analysis, Cache-aware Scheduling}
\begin{document}

\maketitle

\begin{abstract}
The design of mixed-criticality systems often involves painful tradeoffs between safety guarantees and performance.
However, the use of more detailed architectural models in the design and analysis of scheduling arrangements for mixed-criticality 
systems can provide greater confidence in the analysis, but also opportunities for better performance. Motivated by this 
view, we propose an extension of Vestal's model for mixed-criticality multicore systems that (i) accounts for the 
per-task partitioning of the last-level cache and (ii) supports the dynamic reassignment,
for better schedulability, of cache portions initially reserved for lower-criticality tasks to the higher-criticality tasks, 
when the system switches to high-criticality mode. To this model, we apply partitioned
EDF scheduling with Ekberg and Yi's deadline-scaling technique. Our schedulability analysis and scalefactor calculation 
is cognisant of the cache resources assigned to each task, by using WCET estimates that take into account these resources. 
It is hence able to leverage the dynamic reconfiguration of the cache partitioning, at mode change, for better performance, 
in terms of provable schedulability. 
We also propose heuristics for partitioning the cache in low- and high-criticality mode, that promote schedulability.
Our experiments with synthetic task sets, indicate tangible improvements in schedulability compared
to a baseline cache-aware arrangement where there is no redistribution of cache resources from low- to high-criticality tasks 
in the event of a mode change.
\end{abstract}

\section{Introduction}
Many real-time embedded systems (automotive, avionics, aerospace) host functions of different criticalities. 
A deadline miss by a high-criticality function can be disastrous, but losing a low-criticality function only moderately affects the quality 
of service. 
Scalability and cost concerns favour mixed-criticality (MC) systems, whereby tasks of different criticalities are scheduled on 
the same core(s). However, this brings challenges. Lower-criticality tasks interfering unpredictably with higher-criticality tasks can be 
catastrophic. Conversely, rigid prioritisation by criticality leads to inefficient processor usage. Therefore, researchers have 
been working on scheduling models and techniques for (i)~efficient use of processing capacity and (ii)~schedulability guarantees for all 
tasks under typical conditions subject to (iii)~ensured schedulability of high-criticality tasks in all cases. 
Most works~\cite{Burns_Davis_13} are based on Vestal's model~\cite{Vestal_07,baruah:burns:ada:europe:2011}, which views the system operation as different modes,
whereby only tasks of a certain criticality or above execute; additionally, different worst-case task execution times (WCETs) are assumed for the same task in each mode that it can be a part of, with corresponding degrees of confidence. This is because the cost of provably safe WCET estimation (and the associated pessimism) is justified only for high-criticality tasks. Other tasks have less rigorous WCET 
estimates, which might be exceeded, very rarely.

Many variants of the Vestal task model have been explored in recent years, with ever more sophisticated scheduling approaches 
and corresponding schedulability analysis techniques being devised for those. Yet, more progress is needed in terms of making the platform 
model more realistic, by incorporating more details about the architecture. The potential benefits could be (i)~more accurate, hence 
safer, schedulability analysis, but also (ii)~improved performance, from scheduling arrangements that acknowledge and leverage those 
architectural details. In particular, one could look for inspiration at efforts from the general-purpose (i.e., non-mixed-criticality) 
real-time systems domain, towards more \textit{cache-aware} scheduling and analysis. Notably, Mancuso et al.~\cite{Mancuso_DBPMM_13}, 
in the context of the Single-Core Equivalence (SCE) framework~\cite{Sha_CMKY_14}, consider 
(i)~a cache-partitioned multicore architecture and (ii)~task WCET estimates that are cognisant of the cache-partitioning. 

Our work is inspired from the SCE framework and specifically seeks to integrate the effects of one particular shared resource,
the last-level cache, into a dual-criticality Vestal model. We assume a last-level cache shared by all cores and partitioned among 
the different tasks via the Coloured Lockdown approach, to mitigate intra- and inter-core interference.
For better resource usage and schedulability, instead of a static cache partitioning, we reclaim the cache pages allocated to 
low-criticality tasks (L-tasks) and redistribute those to high-criticality tasks (H-tasks), 
upon a switch to high-criticality mode (H-mode). In turn, the additional resources afforded to those tasks drive down their
(cache-cognisant) H-mode WCETs. We propose a new mixed-criticality schedulability analysis that takes into account these
effects, allowing for improvements in the guaranteed schedulability of the system. In a summary, these are the main
contributions of our work:
 
\begin{enumerate}
\item 
We integrate the shared platform resources into a mixed-criticality model and dynamically redistribute those resources 
as a part of mixed-criticality scheduling. We demonstrate this principle by applying to the shared last-level cache. 	
\item 
We formulate schedulability analysis for the proposed model, assuming EDF scheduling using Ekberg and Yi's deadline scaling.
Our analysis leverages the fact that cache resources are reclaimed from low-criticality tasks, in the event of a mode change, 
and redistributed to high-criticality tasks. This allows for improved schedulability. 
\item 
We propose a two-staged allocation heuristic for allocating cache resources to the tasks, in the two modes of operation,
and implement it by Integer Linear Programming (ILP).
\end{enumerate}

Our experiments with synthetic task sets indicate appreciable schedulability improvements
over approaches that perform no reclamation of cache-resources at mode change.

This paper is organised as follows. Section~\ref{s:related_work} presents the related work. The system model and the 
assumptions are discussed in Section~\ref{s:system_model}. 
The schedulability analysis for that model is presented in Section~\ref{s:schedulability_analysis},
followed by some proposed heuristics for cache allocation to the tasks in the two modes, in Section~\ref{s:heuristics}.
Section~\ref{s:evaluation} presents and discusses the experiments used to evaluate the performance of the proposed approach. 
Conclusions are drawn in Section~\ref{s:conclusions}. 

\section{Related Work\label{s:related_work}}

Several feasibility tests are known for Vestal-model systems scheduled under, e.g., EDF or Fixed Priorities. 
 One drawback, when using EDF, is that an H-task too close to its deadline, at the moment of a mode change, may be unable to accommodate its outstanding 
execution time (associated with its H-WCET) until its deadline,  leading to a deadline miss. Therefore, the \textit{deadline-scaling} technique 
was conceived~\cite{Baruah_BALMSS_12, Ekberg_Yi_12, Masrur_MW_15, Gu_Easwaran_16}, to avert such scenarios if possible. 
It originated with EDF-VD~\cite{Baruah_BALMSS_12}, 
which uses standard EDF scheduling rules but, instead of reporting the real deadlines to the EDF scheduler for scheduling decisions, it reports 
shorter deadlines (if needed) for H-tasks during L-mode operation. This helps with the schedulability of H-tasks in the case of 
a switch to H-mode, because it prioritises H-tasks more than conventional EDF would, over parts of the schedule. This allows them 
to be sufficiently ``ahead of schedule'' and catch up with their true deadlines if any task overruns its L-WCET. In H-mode, the true 
H-task deadlines are used for scheduling and L-tasks are ``dropped'' (i.e., idled). EDF-VD proportionately shortens the H-task deadlines 
according to a single common scalefactor and its schedulability test considers the task utilisations in both modes. 
Ekberg and Yi~\cite{Ekberg_Yi_12} 
improved upon EDF-VD by enabling and calculating distinct scale factors for different H-tasks and using a more precise demand bound function 
(dbf) based schedulability test~\cite{Baruah_MR_90}, for better performance. The scalefactor calculation is an iterative 
task-by-task process (for details, see~\cite{Ekberg_Yi_12,Ekberg_Yi_14}). 

However, the aforementioned scheduling solutions typically only 
consider the task execution on the processor cores and do not consider other platform resources, such as interconnects,
caches and main memory.
Some other works do consider interference on shared resources and propose mechanisms for
its mitigation, albeit for single-criticality systems. For instance, several software-based approaches 
are proposed for mitigating cache and memory interference in multi-core 
platforms~\cite{Mancuso_DBPMM_13, Yun_YRMS_13, Nowotsch_PBTWS_14, Flodin_LY_14, Behnam_INS_13,Valsan_YF_16}. 
Some of these works integrate the interference on shared resources to the schedulability analysis of the system. 
Mancuso et al.~\cite{Mancuso_PCLH_15} integrate the effect of multiple shared resources (cache, memory bus, DRAM memory) 
on a multicore platform under partitioned fixed-priority preemptive scheduling. Pellizzoni and 
Yun~\cite{Pellizzoni_Yun_16} generalise the arrangement (and the analysis from~\cite{Mancuso_PCLH_15}) 
to uneven memory budgets per core and propose a new analysis for different memory scheduling schemes.
Behnam et al.~\cite{Behnam_INS_13} incorporated the effect of interference on shared resources under server-based hierarchical 
scheduling, that provides isolation between independent applications. 

A software-based memory throttling mechanism for explicitly controlling the memory interference 
under fixed-priority preemptive scheduling is proposed in~\cite{Yun_YPCS_12}, although it only considers 
the timing requirements of tasks on a single critical core, whereupon all critical tasks are scheduled. 
The rest of the cores (interfering cores) are assumed to have non-critical tasks. 
Nevertheless, the analyses in existing works that consider the shared resources in the context of scheduling, assume 
that resources are statically allocated. Our proposed mixed-criticality algorithm considers the dynamic redistribution of shared 
resources, in order to efficiently exploit their availability and improve the schedulability of the system. 
In this work, we demonstrate this principle with one particular resource: the last-level cache.

\section{System model and assumptions \label{s:system_model}}
\subsection{Platform}
We assume a multicore platform composed of $m$ identical cores accessing main memory via a shared memory controller. 
A core can have multiple outstanding (i.e., not served yet) memory requests . Prefetchers and speculative units are disabled. Our assumptions 
about the memory subsystem are inspired by those of the SCE~\cite{Sha_CMKY_14} framework. We assume that all cores share 
a big last-level cache, but have dedicated upper-level caches (closer to the cores). Colored Lockdown~\cite{Mancuso_DBPMM_13} is used, 
to mitigate the intra-/inter-core interference. It allows a task to lock its $\sigma$ most frequently used pages (hot pages) 
in the last-level cache,  which facilitates upper-bounding the number of residual memory accesses (i.e., last-level cache misses)
and, by extension, the WCET as a function of $\sigma$.
In this work, we only analyse the integration and dynamic redistribution of one particular resource (the shared last-level cache) 
into a mixed criticality scheduling theory, as proof of concept, and we genuinely believe that a similar approach can be used 
to integrate other shared resources. The SCE framework also deploys the OS-level memory bandwidth regulation mechanism Memguard~\cite{Yun_YPCS_16} and
 the DRAM-bank partitioning mechanism PALLOC~\cite{Yun_MZP_14} to mitigate the interference on those shared resources. 
In the future, we intend to also  exploit these SCE mechanisms and dynamically redistribute memory access budgets at the mode switch.

\subsection{Task model}

Consider Vestal's base model with two criticality levels, high and low, as a starting point. 
Each task has an associated criticality, low or high. High-criticality tasks (H-tasks) have two WCET estimates: The L-WCET, which 
is \textit{de facto} deemed safe, and the H-WCET, which is provably safe and possibly much greater. For low-criticality tasks (L-tasks), 
only the L-WCET is defined. There are two modes of operation. The system boots and remains in low-criticality mode (L-mode) as long as no job 
(instance of a task) executes for longer than its L-WCET. However, if any job exceeds its L-WCET, then the system immediately switches into 
high-criticality mode (H-mode) and permanently dispenses with the execution of all L-tasks. It is pessimistically assumed that 
in H-mode all jobs by H-tasks (including any existing jobs at the time of the mode switch) may execute for up to their H-WCET. 
Under these assumptions, it must be provable by an offline schedulability test that (i)~no task misses a deadline in L-mode and 
(ii)~no H-task misses a deadline in H-mode. We extend this basic model and assume that both the L-WCET and the H-WCET are 
functions of the number of the task's hottest pages locked in the last-level cache. In detail:   

Our task set consists of $n$ independent sporadic tasks ($\tau \equals \lbrace \tau_1, \tau_2,  \ldots, \tau_n \rbrace$).  
Each task $\tau_i\in \tau$ has a minimum inter-arrival time $T_i$, a relative deadline $D_i$ and a criticality level 
$\kappa_i \in \lbrace L, H \rbrace$ (low or high, respectively). The subsets of low-criticality and high-criticality 
tasks are defined as $\tau(L) \equals \lbrace \tau_i \in \tau | \kappa_i = L \rbrace$ 
and $\tau(H) \equals \lbrace \tau_i \in \tau | \kappa_i = H \rbrace$. We assume constrained deadlines, i.e., $D_i \leq T_i$. 
The original Vestal model is extended based on the following assumptions: 

\begin{itemize}
\item 
The (actual) WCET of a task depends on the number of its pages (selected in order of access frequency) locked in place in the last-level cache. 
\item 
Different \textit{estimates} of that WCET (derived via different techniques), are to be used for the L-mode and H-mode.
\end{itemize}

For each task $\tau_i$, the L-WCET $C_i^L(\cdot)$ and the H-WCET $C_i^H(\cdot)$ are not single values, but rather functions of the pages locked 
in the last-level cache. For example $C^{L}_i(6)$ denotes the L-WCET of $\tau_i$ when this task is configured with its 
$6$ ``hottest'' pages locked in the cache. How the ordered list of hot pages per task is obtained (and its accuracy) is beyond
the scope of this paper and orthogonal to both the WCET estimation techniques and the safety of our analysis, as long as the same
$\sigma$ pages were assumed locked in cache when deriving $C^{L}_i(\sigma)$ and $C^{H}_i(\sigma)$. In practice, the profiling 
framework in~\cite{Mancuso_DBPMM_13} can be used for ranking each tasks's pages by access frequency. 
Estimating the WCET
in isolation, for each task,
assuming that the top $\sigma$ pages in the list are locked in the cache, allows for the construction of a 
\textit{progressive lockdown curve} (WCET vs number of locked pages in last-level cache). 
More locked pages in the last-level cache means fewer last-level cache misses (i.e., fewer residual memory requests) 
and, consequently, also a smaller WCET. 

The technique in~\cite{Mancuso_DBPMM_13} for generating the progressive lockdown curve is measurement-based, 
so its output is not provably safe, but it can serve as the L-WCET progressive lockdown curve $C_i^L(\cdot)$. Moreover, 
some static analysis tools comprehensively cover all possible control flows (or even some infeasible paths) in a task, and these 
can be used to estimate the H-WCETs. By safely modelling accesses to the hot pages locked-in by Colored Lockdown as 
``always hit upon reuse'', 
the static analysis tool can derive tighter WCET estimates than it would without this knowledge -- and the improvement will be 
greater the more pages are locked in the cache. Hence, a progressive lockdown curve similarly exists for the H-WCET $C_i^H(\cdot)$. 

To demonstrate the concept, Fig.~\ref{f:example_pldc} shows (imaginary) H- and L-mode progressive lockdown curves of a task $\tau_i$. 
The $x$ and $y$ axes show the number of locked pages and WCET, respectively. Ideally, these two curves are non-increasing 
functions\footnote{In the general case, the progressive lockdown curves are not necessarily convex, and we make no such assumption 
nor does our approach depend on such a property (convexity).}. 
Let us assume that $\sigma_i^L$ and $\sigma_i^H$ denote the number of pages of a task $\tau_i$ locked in last-level cache 
in L- and H-mode, respectively. Then, the utilisation of a task in the L-mode (H-mode) is defined as 
$U_i^L(\sigma_i^L)\equals \frac{C_i^L(\sigma_i^L)}{T_i}$ 
(resp., $U_i^H(\sigma_i^H) \equals \frac{C_i^H(\sigma_i^H)}{T_i}$). 
In this paper we assume that the L- and H-mode progressive lockdown curves for each task are already provided to us
as input. We also assume fully partitioned scheduling, i.e., no task ever migrates.

In case the overheads of unlocking and locking pages in the cache at mode change would be excessive, one could use per-task cache partitions \textit{without any locking} (i.e., populated with lines dynamically). Techniques like APTA~\cite{apta} could derive the equivalent of a parametric WCET curve as a function of the partition size in the L-mode and the H-mode. However, for simplicity, in the rest of the paper we assume the use of page locking.

\subsection{Impact of mode change upon WCET}
Under our model, a job by an H-task $\tau_i$ released in L-mode has its $\sigma_i^L$ hottest pages in the cache
and a job by the same task released in H-mode has its $\sigma_i^H$ hottest pages in the cache. Both $\sigma_i^L$
and $\sigma_i^H$ are decided at design time (with $\sigma_i^H \geq \sigma_i^L,~\forall \tau_i \in\tau(H)$).
We assume that, as soon as a mode change occurs, the system can reclaim the cache pages hitherto allocated to L-tasks, for
redistribution to the H-tasks. However, it is conservatively assumed that only new jobs by H-tasks, released after the mode 
change, benefit from the additional cache pages (either because it is only opportune to distribute them at the next release, 
or because, in the worst-case, the improvement from additional pages afforded to a job already having started its execution 
might not be quantifiable).
For analysis purposes, we therefore conservatively assume that any H-job caught in the mode change 
may execute for up to $C_i^{H}(\sigma_i^L)$ time units, whereas any subsequent job by the same task only executes for up to 
$C_i^{H}(\sigma_i^H) \leq C_i^{H}(\sigma_i^L)$.

\begin{figure}
\begin{minipage}[b]{1\linewidth}
\centering
\includegraphics[width=0.9\linewidth]{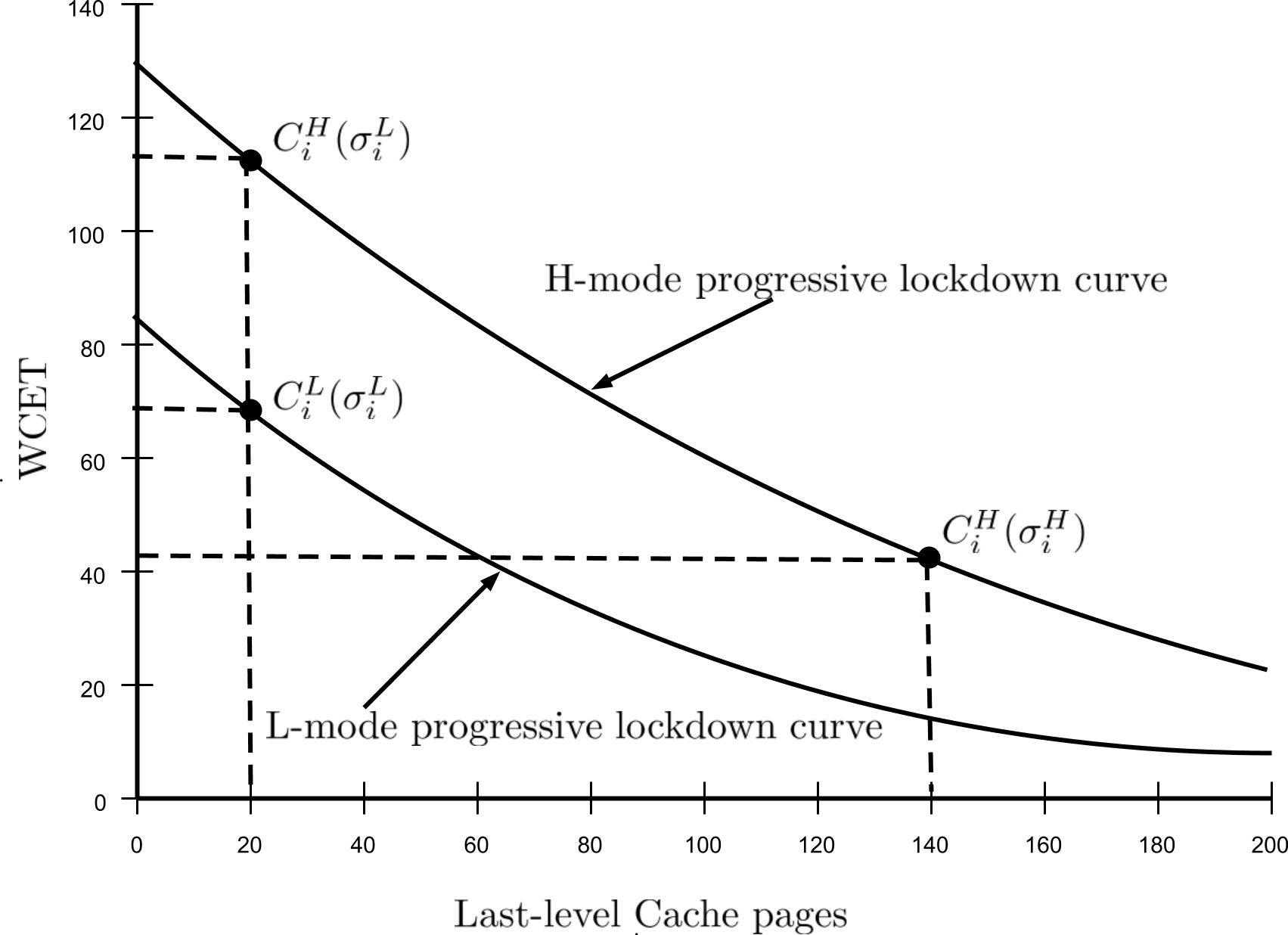}
 \caption{\label{f:example_pldc} H-mode and L-mode progressive lockdown curves.}
\end{minipage}
\end{figure} 

One interesting counter-intuitive property of our model is that there may be cases when 
$C_i^H (\sigma_i^H)\leq C_i^L(\sigma_i^L)$, unlike what holds for the classic Vestal model, where $C_i^H \geq C_i^L$ in all cases. 
This can happen if the reduction in last-level cache misses from the additional pages
allocated to the task in the H-mode offsets the pessimism from using a more conservative estimation technique 
for H-WCETs than for L-WCETs. Fig.~\ref{f:example_pldc} illustrates this possibility.
Leveraging such cases in the analysis can lead to improvements in provable schedulability, over
approaches that do not reallocate cache pages in the event of a mode switch.

\subsection{Aspects of deadline scaling}

As already mentioned, in L-mode, the H-tasks report to the EDF scheduler a shorter deadline $D_i^L \leq D_i$, for the purpose
of scheduling decisions. In H-mode, the true deadline is used (i.e., $D_i^H =D_i$). The designer has freedom over the selection 
of L-mode deadlines and the process that determines them is called deadline scaling. In~\cite{Ekberg_Yi_12}, Ekberg and Yi propose
a heuristic that, starting with $D_i^L=D_i$ for every task, iteratively tinkers with the task L-mode deadlines, using 
their schedulability test to guide the heuristic to identify opportunities to decrease a deadline by a notch. In our work, we
also use the same heuristic (details in~\cite{Ekberg_Yi_12}), with no changes except for the fact that our new schedulability 
analysis, cognisant of cache reclamation by H-tasks at mode change, is used, instead of the original analysis in~\cite{Ekberg_Yi_12}.

\section{Schedulability analysis}\label{s:schedulability_analysis}

In this section, we propose a schedulability analysis, drawing from that of Ekberg and Yi~\cite{Ekberg_Yi_12, Ekberg_Yi_14}, for the system model described earlier.
It assumes that the number of hot pages in the two modes ($\sigma_i^L$ and $\sigma_i^H$) for each task is given. Similarly, we also assume that  the scaled L-mode deadline $D_i^L$, with $D_i^H \equals D_i$, is given for each task. As explained, this analysis is to be coupled
with the heuristic of Ekberg and Yi to guide the derivation of the L-mode scaled deadlines.
How to assign values to $\sigma_i^L$ and $\sigma_i^H$, is discussed in the next section.

Ekberg and Yi's analysis is based on the \emph{demand bound function}, $dbf(\ell)$, which  upper-bounds
the execution demand over any time interval of length $\ell$ by all jobs whose scheduling windows  are fully contained in $\ell$. The \emph{scheduling window} of a job is the time interval between its release and its deadline.
The schedulability analysis for the L-mode can be done using standard dbf for EDF, in which the computation demand of a task is maximum when a job is released at 
the beginning of the time interval. In H-mode, if the time interval under consideration begins at the mode switch, in addition to the demand of jobs whose 
scheduling windows are fully contained in $\ell$, we need to consider the demand of \emph{carry-over} jobs of H-tasks, i.e.~jobs of H-tasks that were released, but not finished, 
at the time of the mode switch. Thus, for H-mode analysis, we consider that the scheduling window of a carry-over job always starts at the mode switch.

A key result of Ekberg and Yi's analysis is the following lemma, which allows to upper bound the demand in H-mode of a carry-over job:
\begin{lemma}\label{l:Ekberg_Yi_lema1} (Demand of carry-over jobs, Ekberg and Yi's \cite{Ekberg_Yi_14}'s Lemma 1) Assume that EDF uses relative deadlines $D_i^L$ and $D_i^H$, with $D_i^L \le D_i^H = D_i$ for high-criticality task $\tau_i$, and that we can guarantee that the demand is met in low-criticality mode (using $D_i^L$). If the switch to high-criticality mode happens when a job from $\tau_i$ has a remaining scheduling window of $x$ time units left until its true deadline, then the following hold:
\begin{enumerate}
\item If $x < D_i^H - D_i^L$, then the job has already finished before the switch.
\item If $x \ge D_i^H - D_i^L$, then the job may be a carry-over job, and no less than $\llbracket C_i^L - x + D_i ^H - D_i^L \rrbracket_0$ time units of the job's work were finished before the switch.
\end{enumerate}
\end{lemma}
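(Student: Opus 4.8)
The plan is to fix a single job of $\tau_i$, released at some time $r$, and measure everything relative to the mode-switch instant $t_s$. The job's true deadline is at $r + D_i^H$, and its remaining scheduling window at the switch is $x$, so $t_s = r + D_i^H - x$. Its scaled L-mode deadline sits at $r + D_i^L$. First I would record the single identity that drives both cases, namely the signed gap between the switch and the scaled deadline,
\[
(r + D_i^L) - t_s = D_i^L - D_i^H + x = x - (D_i^H - D_i^L).
\]

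For part~1, assume $x < D_i^H - D_i^L$. The identity above then gives $t_s > r + D_i^L$, i.e.\ the switch occurs strictly after the scaled deadline. Up to $t_s$ the system is genuinely in L-mode and the job is scheduled by EDF using $D_i^L$, so by the hypothesis that all L-mode demand is met the job completes no later than $r + D_i^L$. Since $r + D_i^L < t_s$, the job has already finished at the switch, as claimed; no counterfactual reasoning is needed here because the prefix $[r, t_s]$ is entirely L-mode.

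For part~2, assume $x \ge D_i^H - D_i^L$, so $t_s \le r + D_i^L$ and the job may still be pending. The crux is to lower-bound the work $W$ completed by $t_s$ using the stretch $[t_s,\, r + D_i^L]$, whose length is exactly $x - (D_i^H - D_i^L) \ge 0$. I would argue that the work executed before $t_s$ is an intrinsic property of the prefix of the schedule and is therefore the same whether or not the switch is introduced at $t_s$; hence one may reason about a hypothetical run that remains in L-mode. In that run the job carries an L-mode demand of $C_i^L$, which the L-mode guarantee forces it to clear by $r + D_i^L$. Since a job accumulates at most one time unit of work per time unit, its residual demand at $t_s$ is at most the window length $x - (D_i^H - D_i^L)$, giving
\[
W \;\ge\; C_i^L - \bigl(x - (D_i^H - D_i^L)\bigr) \;=\; C_i^L - x + D_i^H - D_i^L,
\]
and combining with the trivial bound $W \ge 0$ yields $W \ge \llbracket C_i^L - x + D_i^H - D_i^L \rrbracket_0$.

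The main obstacle I anticipate is making the counterfactual in part~2 airtight: one must state explicitly that the amount executed before $t_s$ is unchanged by inserting the switch at $t_s$, so that the L-mode demand guarantee legitimately constrains this prefix. The one-unit-per-unit bound on residual work must also be invoked independently of processor availability inside $[t_s, r + D_i^L]$ (EDF may well be serving earlier-deadline jobs there); it survives because a single job cannot execute faster than the processor, so its residual cannot exceed the window length regardless of how that window is actually occupied.
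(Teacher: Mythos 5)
The paper does not actually prove this lemma: it is imported verbatim from Ekberg and Yi~\cite{Ekberg_Yi_14} (their Lemma~1), so there is no in-paper proof to compare against. Your argument is, in substance, the standard proof from that reference, and it is correct. The two pillars are exactly the right ones: (i) the schedule prefix up to the switch instant $t_s$ is identical to the prefix of a run that never switches, because the first L-WCET overrun occurs only at $t_s$, so the L-mode demand guarantee legitimately constrains what happened before $t_s$; and (ii) in the no-switch continuation the job must clear $C_i^L$ by $r+D_i^L$, yet can receive at most $(r+D_i^L)-t_s = x-(D_i^H-D_i^L)$ units of service after $t_s$, which yields $W \ge C_i^L - x + D_i^H - D_i^L$ and, with $W\ge 0$, the stated $\llbracket \cdot \rrbracket_0$ bound. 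Your part~1 is likewise sound: when $x < D_i^H-D_i^L$ the scaled deadline $r+D_i^L$ lies strictly inside the common L-mode prefix, so the completion guaranteed there transfers to the actual run. One cosmetic caveat, shared with the lemma as literally stated: if the job's actual execution requirement happens to be below $C_i^L$, fewer than $C_i^L - x + D_i^H - D_i^L$ units may literally have been executed before $t_s$; the bound is really on the work that \emph{may be assumed} done when upper-bounding the carry-over job's residual demand (which is how \eqref{eq:ekberg:done} uses it), and your worst-case reading ($e=C_i^L$) is the intended one. This does not affect the soundness of the analysis built on the lemma.
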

where
\begin{equation}
\llbracket z \rrbracket_{\min}^{\max} \equals
\begin{cases}
	\min	& \text{if $ z < \min $}\\
	z       & \text{if $ \min \leq z \leq \max $}\\
	\max    & \text{if $ z > \max $}
\end{cases} \nonumber
\label{eq:operator}
\end{equation}
and the bound arguments $\min$ and $\max$ can be omitted, if they are $-\infty$ or $+\infty$, respectively.

In the classic Vestal model there is no resource reallocation upon a mode switch, except for CPU time. Therefore, the computation demand of a carry-over job never exceeds the demand of a full job, and the maximum demand in any time interval of length $\ell$ corresponds to executions that maximise the number of full jobs after the mode switch as shown in Fig.~\ref{f:ekberg_max_demand}(i).
In this scenario, the time interval of length $\ell$ under consideration ends with a deadline for each 
task $\tau_i$ present in H-mode. Accordingly, the subinterval of length $x =\ell \; \mathrm{mod} \; T_i$, 
which starts with the mode switch, is maximised, under the constraint that the number of full jobs is maximum, maximizing its demand,  because, by Lemma~\ref{l:Ekberg_Yi_lema1}, the maximum demand of a carry-over job is non-decreasing with the size of its scheduling window in H-mode.

\begin{figure}
\centering
 \includegraphics[scale=.9]{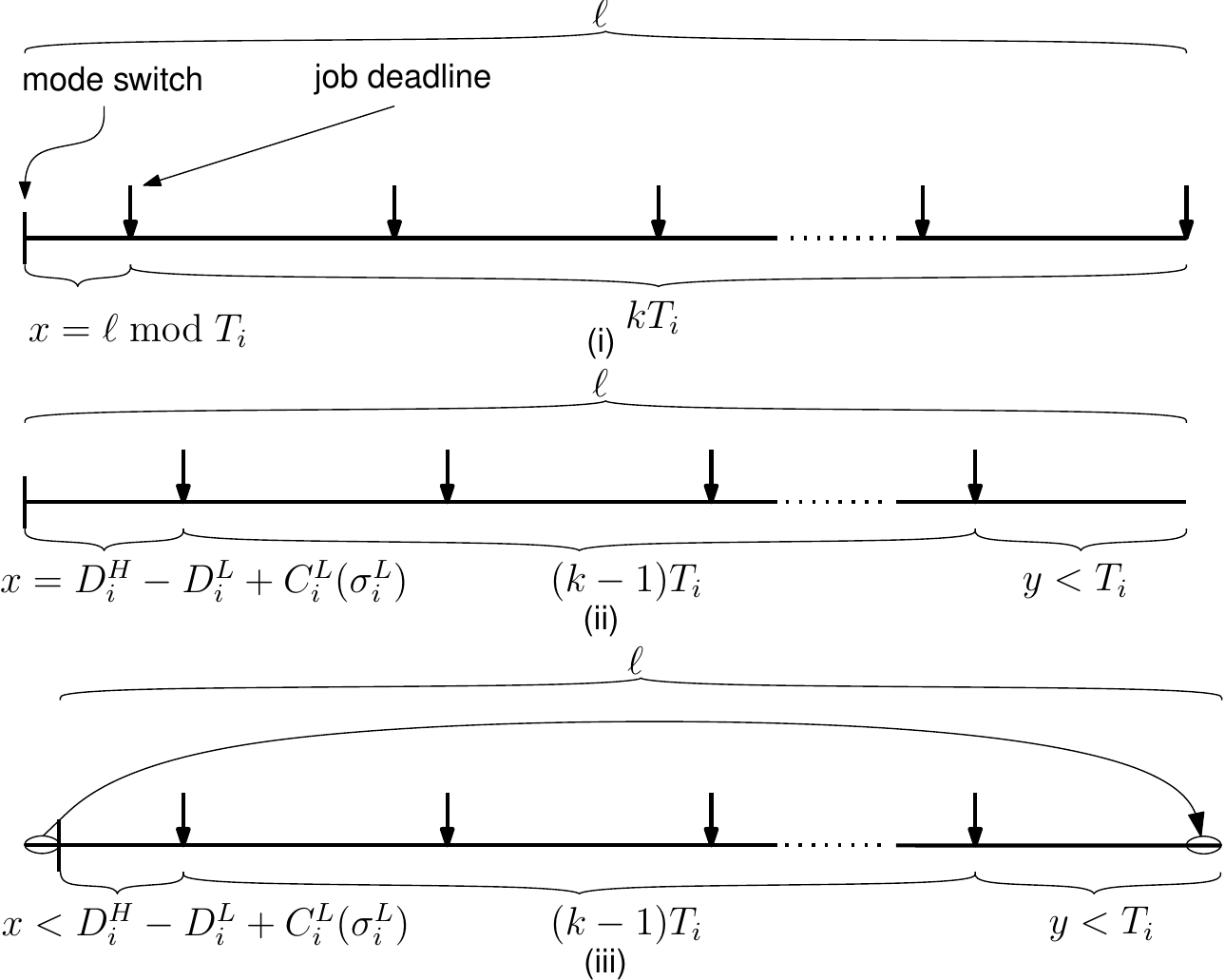}
\caption{Execution with maximum demand for $C_i^H(\sigma_i^H) \ge C_i^L(\sigma_i^L)$.}
\label{f:ekberg_max_demand}
\end{figure}

In our model, $\sigma_i^L < \sigma_i^H$, therefore the maximum demand does not necessarily occur in executions as shown in Fig.~\ref{f:ekberg_max_demand}(i). The reason is that a full job 
in H-mode executes with $\sigma_i^H$ pages locked in the cache, whereas a carry-over job executes with only $\sigma_i^L$ pages locked in 
the cache until the mode switch, and thus, for safety, we assume that it executes with only $\sigma_i^L$ pages locked in the cache 
throughout its execution. Therefore, the demand in H-mode of a full job is not necessarily larger than the (outstanding) demand in H-mode 
of a carry-over job. For example, if $ \ell = T_i$, then the execution shown in Fig.~\ref{f:ekberg_max_demand}(i) has no carry-over job, 
and the maximum demand is $C_i^H(\sigma_i^H)$. However, for such a value of $\ell$, we can have an execution in which there is a \emph{maximal 
carry-over job}, i.e.~a carry-over job with maximum demand in H-mode, $C_i^H(\sigma_i^L)$. If $\sigma_i^L $ and $\sigma_i^H$ are such that 
$C_i^H(\sigma_i^L) > C_i^H(\sigma_i^H)$, i.e. if the extra assigned cache lines are useful to the task and reduce its execution time, the latter execution scenario has a demand that is higher than the former.

Because, Ekberg and Yi's analysis, assumes that the demand of a carry-over job is never larger than the demand of a full-job, 
we need new analysis, built on the following lemma.

\begin{lemma}\label{l:max_demand_H-mode}
In H-mode, for any time interval of length $\ell$, the demand by the jobs of an H-task $\tau_i$ whose scheduling windows are fully contained in $\ell$ is maximum:
\begin{enumerate}
\item either in executions with the maximum number of full jobs after a carry-over job, if one fits, as illustrated in Fig.~\ref{f:ekberg_max_demand}(i)
\item or in executions with the maximal carry-over job with the earliest possible deadline 
followed by as many full jobs as can fit in the remaining time and that arrive as soon as possible, as illustrated in Fig.~\ref{f:ekberg_max_demand}(ii)
\end{enumerate}
\end{lemma}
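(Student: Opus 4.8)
The plan is to fix an H-task $\tau_i$ and, for an interval of length $\ell$ whose left endpoint is the mode switch, compare the total demand contributed by $\tau_i$ across all admissible execution patterns, then argue that the maximum is attained by one of the two canonical patterns in the statement. First I would parametrise an execution by two quantities: the remaining scheduling window $x$ of the job of $\tau_i$ that is ``active'' at the switch (with $x=0$ meaning no carry-over job is present), and the alignment of subsequent releases. By Lemma~\ref{l:Ekberg_Yi_lema1}, a carry-over job exists only when $x \ge D_i^H - D_i^L$, and its outstanding H-mode demand is upper-bounded by $\llbracket C_i^L(\sigma_i^L) - x + D_i^H - D_i^L\rrbracket_0$, which is non-increasing in $x$ but capped below by the maximal value $C_i^H(\sigma_i^L)$ when the carry-over job starts essentially fresh at the switch. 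Full jobs released after the switch each contribute $C_i^H(\sigma_i^H)$ provided their entire scheduling window of length $D_i = D_i^H$ fits inside $\ell$.

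Next I would write the total demand as a function $g_i(x,k)$, where $k$ is the number of full jobs that fit after the carry-over portion, and observe that for fixed packing of full jobs the demand is monotone in $x$ in a way dictated by Lemma~\ref{l:Ekberg_Yi_lema1}: increasing $x$ lengthens the carry-over contribution but consumes interval budget that could otherwise host a full job. This is precisely where the sign of $C_i^H(\sigma_i^L) - C_i^H(\sigma_i^H)$ enters. When the carry-over demand $C_i^H(\sigma_i^L)$ does not exceed the full-job demand $C_i^H(\sigma_i^H)$, trading a carry-over job for an extra full job can only help, so the optimum packs full jobs greedily as in case~1. When $C_i^H(\sigma_i^L) > C_i^H(\sigma_i^H)$, it can instead be advantageous to install a \emph{maximal} carry-over job, i.e.\ one whose deadline is as early as possible so that $x$ is minimised subject to $x \ge D_i^H - D_i^L$ (pushing the carry-over contribution up to $C_i^H(\sigma_i^L)$), and then pack as many full jobs as the residual interval allows, arriving as early as possible; this is case~2. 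The essence of the argument is therefore an exchange/interchange argument: any candidate execution can be transformed, without decreasing $\tau_i$'s demand, into one of these two extremal forms by (i) sliding releases leftward to arrive as soon as possible and (ii) either eliminating or maximising the carry-over job according to the sign of the above difference.

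I would formalise this by a case split on whether $C_i^H(\sigma_i^L) \le C_i^H(\sigma_i^H)$ or not, and within each case show that the two perturbations just described are improving (weakly). For the leftward-sliding step I would note that moving a full job's release earlier never shrinks the portion of its scheduling window inside $\ell$ and never reduces the room for the carry-over portion at the start, using the constrained-deadline assumption $D_i \le T_i$ together with the fact that the carry-over window is pinned to start at the switch. For the carry-over step I would use that the carry-over contribution, as given by Lemma~\ref{l:Ekberg_Yi_lema1}, is maximised at the smallest admissible $x$, which corresponds to the earliest-possible deadline for the carry-over job. Combining the two cases yields exactly the disjunction in the statement.

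The main obstacle I anticipate is the bookkeeping at the boundary between the carry-over job and the first full job: one must verify that insisting on the \emph{earliest} carry-over deadline genuinely frees up the maximal amount of interval for subsequent full jobs, rather than merely shifting demand around, and that no ``intermediate'' execution (a partially advanced carry-over job with $x$ strictly between the extremes, combined with an awkward full-job alignment) can beat both canonical patterns. Handling this cleanly requires showing that $g_i(x,k)$ is, for each fixed admissible region of $x$, either non-increasing or non-decreasing so that its maximum lies at an endpoint of that region, and that the two endpoints reproduce precisely cases~1 and~2. Once that monotonicity-on-intervals claim is established, the rest is a routine comparison of the two endpoint demands, which the subsequent demand-bound formula will encode.
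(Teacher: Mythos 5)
Your overall strategy --- an exchange/perturbation argument showing that the demand is monotone on each admissible region of the carry-over parameter, so that the maximum sits at one of two endpoint patterns --- is the same route the paper takes. However, there are two concrete errors. First, you misread Lemma~\ref{l:Ekberg_Yi_lema1}: the quantity $\llbracket C_i^L(\sigma_i^L) - x + D_i^H - D_i^L \rrbracket_0$ is a lower bound on the work \emph{completed before} the switch, not an upper bound on the outstanding demand; the outstanding demand is $C_i^H(\sigma_i^L)$ minus that quantity, hence \emph{non-decreasing} in $x$. This inversion leads you to claim that the maximal carry-over job is obtained by minimising $x$ subject to $x \ge D_i^H - D_i^L$. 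That is backwards: at $x = D_i^H - D_i^L$ the job is guaranteed to have already finished $C_i^L(\sigma_i^L)$ units, so its residual demand is minimal, not maximal. The carry-over demand only saturates at $C_i^H(\sigma_i^L)$ once $x \ge D_i^H - D_i^L + C_i^L(\sigma_i^L)$, so the ``earliest possible deadline'' of a \emph{maximal} carry-over job is $D_i^H - D_i^L + C_i^L(\sigma_i^L)$ after the switch --- the offset that actually anchors $step_i^H(\ell)$. With your threshold, scenario~2 would be pinned at the wrong point and the subsequent bookkeeping would not go through.

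Second, the sign of $C_i^H(\sigma_i^L) - C_i^H(\sigma_i^H)$ does not decide which scenario wins (and note that the model imposes $\sigma_i^L \le \sigma_i^H$, so $C_i^H(\sigma_i^L) \ge C_i^H(\sigma_i^H)$ always holds and your first branch is essentially vacuous). Even when $C_i^H(\sigma_i^L) > C_i^H(\sigma_i^H)$, scenario~1 can strictly dominate scenario~2 for some values of $\ell$ --- for instance when $\ell \bmod T_i$ is large enough for scenario~1 to retain both a (partially pre-completed) carry-over job and all its full jobs while scenario~2, having spent $D_i^H - D_i^L + C_i^L(\sigma_i^L)$ on the maximal carry-over, fits one full job fewer. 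So a case split in which the sign selects a single winning pattern would require proving a false statement in one branch. The resolution, which the paper adopts, is $\ell$-dependent: shrinking the carry-over window from its saturation point only loses demand until the trailing slack reaches $T_i$, at which point the pattern becomes scenario~1 and gains a full job; whether that gain outweighs the lost carry-over demand depends on $\ell$, so the analysis must retain both endpoint patterns and the demand bound must take their maximum rather than select one a priori.
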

\begin{proof}
The scheduling window of a carry-over job in H-mode always starts at the mode switch. Thus, in H-mode, a time interval of length $\ell$ can 
include the scheduling windows of at most one carry-over job of $\tau_i$, and of a number of full jobs (that is, jobs released 
at or after the mode switch).

If $\ell < D_i^H - D_i^L + C_i^L(\sigma_i^L)$, no full job contributes to the demand, because the shortest length of the scheduling window 
of a full job is $D_i$ and  $\ell < D_i^H$ (because $C_i^L(\sigma_i^L) \le D_i^L$). Thus, in this case, execution scenario 1) 
maximises the amount of time that can be used by a carry-over job and, by Lemma~\ref{l:Ekberg_Yi_lema1}, its demand is maximum.

Let $\ell \ge D_i^H - D_i^L + C_i^L(\sigma_i^L)$. Let $y$ be the length of the right-most subinterval of $\ell$, from the deadline
of the last job contained in $\ell$, if any, until the end of $\ell$.
If  $0 <  y < T_i$, execution scenario 2) maximises the demand at the beginning of the interval, because the earliest deadline of a 
maximal carry-over job is $ D_i^H - D_i^L + C_i^L(\sigma_i^L)$, by Lemma~\ref{l:Ekberg_Yi_lema1} (after substituting $C_i^L$ with 
$C_i^L(\sigma_i^L)$); but, the demand during $y$ does not increase, because the deadlines of two consecutive jobs of $\tau_i$ must be 
$T_i$ time units apart, and therefore subinterval $y$ cannot contain the scheduling window of a full job. 
If $x$ decreases by some amount and $y$ increases by the same amount, 
as illustrated in Fig.~\ref{f:ekberg_max_demand}(iii), the demand of the carry-over job decreases 
without increasing the demand at the end of the interval, unless $y$ becomes $T_i$. If this happens, we transform execution scenario 2) into 
execution scenario 1) and increase the total demand of full jobs, but decrease the demand of the carry-over job, possibly eliminating it. 
Thus, if the total demand in $\ell$ increases when $y$ becomes $T_i$, then execution scenario 1) has maximum demand, else execution 
scenario 2) has maximum demand. 
Decreasing $x$ by a larger amount than necessary for $y$ to become equal to $T_i$, does not increase the demand w.r.t execution scenario 1), 
since it increases neither the total demand of full jobs nor the demand of the carry-over job.
Finally, if $y=0$,  then execution scenarios 1) and 2) are identical and both have a maximal carry-over job of $\tau_i$ with the 
earliest deadline, and the maximum number of full-jobs of $\tau_i$ that can fit in $\ell$, therefore their demand is maximum.
\end{proof}

Thus, a tight demand bound function for any execution in H-mode is the maximum of the demands of execution scenarios 1) and 2), 
illustrated respectively in Fig.~\ref{f:ekberg_max_demand}(i) and (ii).
Next, we adapt Ekberg and Yi's demand bound function for execution scenario 1) to take into account a different number of pages 
locked in the cache per mode. After that, we develop the demand bound function for execution scenario 2), 
which was not relevant in previous work.

In \cite{Ekberg_Yi_14}, Ekberg and Yi provide a bound for the demand of execution scenario 1) in a time interval of length $\ell$, as follows:
\begin{equation} \label{e:dbf_ekberg_n}
\textit{full}_i^H(\ell) - done_i^H(\ell)
\end{equation}
where $\textit{full}_i^H(\ell)$, given by \eqref{eq:ekberg:full}, 
is the maximum demand by all jobs of $\tau_i$ whose scheduling window is fully contained in that interval (in H-mode, the scheduling window of a carry-over job begins at the mode switch and ends at its deadline), and $done_i^H(\ell)$, given by \eqref{eq:ekberg:done}, is the minimum demand of any carry-over job that must be satisfied before the mode switch. 

\begin{equation}
\textit{full}_i^H(\ell) = \left \llbracket \left( \left \lfloor \frac{\ell - \left(D_i ^H - D_i^L\right)}{T_i} + 1 \right \rfloor \right) C_i^H \right \rrbracket_0
\label{eq:ekberg:full}
\end{equation}
\begin{equation}
done_i^H(\ell) =
\begin{cases}
\llbracket C_i^L - (\ell \; mod \; T_i) + D_i ^H - D_i^L \rrbracket_0, & \text{if }  (D_i ^H - D_i^L ) \le \ell \; mod \; T_i < D_i^H \\
0,                                                                     & \text{otherwise }
\end{cases}
\label{eq:ekberg:done}
\end{equation}

We now derive the new expressions for $\textit{full}_i^H(\ell)$ and $done_i^H(\ell)$ to take into account that the number of pages locked in the 
cache in the L-mode and in the H-mode may be different. In this derivation, like Ekberg and Yi in \cite{Ekberg_Yi_14}, we assume that 
there is a carry-over job, if one fits. At the end of this section, we show that, for any time interval $\ell$ after the mode switch, the demand is maximum when there is a carry-over job.

So, assuming that the first job is a carry-over job, if one fits, we modify \eqref{eq:ekberg:full}
(originally (2) in \cite{Ekberg_Yi_14}) as follows:
\begin{equation}
\begin{split}\label{e:fulll_i^H_n}
\textit{full}_i^H(\ell) = & \left \llbracket \left \lfloor \frac{\ell - (D_i ^H - D_i^L )}{T_i} \right \rfloor + 1 \right \rrbracket_0^1 C_i^H(\sigma_i^L)  \\
& + \left \llbracket \left \lfloor \frac{\ell - (D_i ^H - D_i^L )}{T_i} \right \rfloor  \right \rrbracket_0 C_i^H(\sigma_i^H)
\end{split}
\end{equation}
The first term bounds the demand in H-mode of the carry-over job. As shown by Lemma~\ref{l:Ekberg_Yi_lema1},
$D_i^H - D_i^L$ is the smallest scheduling window (in H-mode) of a carry-over job of $\tau_i$. To be safe, we assume that the number of locked 
pages of the carry-over job is $\sigma_i^L$, therefore the maximum demand of the carry-over job, 
ignoring any demand that may have been satisfied before the mode switch, is $C_i^H(\sigma_i^L)$.  
The second term bounds the demand of the jobs that are released after the mode switch and therefore we use their maximum execution time 
with the respective number of locked pages in H-mode, $C_i^H(\sigma_i^H)$.

Likewise, for $done_i^H(\ell)$, we modify ~\eqref{eq:ekberg:done} (originating as (3) in \cite{Ekberg_Yi_14})
by substituting $C_i^L$ with $C_i^L(\sigma_i^L)$. That is, we make explicit that any computation before the mode switch must have been 
performed with $\sigma_i^L$ pages locked in the cache.

Thus, by replacing \eqref{eq:ekberg:full} and \eqref{eq:ekberg:done}
(i.e., (2) and (3) in \cite{Ekberg_Yi_14}) with their versions aware of the number of pages locked in the cache, 
\eqref{e:dbf_ekberg_n} provides a bound for execution scenario 1) when the number of pages locked in the cache is changed 
from $\sigma_i^L$ to $\sigma_i^H$ upon a switch to H-mode. 

The demand under execution scenario 2) is a step function and is given by \eqref{e:dbf_max_carry-over_n}.
\begin{equation} \label{e:dbf_max_carry-over_n}
\begin{split}
step_i^H(\ell) = & \left \llbracket \left \lfloor \frac{\ell - \big(D_i ^H - D_i^L + C_i^L(\sigma_i^L)\big)}{T_i} \right \rfloor + 1 \right \rrbracket_0^1 C_i^H(\sigma_i^L) \\ 
& + \left \llbracket \left \lfloor \frac{\ell - \big(D_i ^H - D_i^L + C_i^L(\sigma_i^L)\big)}{T_i} \right \rfloor \right \rrbracket_0  C_i^H(\sigma_i^H)
\end{split}
\end{equation}
where the first term bounds the demand in H-mode of the carry-over job, which is maximum and has a deadline at the earliest time instant, 
and the second term bounds the demand of the maximum number of full jobs that fit after the carry-over job.

Thus, a demand bound function for any interval of length $\ell$ in H-mode is:
\begin{equation}
dbf_i^H(\ell) = max\big( step_i^H(\ell),\; \textit{full}_i^H(\ell) - done_i^H(\ell)\big)
\end{equation}

Finally, we show that executions with a carry-over job have a higher demand than executions without a carry-over job, an assumption we made above in the derivation of $dbf_i^H(\ell)$.

\begin{lemma}
For any sporadic task $\tau_i$, its maximum demand in H-mode in a time interval of length $\ell$ with only full jobs is not higher than its maximum demand in a time interval of the same length $\ell$ with a carry-over job.
\end{lemma}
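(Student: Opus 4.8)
The plan is to reduce the claim to a single inequality against execution scenario~2). Since the derived demand-bound function is $dbf_i^H(\ell) = \max\big(step_i^H(\ell),\, \textit{full}_i^H(\ell) - done_i^H(\ell)\big)$, which by Lemma~\ref{l:max_demand_H-mode} is the maximum demand over executions that contain a carry-over job whenever one fits, it suffices to show that the only-full demand never exceeds $step_i^H(\ell)$ from \eqref{e:dbf_max_carry-over_n}. First I would write down the maximum demand produced by full jobs alone. Exactly as in the standard EDF demand-bound function, the largest number of full jobs of $\tau_i$ whose scheduling windows (each of length $D_i^H$) fit in an interval of length $\ell$ starting at the mode switch is obtained by releasing the first job at the switch, giving
\[
\textit{onlyfull}_i^H(\ell) \equals \left\llbracket \left\lfloor \frac{\ell - D_i^H}{T_i} \right\rfloor + 1 \right\rrbracket_0 C_i^H(\sigma_i^H) .
\]

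The key idea is a simple demand-preserving transformation: take the only-full execution attaining this value and replace its first full job by a \emph{maximal} carry-over job with the earliest possible deadline. By Lemma~\ref{l:Ekberg_Yi_lema1}, that deadline is $W \equals D_i^H - D_i^L + C_i^L(\sigma_i^L)$, and because L-mode schedulability forces $C_i^L(\sigma_i^L) \le D_i^L$, we have $W \le D_i^H$. Hence the carry-over job's scheduling window is no longer than a full job's, so it fits wherever the first full job did, and — its deadline being no later than $D_i^H$ — at least as many full jobs can be packed after it. Moreover its demand is $C_i^H(\sigma_i^L) \ge C_i^H(\sigma_i^H)$, the inequality holding since $\sigma_i^H \ge \sigma_i^L$ and the progressive lockdown curve is non-increasing. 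Thus the transformed (carry-over) execution has demand no smaller than the original only-full one.

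I would then make this rigorous directly from the closed forms, so that the release-time bookkeeping of the transformation is handled by the formulas themselves. Writing $p \equals \lfloor (\ell - D_i^H)/T_i \rfloor$ and $q \equals \lfloor (\ell - W)/T_i \rfloor$, the relation $W \le D_i^H$ gives $q \ge p$. For $\ell \ge D_i^H$ both $p,q \ge 0$, so evaluating the bracket operators in \eqref{e:dbf_max_carry-over_n},
\[
step_i^H(\ell) = C_i^H(\sigma_i^L) + q\,C_i^H(\sigma_i^H) \ge C_i^H(\sigma_i^H) + p\,C_i^H(\sigma_i^H) = (p+1)\,C_i^H(\sigma_i^H) = \textit{onlyfull}_i^H(\ell) ,
\]
using $C_i^H(\sigma_i^L) \ge C_i^H(\sigma_i^H)$ and $q \ge p \ge 0$; for $\ell < D_i^H$ we have $p \le -1$, whence $\textit{onlyfull}_i^H(\ell) = 0 \le step_i^H(\ell)$ and the claim is trivial. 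Since $step_i^H(\ell) \le dbf_i^H(\ell)$, this establishes the lemma. The argument is largely mechanical; the only point that genuinely needs care is the step $q \ge p$, i.e.\ that replacing a full job by a maximal carry-over job never pushes the subsequent full jobs past the end of $\ell$ — which is precisely where the bound $W \le D_i^H$ (equivalently $C_i^L(\sigma_i^L) \le D_i^L$) is used. That bound, together with the monotonicity $C_i^H(\sigma_i^L) \ge C_i^H(\sigma_i^H)$, is the crux of the proof.
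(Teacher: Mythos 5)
Your proposal is correct and follows essentially the same route as the paper's own proof: both compare the only-full-jobs demand (bounded by the standard sporadic dbf) against the execution with a maximal carry-over job at the earliest possible deadline, and both hinge on exactly the two inequalities $C_i^H(\sigma_i^L) \ge C_i^H(\sigma_i^H)$ and $D_i^H - D_i^L + C_i^L(\sigma_i^L) \le D_i^H$. The only difference is that you carry out the floor/bracket bookkeeping explicitly where the paper appeals to a figure, which makes your version slightly more rigorous but not substantively different.
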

\begin{proof}
The demand of an execution of  $\tau_i$ in H-mode with only full jobs can be bounded by the standard dbf for sporadic tasks with the appropriate parameters:
\begin{equation}
\label{e:std_dbf}
\begin{split}
 \left \llbracket \left( \left \lfloor \frac{\ell - D_i ^H}{T_i} \right \rfloor + 1 \right) C_i^H(\sigma_i^H) \right \rrbracket_0
\end{split}
\end{equation}

\begin{figure}
\centerline{\hspace*{0mm}\includegraphics[scale=.65]{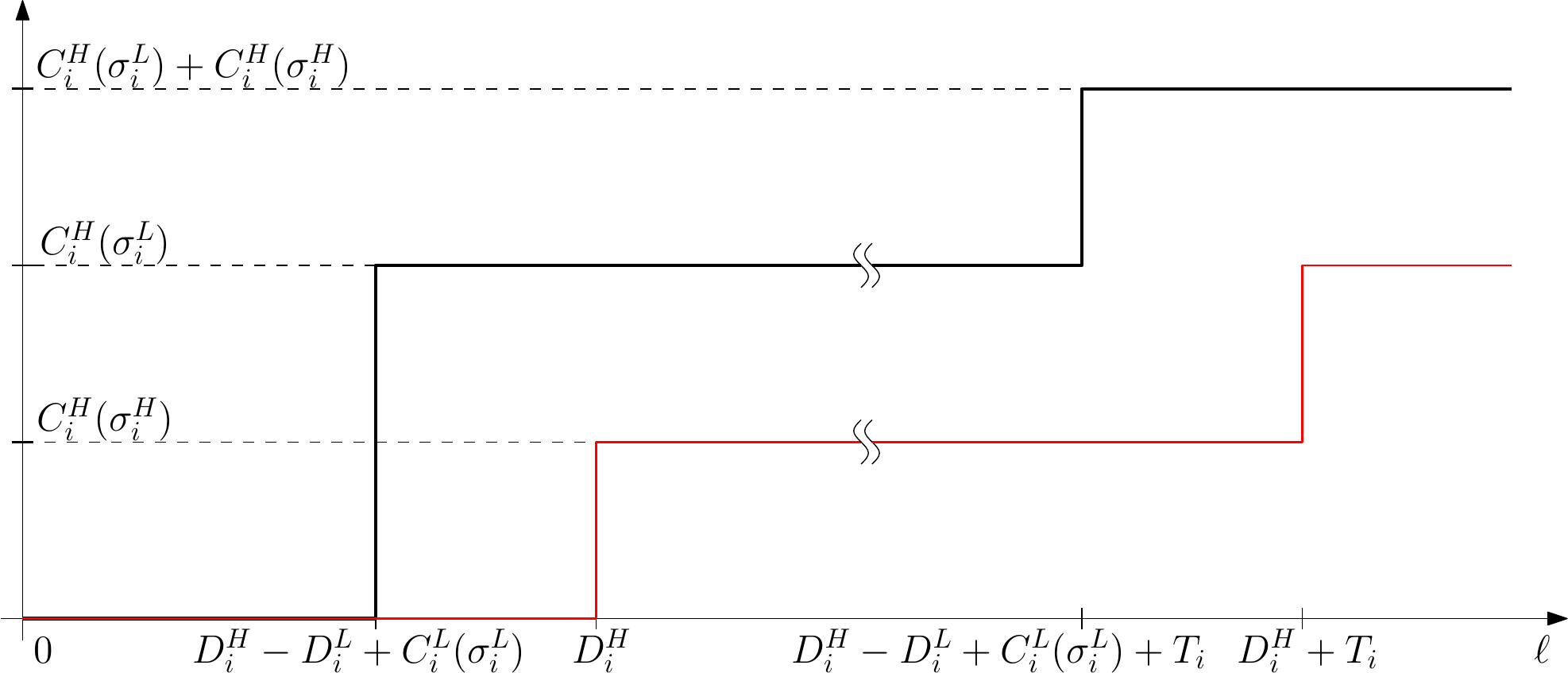}}
\caption{The demand over an interval of length $\ell$ starting at mode switch when the carry-over job is maximum at the earliest time (black line) dominates the maximum demand over an interval of the same length without a carry-over job.}
\label{f:carry-over_dominance}
\end{figure}

If $\ell < D_i^H$ the demand is zero. Let $\ell \ge D_i^H$. Consider a time interval of length $\ell$ starting at the mode switch. Consider an execution in which the carry-over job has maximum demand, $C_i^H(\sigma_i^L)$, at the earliest possible time, $D_i^H-D_i^L+C_i^L(\sigma_i^L)$, and the following jobs of $\tau_i$ arrive as soon as possible. For any time interval of length $\ell$, the demand of such an execution is never lower than the demand of an execution without a carry-over job (see Fig.~\ref{f:carry-over_dominance} for further intuition). Indeed:
\begin{enumerate}
\item $C_i^H(\sigma_i^L) \ge C_i^H(\sigma_i^H)$, because $\sigma_i^L \le \sigma_i^H$
\item $D_i^H - D_i^L + C_i^L(\sigma_i^L) \le D_i^H$, because $D_i^L \ge C_i^L(\sigma_i^L)$
\end{enumerate}
\end{proof}

\section{Last-level cache allocation \label{s:heuristics}}

The description of our analysis assumed that for each task the number of its hottest pages that are locked in the cache in  
each mode is already determined. We now propose a heuristic for this allocation. Our objective is to efficiently 
distribute the last-level cache among the tasks, for improved schedulability. 

A heuristic is needed because a brute-force
combinatorial exploration of all possible allocations is intractable and the arbitrary nature of progressive lockdown curves
means that there is no structure in the problem to employ for strict optimality, in the general case.
Additionally, since any possible allocation configuration of the cache for the L-mode can be re-configured in many ways 
for the H-mode, we opt for a two-staged heuristic. We first determine the L-mode allocation and then, subject to the constraints
stemming from that, we determine the H-mode allocation. Since the schedulability analysis is conceptually complex (and made 
even more so by the deadline scaling), our idea is to optimise, in each mode, for a metric that strongly correlates with
schedulability: the task set utilisation. So, we first (i)~assign values to the $\sigma_i^L$ variables (corresponding to the number of
locked pages in L-mode) for each task so that the L-mode utilisation ($\sum_{\tau_i \in \tau} \frac{C_i^L(\sigma_i^L)}{T_i}$
is minimised and subsequently (ii)~assign values to the $\sigma_i^H$ variables for the H-tasks (with $\sigma_i^H \geq \sigma_i^L$)
such that the (steady) H-mode utilisation ($\sum_{\tau_i \in \tau(H)} \frac{C_i^H(\sigma_i^H)}{T_i}$) is minimised. Next, we discuss
the ILP formulation implementing this heuristic.

\subsection{L-mode allocation}
Let $\sigma_i^L$ be the number of pages by $\tau_i$ in the last-level cache in the L-mode 
and $\sigma^T$ be the total number of pages that fit in that cache. Intuitively, lower utilisation correlates with 
better schedulability, hence, our objective is to set the $\sigma_i^L$ values such that the total task set utilisation in L-mode is minimised. 
To model this heuristic with ILP formulation, we define a binary decision variable variable $UL_{i,j}$ such that: 

\begin{align*}
UL_{i,j} = 
\begin{cases}
 1,  & \text{if $j$ pages are assigned to $\tau_i \in \tau$ in L-mode} \\
 0,  &  \text{otherwise}
\end{cases}
\end{align*}

Since our aim is to minimise the system utilisation in L-mode, the objective function and constraints take the form: 
\begin{align}
&\text{Minimise}  \sum_{\forall \tau_i \in \tau} \sum_{j=0}^{\sigma^T} UL_{i,j} \times U_i^L(j) \label{eq:Lmode_Objective}\\
\textbf{s. t.}~& \sum_{j=0}^{\sigma^T} UL_{i,j} =1, \; \forall \tau_i \in \tau \label{eq:AllocateForEachTaskAtleaseOneSizeInLmode}\\
& \sum_{\forall \tau_i \in \tau} \sum_{j=0}^{\sigma^T}  j \times UL_{i,j}  \leq \sigma^T  \label{eq:CacheSizeConstraintInLMode}\\
& \sum_{j=0}^{\sigma^T} UL_{i,j} \times U_i^L(j) \leq 1, \; \forall \tau_i \in \tau   \label{eq:UtilisationPerTaskLE1InLmode}\\
& \sum_{\forall \tau_i \in \tau} \sum_{j=0}^{\sigma^T} UL_{i,j} \times U_i^L(j) \leq m, \; \forall \tau_i \in \tau   \label{eq:TotalUtilisationLE1InLmode}
\end{align}

The $U_i^L(j)$ constants are derivable from the tasks' progressive lockdown curves.
The set of constraints given by~\eqref{eq:AllocateForEachTaskAtleaseOneSizeInLmode} ensures that each task is considered for allocation in the last-level cache. 
A task can be allocated any number of pages from zero to all $\sigma^T$ pages in the cache. However, the sum of all pages allocated to tasks 
should not exceed the cache capacity (i.e., $\sum_{\forall \tau_i \in \tau}\sigma_i^L \leq \sigma^T$), which is ensured
by~\eqref{eq:CacheSizeConstraintInLMode}. 
Additionally, the utilisation of each task in the L-mode for the selected number of pages should not exceed one 
(i.e., $U_i^L(\sigma_i^L) \leq 1$), which is ensured by~\eqref{eq:UtilisationPerTaskLE1InLmode}; otherwise, 
the task will not be unschedulable. Finally, the set of constraints given by~\eqref{eq:TotalUtilisationLE1InLmode} 
ensures that the total utilisation of the task set in L-mode, under the particular allocation, should not exceed the number of cores 
in the platform ($\sum_{\forall \tau_i \in \tau} U_i^L(\sigma_i^L)\leq m$); otherwise the task 
set would be unschedulable in the L-mode, under these parameters.     

\subsection{H-mode allocation} 
In this second stage of our allocation heuristic, we determine how the pages reclaimed from
the idled L-tasks at the switch from L-mode to H-mode, are to be redistributed to the H-tasks.
Let $\sigma_i^H$ denote the number of cache pages in the last-level cache allocated to a task $\tau_i\in \tau(H)$ in the H-mode. 
Our ILP formulation for the H-mode allocation derives $\sigma_i^H$ for each task $\tau_i \in \tau(H)$ 
in such a way that the overall steady H-mode system utilisation is minimised. 
We define a binary decision variable $UH_{i,j}$ such that:  
\begin{align*}
UH_{i,j} = 
\begin{cases}
 1,  & \text{if $j$ pages are assigned to $\tau_i \in \tau(H)$ in H-mode} \\
 0,  &  \text{otherwise}
\end{cases}
\end{align*}

The objective function in this stage minimises the H-mode utilisation and the ILP formulation is given below:

\begin{align}
&\text{Minimise}  \sum_{\forall \tau_i \in \tau(H)} \sum_{j=0}^{\sigma^T} UH_{i,j} \times U_i^H(j) \label{eq:Hmode_Objective}\\
\textbf{s. t.}~& \sum_{j=0}^{\sigma^T} UH_{i,j} =1, \; \forall \tau_i \in \tau(H) \label{eq:AllocateForEachTaskAtleaseOneSizeInHmode}\\
& \sum_{\forall \tau_i \in \tau(H)} \sum_{j=0}^{\sigma^T}  j \times UH_{i,j}  \leq \sigma^T  \label{eq:CacheSizeConstraintInHMode}\\
& \sum_{j=0}^{\sigma^T} UH_{i,j} \times U_i^H(j) \leq 1, \; \forall \tau_i \in \tau (H)  \label{eq:UtilisationPerTaskLE1InHmode}\\
& \sum_{\forall \tau_i \in \tau(H)} \sum_{j=0}^{\sigma^T} UH_{i,j} \times U_i^H(j) \leq m, \; \forall \tau_i \in \tau (H)  \label{eq:TotalUtilisationLE1InHmode}\\
& \sum_{j=0}^{\sigma^T} j \times UH_{i,j} \geq \sigma_i^L, \; \forall \tau_i \in \tau(H) \label{eq:HModePageGELModePages}
\end{align}
The constraints given by~\eqref{eq:AllocateForEachTaskAtleaseOneSizeInHmode}-\eqref{eq:TotalUtilisationLE1InHmode} 
are similar to those given by~\eqref{eq:AllocateForEachTaskAtleaseOneSizeInLmode}-\eqref{eq:TotalUtilisationLE1InLmode} for the L-mode.
These constraints ensure that every H-task is considered for allocation, 
the sum of allocated pages does not exceed the total number of pages in the cache 
($\sum_{\forall \tau_i \in \tau(H)} \sigma_i^H \leq \sigma^T$), 
each task has utilisation not greater than one ($U_i^H(\sigma_i^H)\leq 1$) and sum of their utilisations 
is less than or equal to the number of cores ($\sum_{\tau_i\in \tau(H)} U_i^H(\sigma_i^H)\leq m$). 
As for the set of constraints given by~\eqref{eq:HModePageGELModePages}, they express the fact that 
$\sigma_i^H \geq \sigma_i^L,~\forall\tau_i \in \tau(H)$.
In other words, in the H-mode, an H-task may be allocated additional pages, reclaimed from the idled L-tasks, but never fewer.
The reason for restricting the solution space in this manner is practical: Unlike cache pages allocated to L-tasks in the L-mode
which are reclaimable immediately after a mode switch (since no L-tasks execute in the H-mode), the instant that some cache page 
could be taken away from an H-task is ill-defined if there is a carry-over job from that task. Even if it is assumed that a page
can be taken away from that H-task, once its carry-over job completes, this would introduce an arbitrarily long (in the general 
case) transition to steady H-mode, in the case of a carry-over job with long outstanding execution time and even longer deadline.
The schedulability analysis would then become extremely complicated, with hardly any gains expected
from such a more general model. 

As a final note,
one might consider optimising $\sigma_i^L$ and $\sigma_i^H$ jointly in a single step but this would be non-trivial due to lack 
of a single meaningful objective function to minimise. 

\begin{table}[!t]
\centering
\begin{tabular}{|c|c|}
\hline
 Parameters                                                & Values \\
 \hline \hline
 Task-set size ($n$)                                       &  $ \lbrace \underline{10}~,13,~15,~20 \rbrace $                 \\\hline
 Inter-arrival time $T_i$                                  &  $10$ to $100$~msec ($1$ msec resol.)                           \\\hline
 Fraction of H-tasks in $\tau$                             &  $\lbrace 20\%,~\underline{40\%},~60\%,~80\% \rbrace$           \\\hline
 Ratio of $C_i^H$ to $C_i^L$                               &  $\lbrace 4,~6,~\underline{8},~10,~12\rbrace $                  \\\hline
 Lower bound $\alpha$ on $C_i^L(0)/C_i^H(0)$               &  $\lbrace \underline{0.1},~0.2,~0.4,~0.8\rbrace $               \\\hline
 Mean ($\lambda$) for x-coordinate (in pages) of the       &  $\lbrace 5,~10,~15,~20,~25,~\underline{30}\rbrace$             \\
 taper point (X, Y) in the prog. lockdown curve            &                                                                 \\\hline
 Cache size                                                &  $\lbrace$\underline{$512$} KB, $1$ MB, $2$ MB, $4$ MB$\rbrace$ \\\hline
 Number of cores ($m$)                                     &  $\lbrace \underline{1},~2,~4,~8\rbrace $                       \\\hline
 Nominal L-mode utilisation ($\frac{1}{m} \sum_{\tau_i \in \tau} U_i^L(0)  $)         &  $\lbrace0.1:0.1:1.5\rbrace$         \\\hline
 Page size                                                 &  $4$ KB                                                         \\\hline
\end{tabular} 
\caption{Overview of Parameters}
\label{tb:parameters}
\end{table} 

\section{Evaluation}
\label{s:evaluation}
We experimentally explore the effectiveness of our proposed allocation heuristics and dynamic redistribution mechanism
in terms of schedulability.

\subsection{Experimental Setup}
We developed a Java tool for our experiments. (Sources found at~\cite{Manberg_code_link}.) 
Its first module generates the synthetic workload (task sets). A second module implements the ILP models 
for the allocation heuristics. Using the generated task-set and platform information as input, it partitions the cache
to the tasks. A third module uses the task set, platform information and cache assignment as an input and 
performs the schedulability analysis and task-to-core allocation. The following parameters control the task set generation:

\begin{itemize}
\item 
We generate the L-mode task utilisations with zero locked pages ($U^L_i(0)$) for a given target task set L-mode 
utilisation ($\sum_{i\in \tau} U^L_i(0)$) using UUnifast-discard~\cite{UUnifast,UUnifast_Discard}, for unbiased distribution 
of task utilisations. 
\item 
Task periods are generated with a log-uniform distribution in the range $10$-$100$ ms. We also assume implicit deadlines,
even if our analysis holds for constrained deadlines.  
\item 
The L-mode progressive lockdown curve of a task $\tau_i$ is derived as follows. $C_i^L(0)$ is obtained as $U^L_i(0) \cdot T_i$. 
Then the L-WCET with full cache ($C_i^L(\sigma^T)$) is randomly generated with uniform distribution over $[\alpha \cdot C_i^L(0), C_i^L(0)]$,
where $\alpha<1$ is a user-defined parameter. Then we add a ``bending point'' with random coordinates ($X$,$Y$). $X$ is sampled from
a Poisson distribution with median $\lambda$ (user-defined parameter) and $Y$ is sampled from a uniform distribution in the range
$[C_i^L(\sigma^T),~Z]$, where $Z$ is the y-coordinate of the point where the $x=X$ axis intersects the line 
$((0, C_i^L(0)),~(\sigma^T, C_i^L(\sigma^T)))$. See Figure~\ref{f:X:Y} for an illustration.
The two linear segments $((0, C_i^L(0)),~(X,Y))$ and $((X,Y),~(\sigma^T, C_i^L(\sigma^T)))$, joined at an angle at (X,Y)
define our L-mode progressive lockdown curve. This generation scheme can output (i)``L-shaped'' curves, where the L-WCET drops 
sharply with a few pages and then stays flat; (ii)~flat L-WCET curves, largely insensitive to the number of pages; and 
(iii)~in-between\footnote{We thank Renato Mancuso, for having shared with us his empirical observations about the shapes
of progressive lockdown curves.}.
\item Based on the target fraction of H-tasks in the task set (user-specified), a number of tasks (rounded-up) will be H-tasks.
For those ones, an H-mode progressive lockdown curve is generated, by up-scaling of the respective L-mode curve. The scalefactor
(multiplier) is user-specified.
\end{itemize}

For each set of input parameters, we generate 100 task sets. We use independent pseudo-random number generators for the utilisations, 
minimum inter-arrival times/deadlines, Poisson distribution and $C_i^L(\sigma^T)$ generation, and reuse their seeds~\cite{Raj_91}.

\begin{figure}[!t]
\centering
 \includegraphics[width=0.8\linewidth]{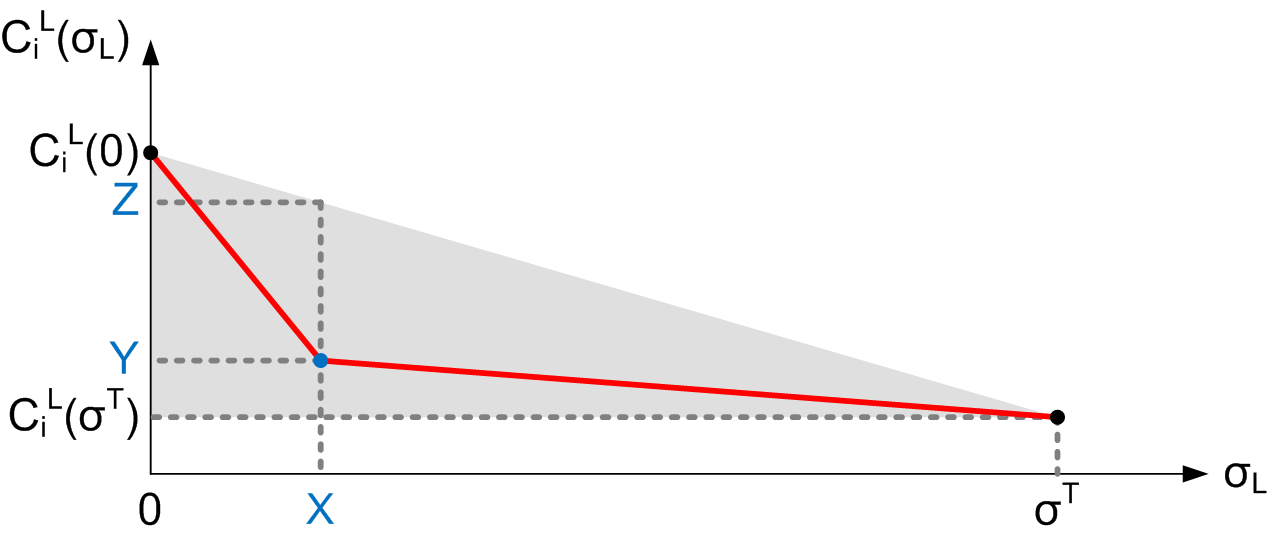}
 \caption{\label{f:X:Y} Illustration of progressive lockdown curve generation.}
\end{figure}

The second module of our tool models the ILP formulations for an input task set on IBM ILOG CPLEX v12.6.3 and interfaces it with the Java tool using 
CONCERT technology. In all experiments, the defaults for different parameters are underlined in Table~\ref{tb:parameters}. 
Note that, since the target utilisation when invoking UUnifast-discard corresponds to $\sum_{i\in \tau} U^L_i(0)$ (i.e., the L-mode
task set utilisation when \textit{none} of the tasks uses any cache), it is possible that some task sets where this nominal utilisation is 
greater than $m$ may be in fact schedulable, since the allocation of cache pages to their tasks may drive down their L-WCETs and 
decrease the L-mode utilisation to below $m$. So, we explore tasks sets with nominal utilisation up to 1.5.
To generate task sets with nominal utilisation greater that $m$, we first generate a task set with target nominal utilisation of 
$m$ using UUnifast-discard, and subsequently multiply with the desired scalar. This preserves the properties 
of UUnifast-discard. 

Finally, note that in our experiments, after the values of $\sigma_i^L$ and $\sigma_i^H$ are determined for the tasks, 
we employ First-Fit bin-packing for task-to-core assignments. The 
new analysis (introduced in Section~\ref{s:schedulability_analysis}) is used as a schedulability test, on each processor, for 
testing the assignments. Note that our new analysis is also used for the derivation of the deadline scalefactors for the H-tasks,
using Ekberg's and Yi's (otherwise, unmodified) approach~\cite{Ekberg_Yi_12}. The bin-packing ordering is
by decreasing criticality and decreasing deadline -- a task ordering that works well with Ekberg and Yi's algorithm, 
as shown in our previous work~\cite{Awan_KPE_17}.

\subsection{Results}
Since presenting plots for each possible combination of parameters would be impractical,  
each experiment varies only one parameter, with the rest conforming to the respective defaults from Table~\ref{tb:parameters}.
Even so, the number of plots would still be too high to accommodate. 
So, instead of providing plots comparing the approaches in terms of scheduling success ratio (i.e., the fraction of task sets 
deemed schedulable under the respective schedulability test), we condense this information by providing plots of 
\textit{weighted schedulability}.\footnote{The plots of (non-weighted) schedulability can still be found in the Appendix of our 
TR~\cite{manberg:TR}.}
This performance metric, adopted from~\cite{Bastoni_etal_2010}, condenses what would have been 
three-dimensional plots into two dimensions. It is a weighted average that gives more weight to task-sets 
with higher utilisation, which are supposedly harder to schedule. Specifically,
using the notation from~\cite{Burns_Davis_14}:

Let $S_y(\tau,p)$ represent the binary result ($0$ or $1$) 
of the schedulability test $y$ for a given task-set $\tau$ with an input parameter $p$. 
Then $W_y(p)$, the weighted schedulability for that schedulability test $y$ 
as a function $p$, is:

\begin{equation}
\label{e:weighted_schedulability_measure}
W_y(p) = \frac{\sum_{\forall\tau} \left(\bar{U}^L(\tau) \cdot S_y(\tau,p)\right)}{\sum_{\forall\tau} \bar{U}^L(\tau)}
\end{equation}  

In~\eqref{e:weighted_schedulability_measure}, (adapted from~\cite{Burns_Davis_14}), 
$\bar{U}^L(\tau) \equals \frac{U^L(\tau)}{m}$ is the system utilisation in L-mode, 
normalised by the number of cores. $m$.

The purpose for our experiments was to quantify the schedulability improvement over a system model without cache reallocation at mode switch.
However, the state-of-the-art scheduling algorithm by Ekberg and Yi, assumed for the latter, is cache-agnostic: whether the L-WCETs and H-WCETs
estimates used are cache-cognisant or not is opaque to the algorithm. Therefore, in order to have a fair comparison,
we needed to specify an efficient cache partitioning heuristic, even for the case of no cache reallocation.

The different curves depicted on our plots are the following:

\begin{itemize}
\item[] \textbf{VT:} 
This ``Validity Test'', for indicative purposes, is a \textit{necessary} condition for a task set to be mixed-criticality 
schedulable at all (i.e., under any possible scheduling arrangement). The actual condition, verifiable with low computational complexity, is:
\begin{align*}
\left(U_i^L(\sigma^T) \leq 1,~\forall \tau_i \in \tau \right)&~~\wedge & \left( U_i^H(\sigma^T) \leq 1,~\forall \tau_i \in \tau(H) \right) &~~\wedge \nonumber\\  
 \Big( \sum_{\tau_i \in \tau} U_i^L(\sigma^T) \leq m \Big)   &~~\wedge & \Big( \sum_{\tau_i \in \tau(H)} U_i^H(\sigma^T) \leq m \Big)      & \nonumber
\end{align*}
\item[] \textbf{ILP}: 
A tighter \textit{necessary} condition, for a task set to be mixed-criticality 
schedulable at all. It is tested via our ILP (i.e., if it succeeds). 
It holds if and only if there exists an assignment of values to the $\sigma_i^L$ and $\sigma_i^H$ variables such that:

\begin{eqnarray}
\left( \sigma_i^L \leq \sigma^T,~\forall \tau_i \in \tau \right) \wedge \left( \sigma_i^L \leq \sigma_i^H \leq \sigma^T ,~\forall \tau_i \in \tau(H)\right) \wedge \nonumber\\
\Big( \sum_{\tau_i \in \tau} \sigma_i^L \leq \sigma^T \Big) \wedge \Big( \sum_{\tau_i \in \tau(H)} \sigma_i^H \leq \sigma^T \Big)                           \wedge \nonumber\\
\left( U_i^L(\sigma_i^L) \leq 1,~\forall \tau_i \in \tau \right) \wedge \left( U_i^H(\sigma_i^H) \leq 1,~\forall \tau_i \in \tau(H) \right)                 \wedge \nonumber\\
\Big( \sum_{\tau_i \in \tau} U_i^L(\sigma_i^L) \leq m \Big) \wedge \Big( \sum_{\tau_i \in \tau(H)} U_i^H(\sigma_i^H) \leq m \Big)~~                                \nonumber
\end{eqnarray}
\item[] \textbf{V-Ekb:} 
Similar to ``ILP'', but with the added constraint that $\sigma_i^L=\sigma_i^H,~\forall \tau_i \in \tau(H)$.
Hence, it is a \textit{necessary} condition for mixed-criticality schedulability for any approach that does not 
redistribute cache pages reclaimed from L-tasks to the H-tasks.
\item[] \textbf{Z-Ekb:}
This is a \textit{sufficient} test for partitioned scheduling using Ekberg and Yi's algorithm~\cite{Ekberg_Yi_12}, 
using the specified bin-packing, if the system is crippled by disabling of the last-level cache. In that case,
$\sigma_i^L=0,~\forall \tau_i \in \tau$ and similarly $\sigma_i^H=0,~\forall \tau_i \in \tau(H)$, meaning that
Ekberg and Yi's original analysis is applied, with $C_i^L=C_i^L(0)$ and $C_i^H=C_i^H(0)$. Intuitively ``Z-Ekb'' is meant as 
a lower-bound for the performance by this approach, once the cache is taken into account.
\item[] \textbf{E-Ekb:}
A \textit{sufficient} test for partitioned scheduling using Ekberg and Yi's algorithm,
using the specified bin-packing, when (i)~the cache is distributed equally to the tasks in the L-mode; i.e.,
$\sigma_i^L= \left\lfloor \frac{\sigma^T}{n}\right\rfloor,~\forall \tau_i \in \tau$ and (ii)~there is no redistribution
of cache pages, i.e., $\sigma_i^L=\sigma_i^H,~\forall \tau_i \in \tau(H)$.
Since Ekberg and Yi's algorithm is cache-agnostic, dividing the cache equally
is a simple, reasonable heuristic.
\item[] \textbf{N-Ekb:} Another \textit{sufficient} test, but which instead uses the output of the ILP, 
aimed at minimising L-mode utilisation ($\sum_{\tau_i \in \tau} U_i^L(\sigma_i^L)$), as
values to the $\sigma_i^L$ variables. Again, there is no cache redistribution at mode change, 
i.e., $\sigma_i^L=\sigma_i^H,~\forall \tau_i \in \tau(H)$.
Re-using the ILP solution for the L-mode both (i)~enables a fair comparison
(by equipping the ``opponent'' with the same good heuristic for the L-mode allocation, and
(ii)~takes the ILP and the L-mode allocation heuristic out of the equation, as much as possible, and isolates
the improvement originating from to the dynamic cache redistribution.
\item[] \textbf{Manberg:} A \textit{sufficient} test for our approach (named after Mancuso and Ekberg), which redistributes
cache pages at mode switch. The $\sigma_i^L$ and $\sigma_i^H$ variables are set to the respective
outputs of the ILP, under the heuristic that picks the $\sigma_i^L$ values that minimise the L-mode utilisation 
and, subject to that selection, the $\sigma_i^H$ values that minimise the H-mode utilisation.
\end{itemize}

Note that VT theoretically dominates ILP, which in turn theoretically dominates all other curves. Additionally, V-Ekb theoretically dominates all *-Ekb curves.

As observed (Fig.~\ref{fig:exp:weighted:1}--\ref{fig:exp:weighted:7}), the curves of
N-Ekb and V-Ekb almost coincide. This means that the heuristic of choosing the $\sigma_i^L$ values that minimise the 
L-mode utilisation is  very efficient, if cache redistribution at mode change is not permitted (bin-packing considerations aside\footnote{What 
we mean is that our experiments cannot possibly account for all possible task assignments,
since bin-packing is an NP-complete problem; we limit ourselves to just one reasonable and popular bin-packing heuristic.}).
Even when no dynamic cache reallocation is performed, the improvement just from using this particular heuristic, 
vs using the still reasonable ``E''-heuristic is respectable (Fig.~\ref{fig:exp:weighted:1}--\ref{fig:exp:weighted:7}).
Even if it only goes up to 10\% higher weighted schedulability (and usually around 2\% to 3\%),
it still means a large increase in the number of provably schedulable task sets. This is because in all plots except the two 
(Fig.~\ref{fig:exp:weighted:4} and~\ref{fig:exp:weighted:5}) in which the difference between N-Ekb and E-Ekb is greatest in absolute terms,
even the V-Ekb necessary test stays below 18\% in weighted schedulability. 
As can be seen, the relative improvement of N-Ekb over E-Ekb is more significant when the execution time is more sensitive to 
the cache resources (Fig.~\ref{fig:exp:weighted:1}) or the scarcer the latter are (Fig.~\ref{fig:exp:weighted:2}).

As for our approach (Manberg), in all experiments it outperforms V-Ekb, 
i.e., what is theoretically possible without cache redistribution (bin-packing considerations aside).
It is often nearer to the ILP curve (a necessary schedulability test, for any algorithm, with cache redistribution permitted)
than to the V-Ekb curve. The absolute improvement in weighted schedulability is up to 3.64\% but, in relative terms it
can be up to 30.59\%. This means many more tasks sets (especially among higher-utilisation ones) found schedulable.
Table~\ref{tb:improvement} summarises the improvement. Note that the more sensitive to cache allocations the WCETs are,
the greater the relative gains by our approach (Fig.~\ref{fig:exp:weighted:1}). 
Similarly when the $C_i^H$/$C_i^L$ ratio is higher, i.e., when the mode switch is harder 
to accommodate (Fig.~\ref{fig:exp:weighted:4}). We believe that these findings validate our approach.

(Note that two near-zero negative values in Table~\ref{tb:improvement} reflect the fact that V-Ekb
is a necessary test, whose success does \textit{not} imply that bin-packing will be feasible!)

In our experiments, the ILP run-time was a few seconds (upto four seconds for the feasible solutions), 
but the deadline-scaling (which repeatedly invokes the schedulability test) took 23 hours for 6000 task sets.

\begin{table}
\centering
\begin{tabular}{|c|c|c|}
\hline
 ~                                                         &  \multicolumn{2}{c|}{Improvement (Manberg vs V-Ekb)}     \\\hline
 Experiment / parameter varied                             &         absolute         &           relative            \\
 \hline \hline
 Task-set size ($n$)                                       &    1.36\%--2.36\%        &    13.98\%--30.59\%           \\\hline
 Fraction of H-tasks in $\tau$                             &    0.13\%--2.34\%        &     6.18\%--16.05\%           \\\hline
 Ratio of $C_i^H$ to $C_i^L$                               &    1.41\%--3.64\%        &    10.24\%--27.13\%           \\\hline
 L. bound  $\alpha$ on $C_i^L(0)/C_i^H(0)$                 &    0.25\%--1.65\%        &     3.36\%--15.62\%           \\\hline
 $\lambda$                                                 &    1.60\%--2.32\%        &    12.95\%--20.25\%           \\\hline
 Cache size                                                &    1.64\%--2.23\%        &    12.67\%--15.81\%           \\\hline
 Number of cores ($m$)                                     &   -0.00\%--1.21\%        &    -0.67\%--17.45\%           \\\hline
 \end{tabular} 
\caption{Improvement in weighted schedulability}
\label{tb:improvement}
\end{table}

\begin{figure*}[!t]
\begin{minipage}{1\linewidth}
\centering
\begin{minipage}[b]{0.47\linewidth}
\centering
\includegraphics[width=0.95\columnwidth]{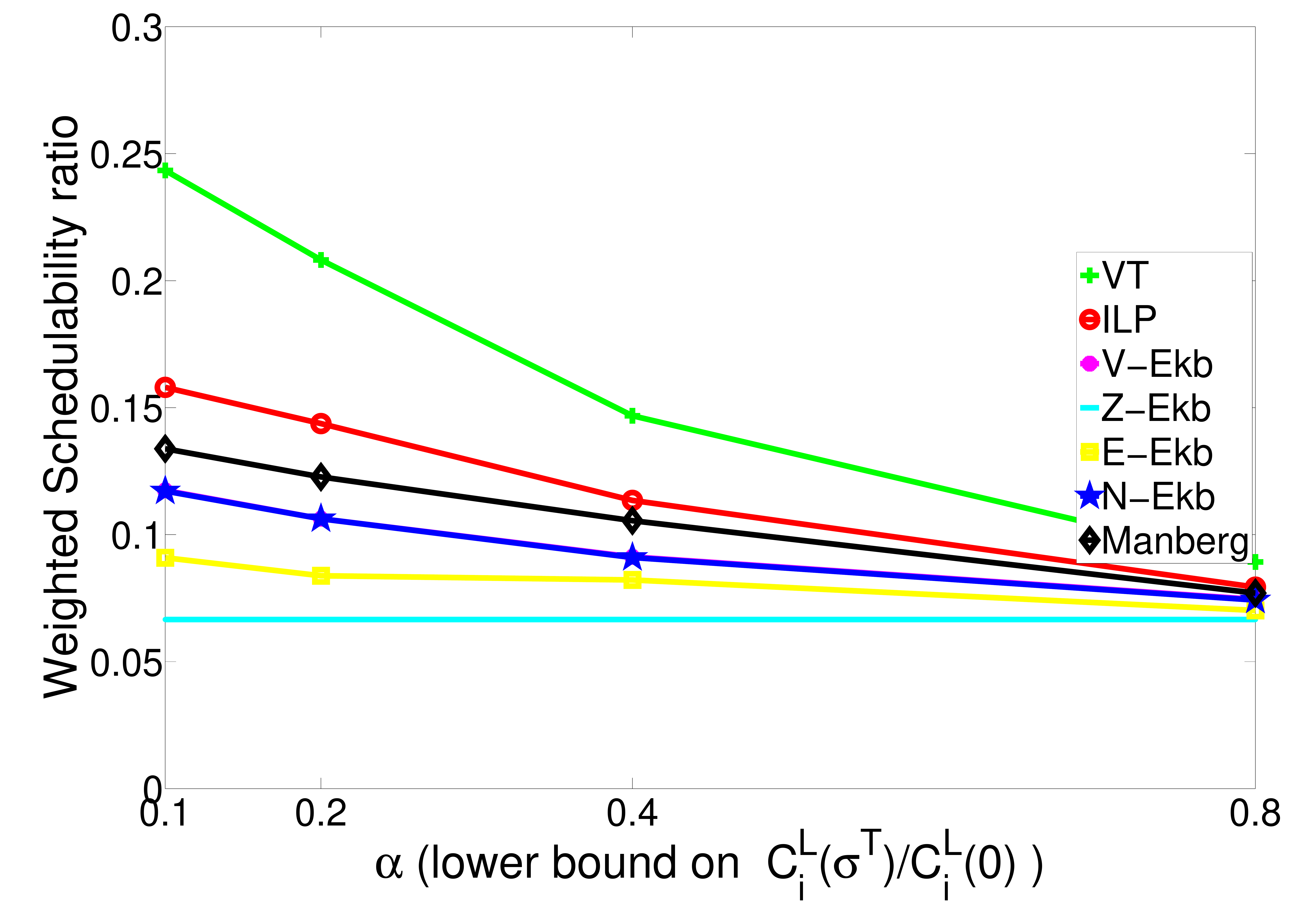}
  \caption{\label{fig:exp:weighted:1}}
\end{minipage}
\begin{minipage}[b]{0.47\linewidth}
\centering
\includegraphics[width=0.95\columnwidth]{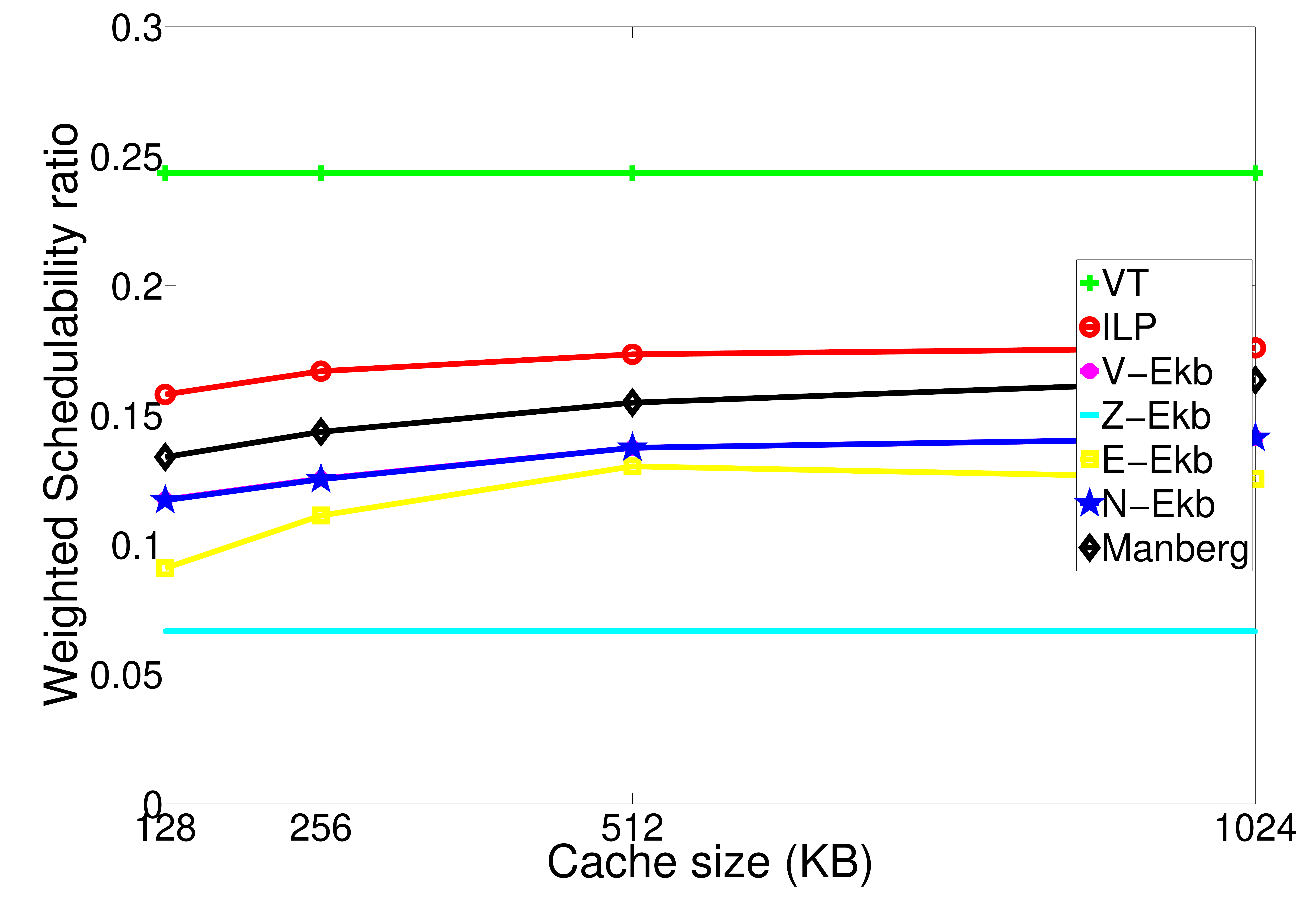}
  \caption{\label{fig:exp:weighted:2}}
\end{minipage}
\end{minipage}
\begin{minipage}{1\linewidth}
\centering
\begin{minipage}[b]{0.47\linewidth}
\centering
\includegraphics[width=0.95\columnwidth]{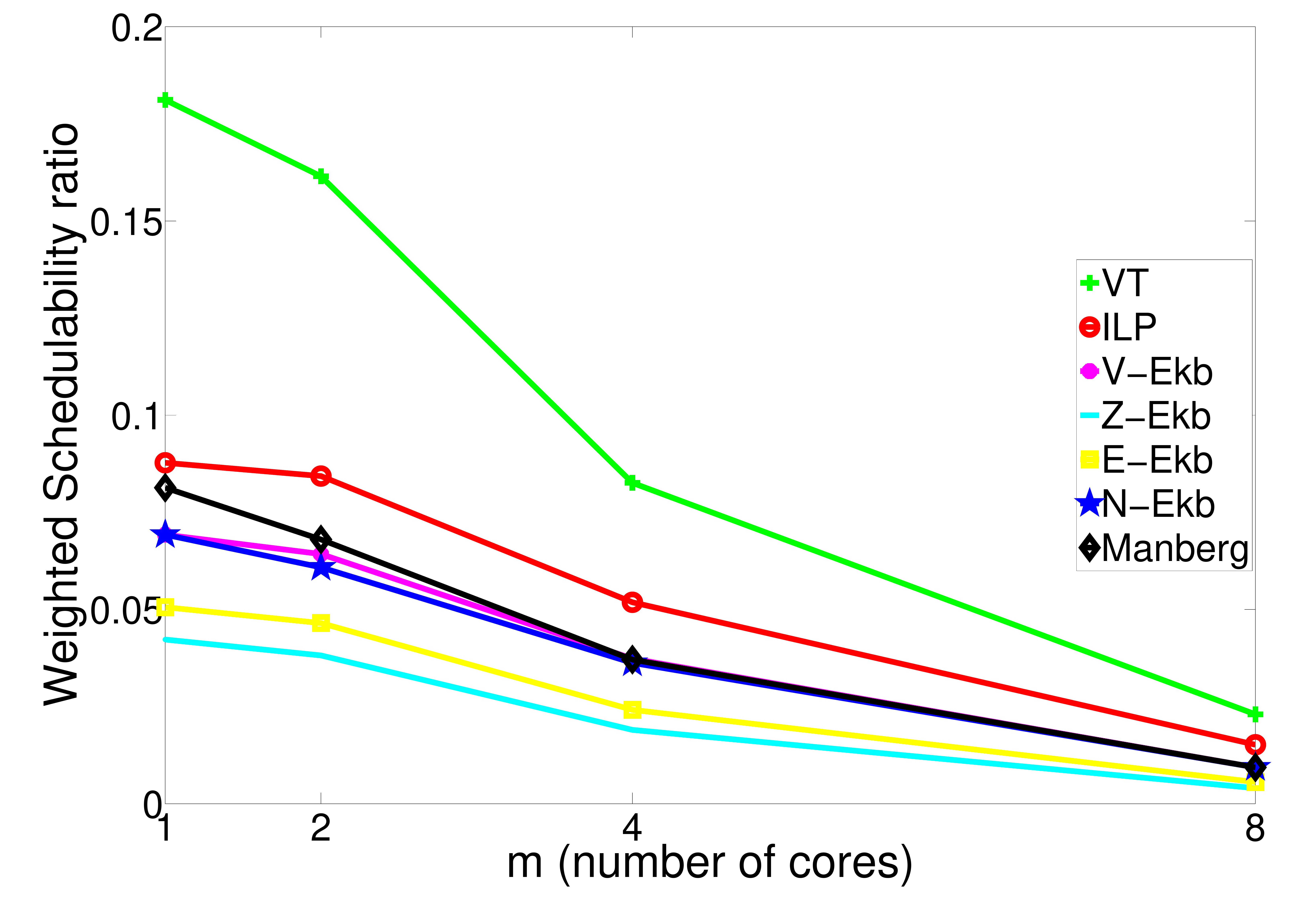}
  \caption{\label{fig:exp:weighted:3}}
\end{minipage}
\begin{minipage}[b]{0.47\linewidth}
\centering
\includegraphics[width=0.95\columnwidth]{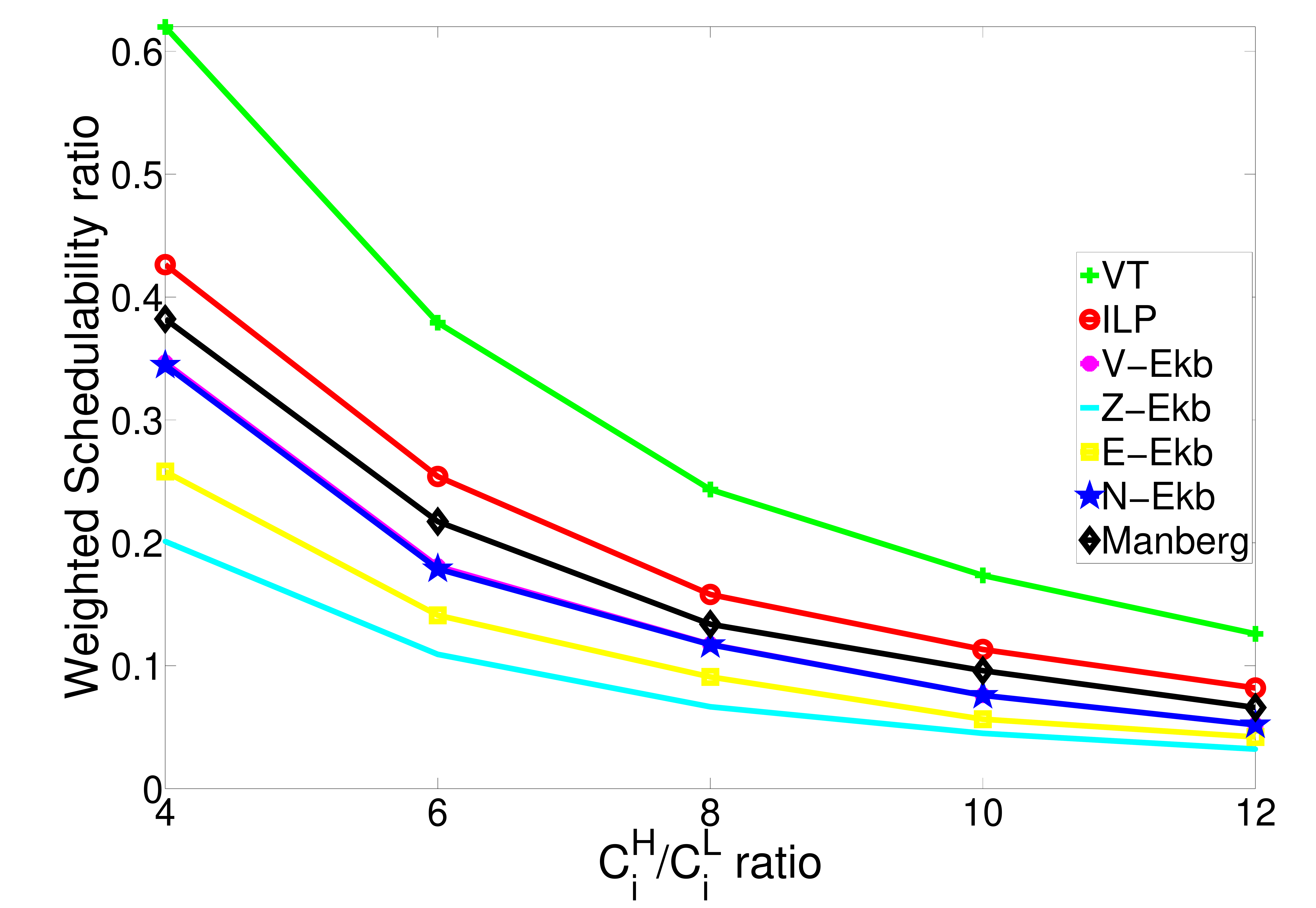}
  \caption{\label{fig:exp:weighted:4}}
\end{minipage}
\end{minipage}
\begin{minipage}{1\linewidth}
\centering
\begin{minipage}[b]{0.47\linewidth}
\centering
\includegraphics[width=0.95\columnwidth]{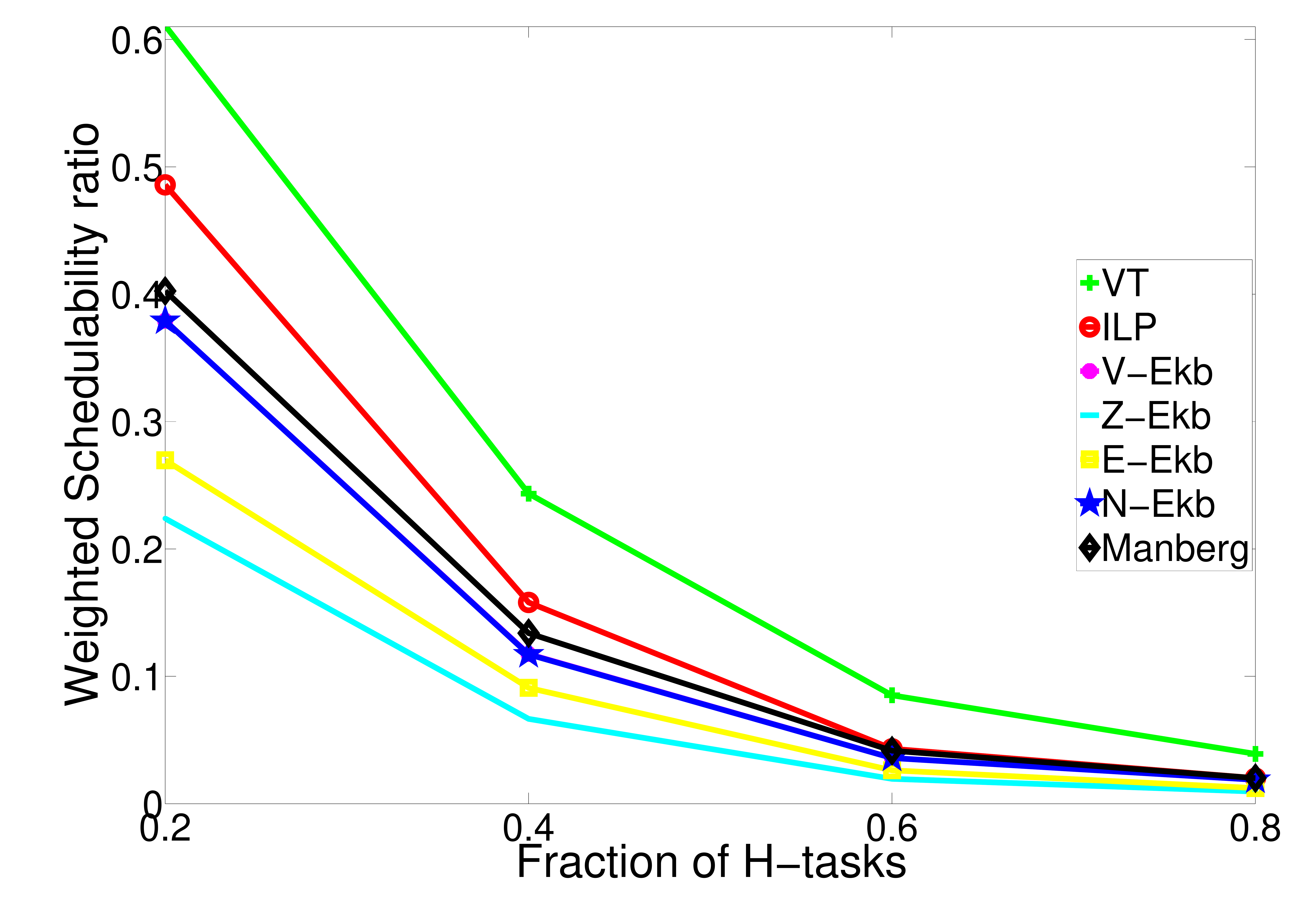}
  \caption{\label{fig:exp:weighted:5}}
\end{minipage}
\begin{minipage}[b]{0.47\linewidth}
\centering
\includegraphics[width=0.95\columnwidth]{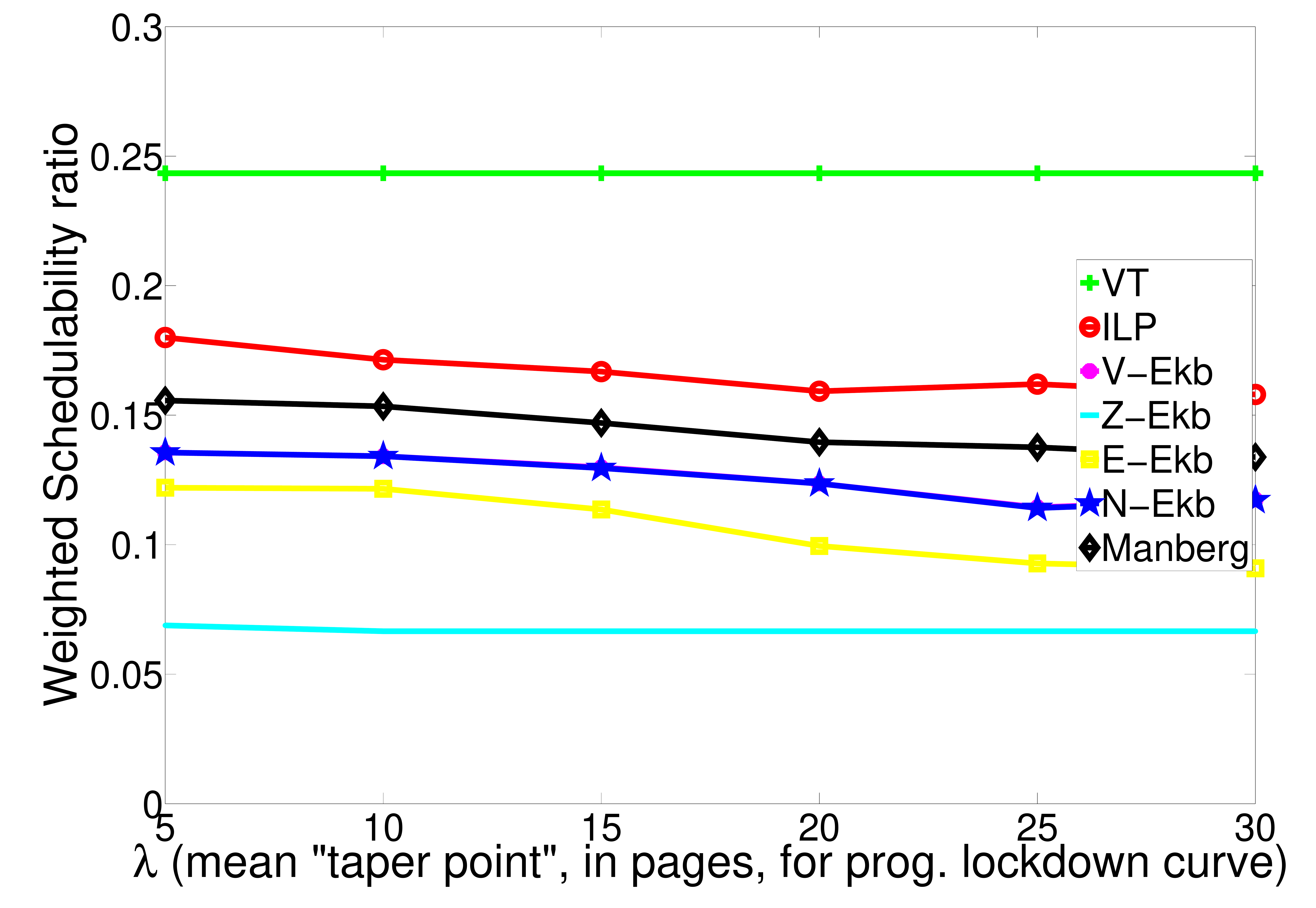}
  \caption{\label{fig:exp:weighted:6}}
\end{minipage}
\end{minipage}
\end{figure*}

\begin{figure}
\centering
\begin{minipage}[b]{0.47\linewidth}
\centering
 \includegraphics[width=0.95\columnwidth]{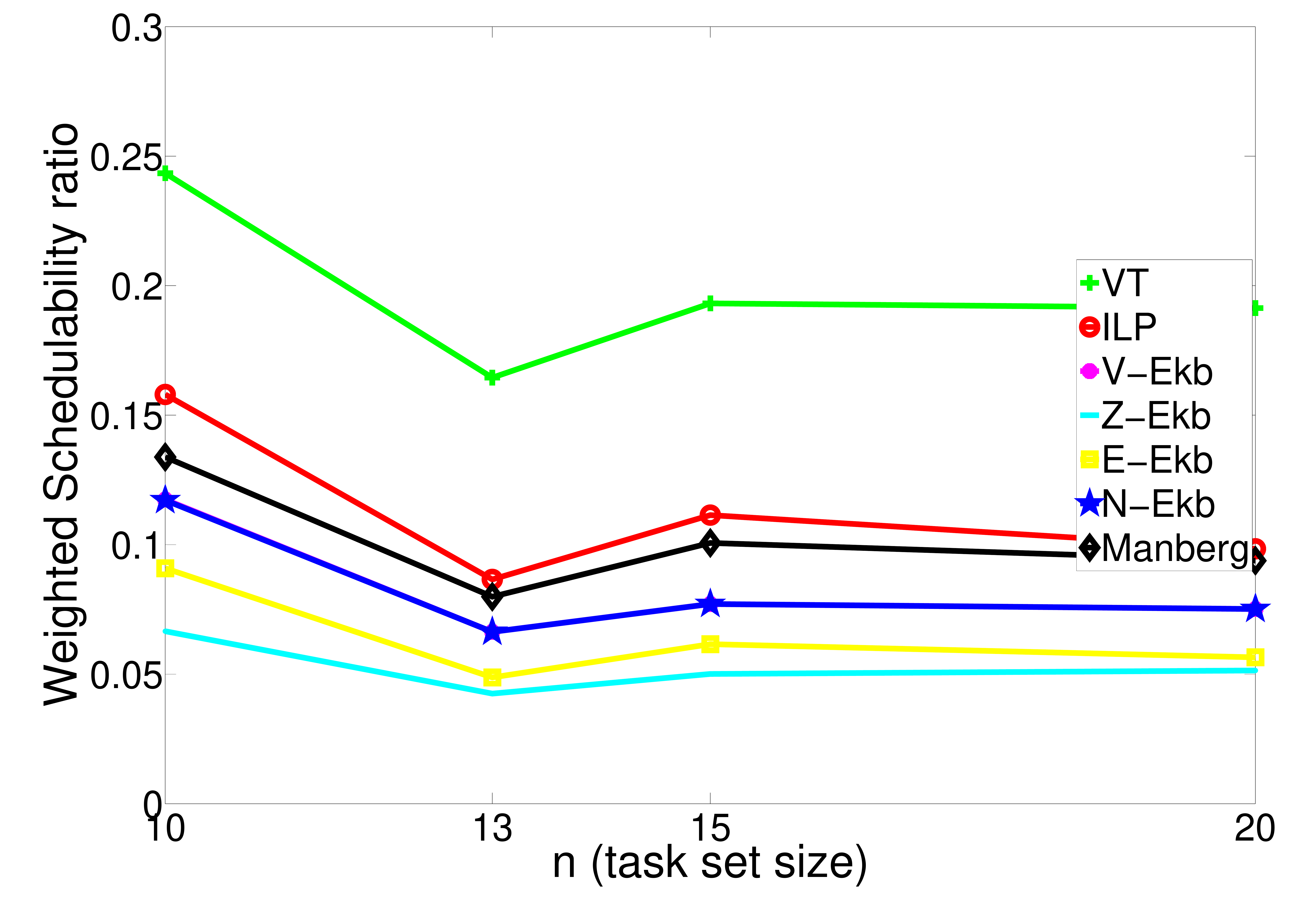}
 \caption{\label{fig:exp:weighted:7}}
\end{minipage}
\end{figure}
 
\section{Conclusions\label{s:conclusions}}
In this work, we proposed the redistribution of resources from low-criticality tasks to high-criticality tasks,
at mode change, for better scheduling performance. Focusing on one particular resource, the last-level cache, 
we formulated analysis and showed the potential gains. This validates the notion that more detailed models
of the platform and the allocation of its resources, can be used to improve both the performance and 
the confidence in the analysis of mixed-criticality systems.
In the future, we plan to consider additional system resources. 
We also intend to explore efficient non-ILP-based cache allocation heuristics.





\appendix
We now present the figures for the schedulability ratio of all the parameters discussed above in Section~\ref{s:evaluation}. 
\begin{figure*}[b]
\begin{minipage}{1\linewidth}
\centering
\begin{minipage}[b]{0.49\linewidth}
\centering
\includegraphics[width=0.99\columnwidth]{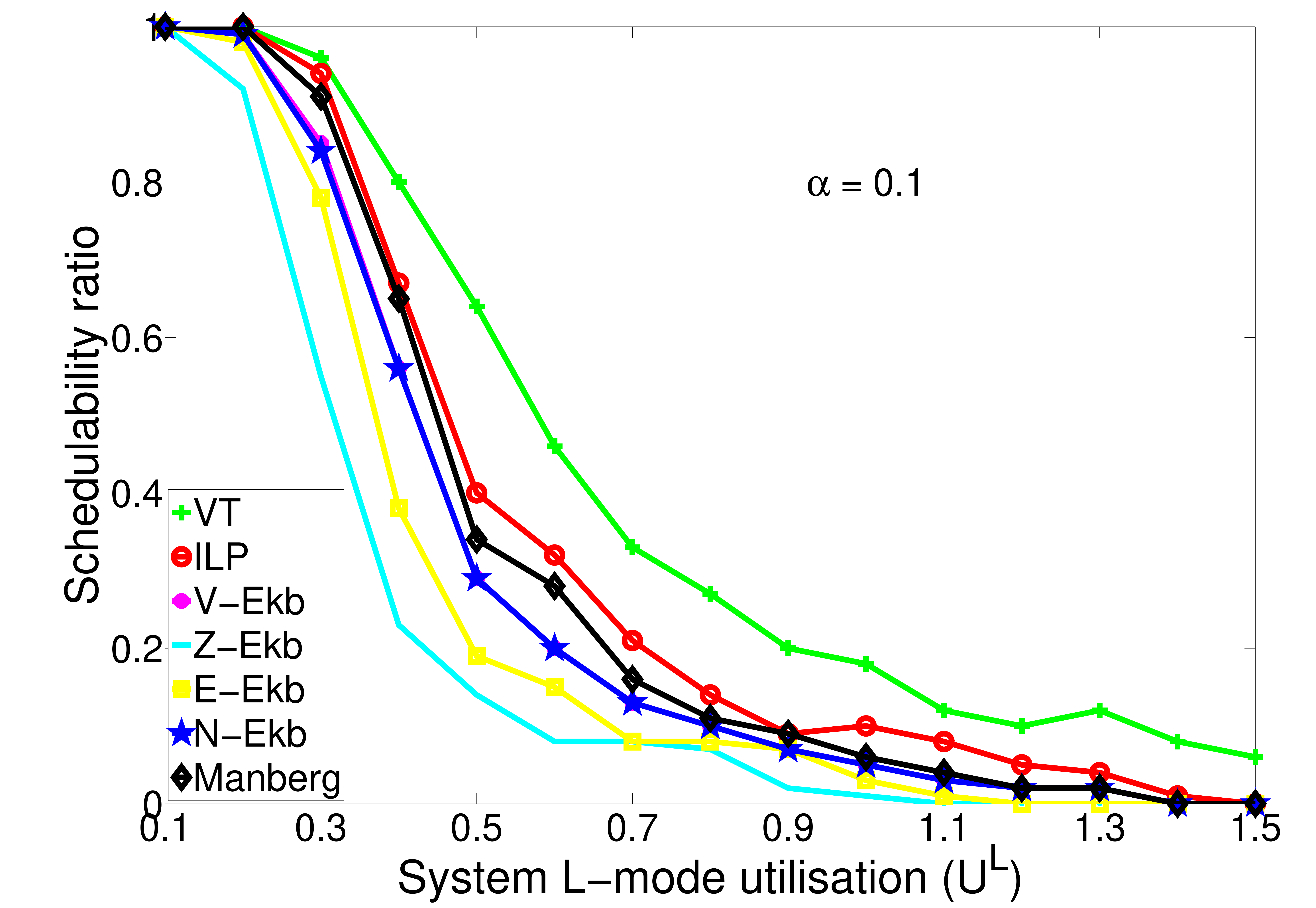}
  \caption{\label{fig:exp:SR:8}}
  \vspace{3ex}
\end{minipage}
\hspace{-3ex}
\begin{minipage}[b]{0.49\linewidth}
\centering
\includegraphics[width=0.99\columnwidth]{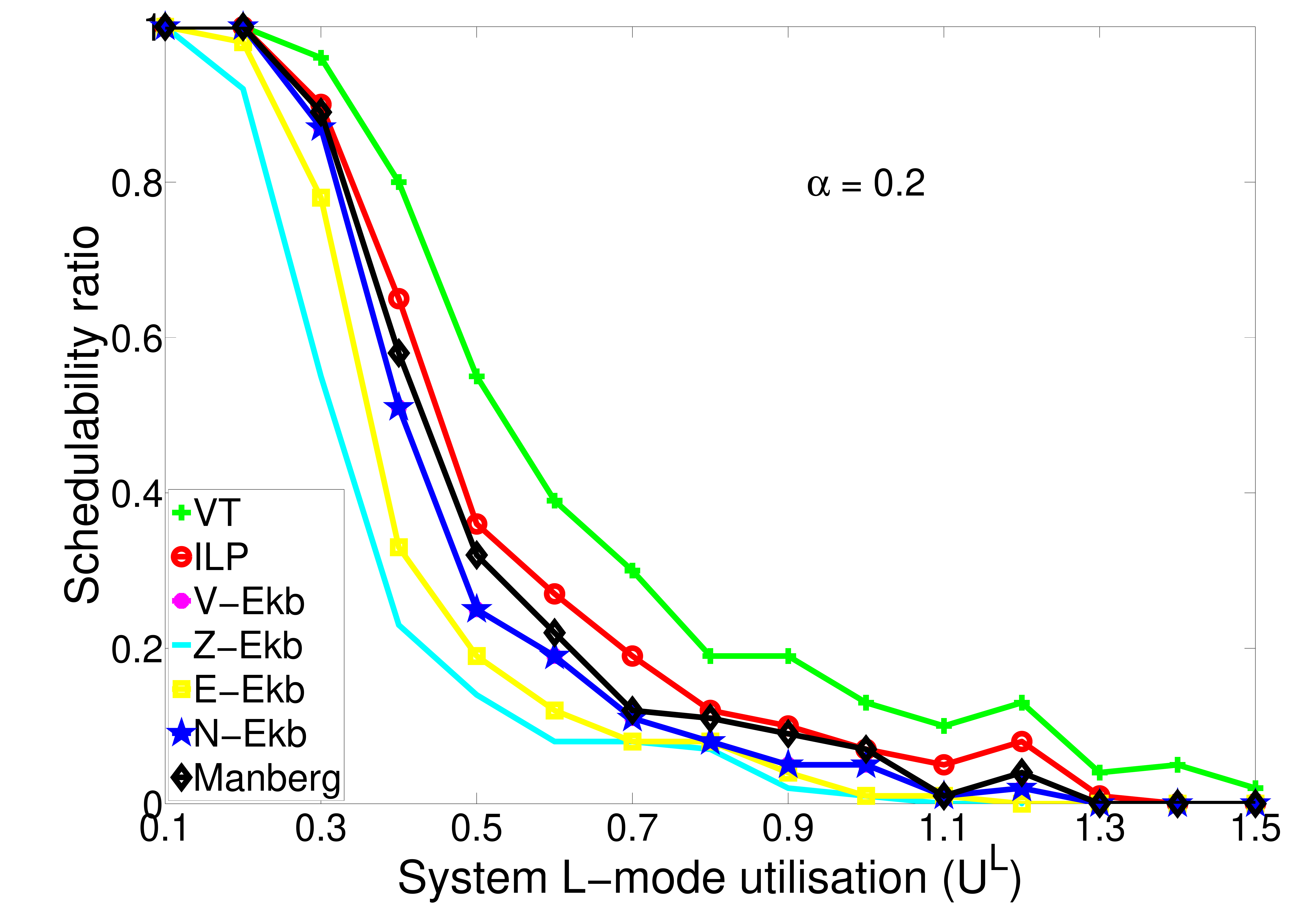}
  \caption{\label{fig:exp:SR:9}}
  \vspace{3ex}
\end{minipage}
\end{minipage}
\end{figure*}

\begin{figure*}
\begin{minipage}{1\linewidth}
\centering
\begin{minipage}[b]{0.49\linewidth}
\centering
\includegraphics[width=0.99\columnwidth]{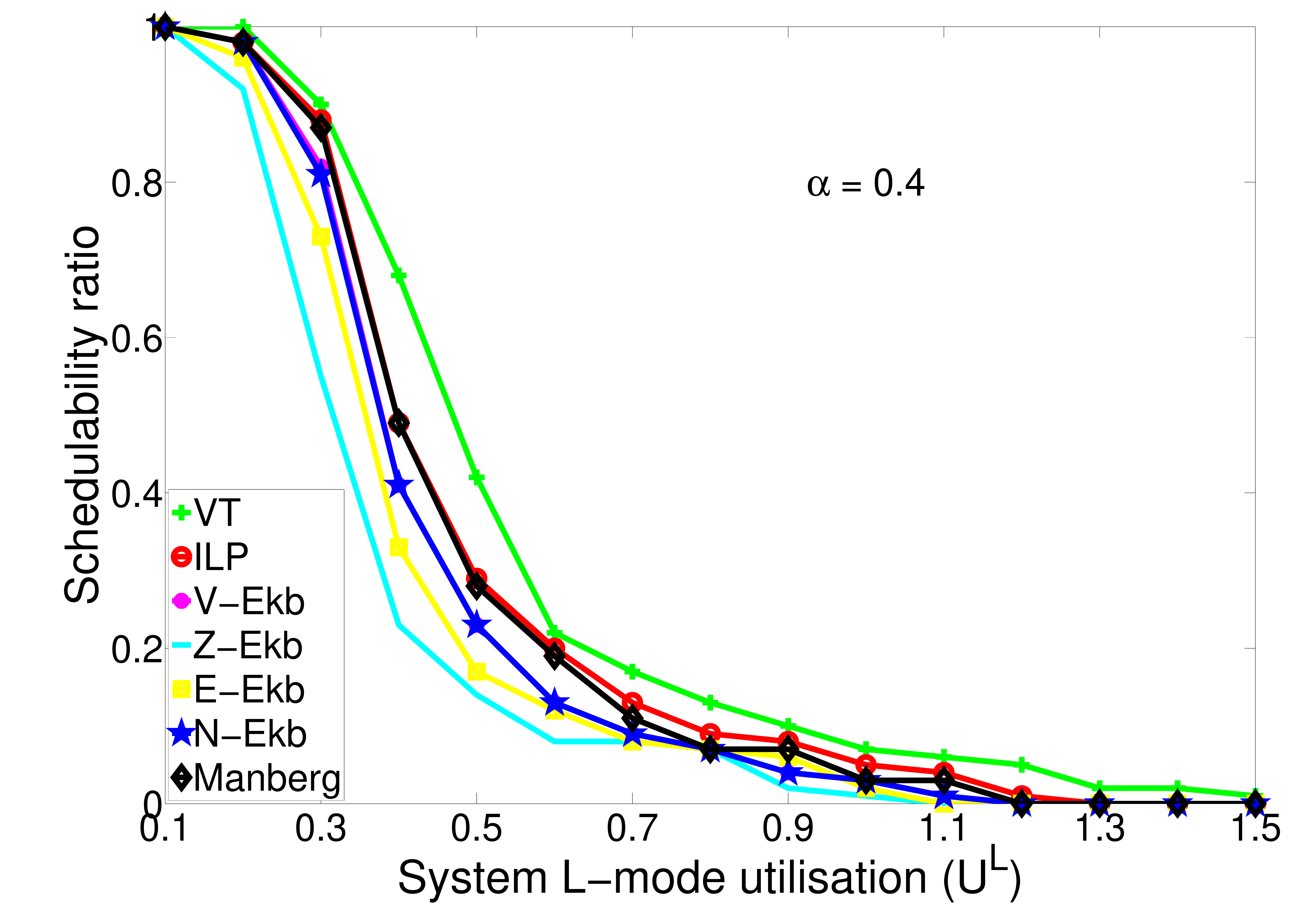}
  \caption{\label{fig:exp:SR:10}}
\end{minipage}
\hspace{-3ex}
\begin{minipage}[b]{0.49\linewidth}
\centering
\includegraphics[width=0.99\columnwidth]{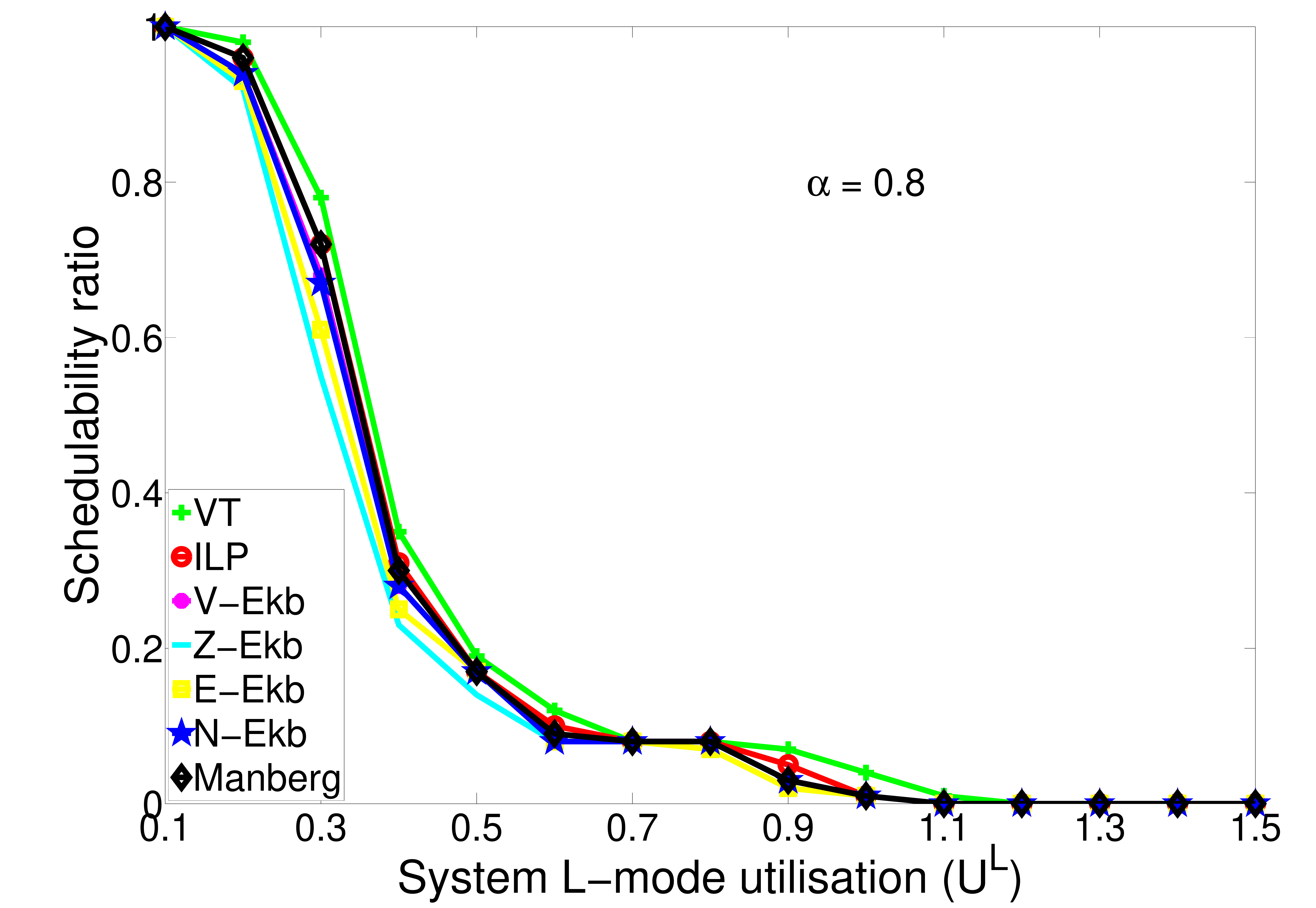}
  \caption{\label{fig:exp:SR:11}}
\end{minipage}
\end{minipage}
\end{figure*}

\begin{figure*}
\begin{minipage}{1\linewidth}
\centering
\begin{minipage}[b]{0.49\linewidth}
\centering
\includegraphics[width=0.99\columnwidth]{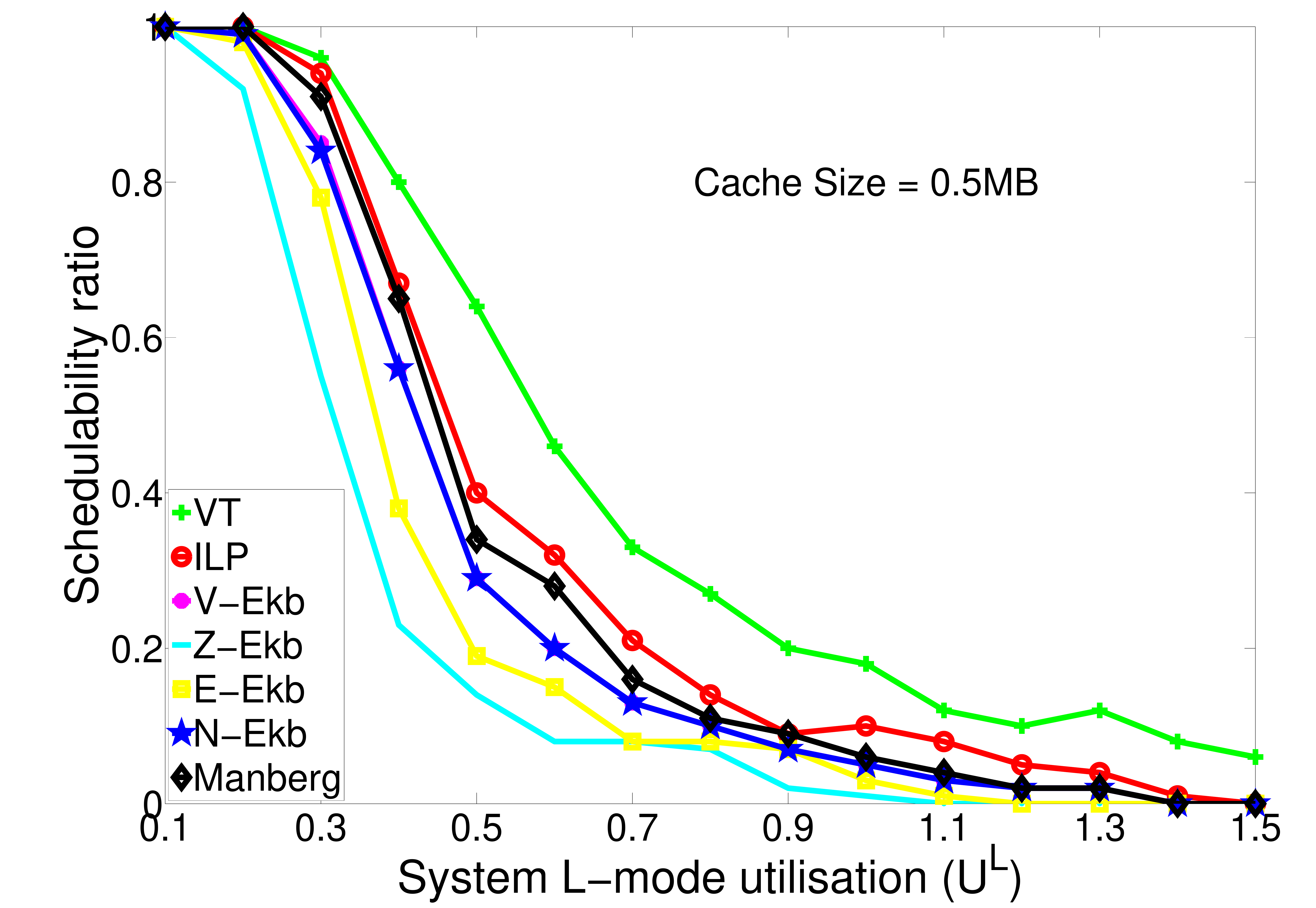}
  \caption{\label{fig:exp:SR:12}}
\end{minipage}
\hspace{-3ex}
\begin{minipage}[b]{0.49\linewidth}
\centering
\includegraphics[width=0.99\columnwidth]{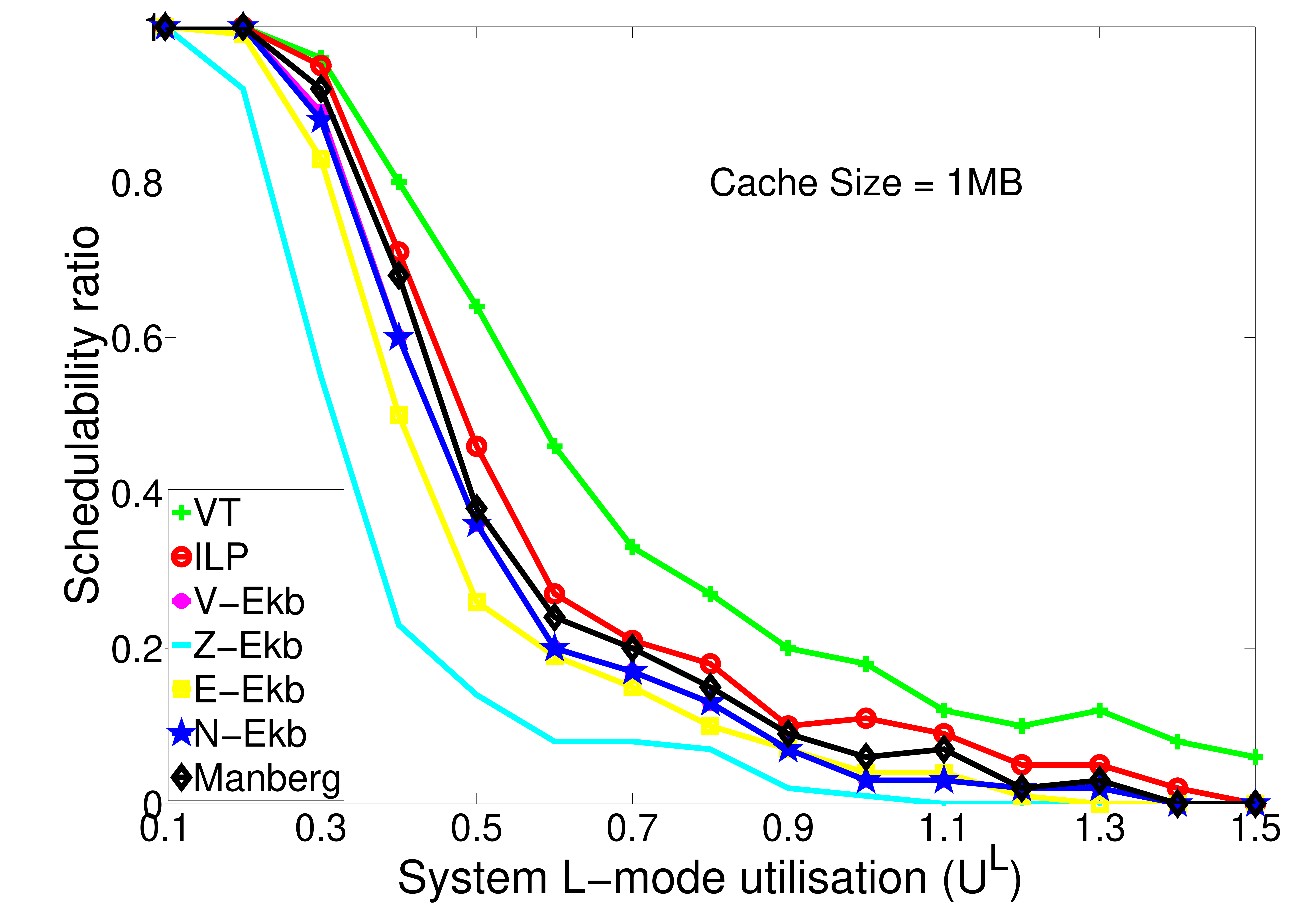}
  \caption{\label{fig:exp:SR:13}}
\end{minipage}
\end{minipage}
\end{figure*}

\begin{figure*}
\begin{minipage}{1\linewidth}
\centering
\begin{minipage}[b]{0.49\linewidth}
\centering
\includegraphics[width=0.99\columnwidth]{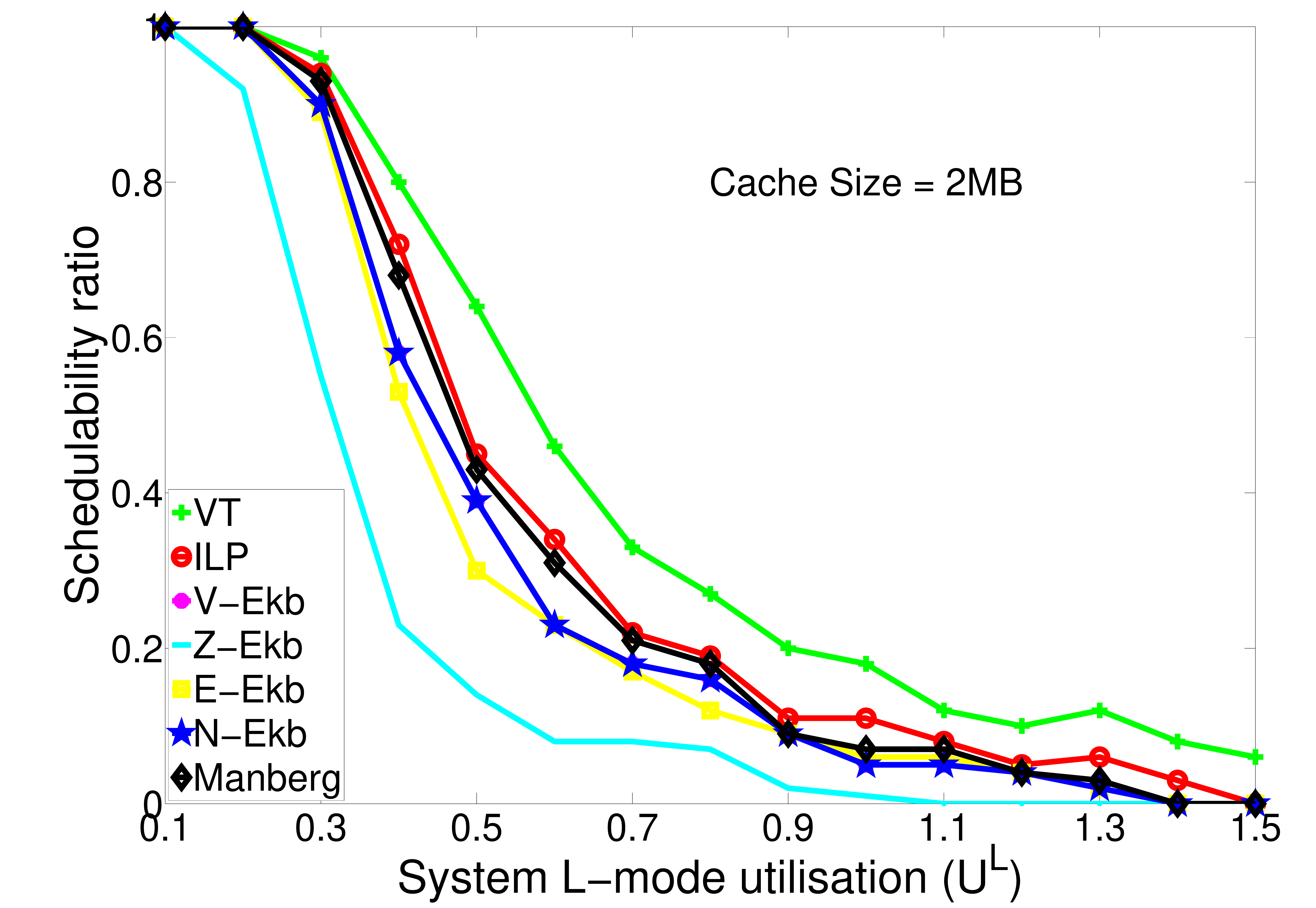}
  \caption{\label{fig:exp:SR:14}}
  \vspace{3ex}
\end{minipage}
\hspace{-3ex}
\begin{minipage}[b]{0.49\linewidth}
\centering
\includegraphics[width=0.99\columnwidth]{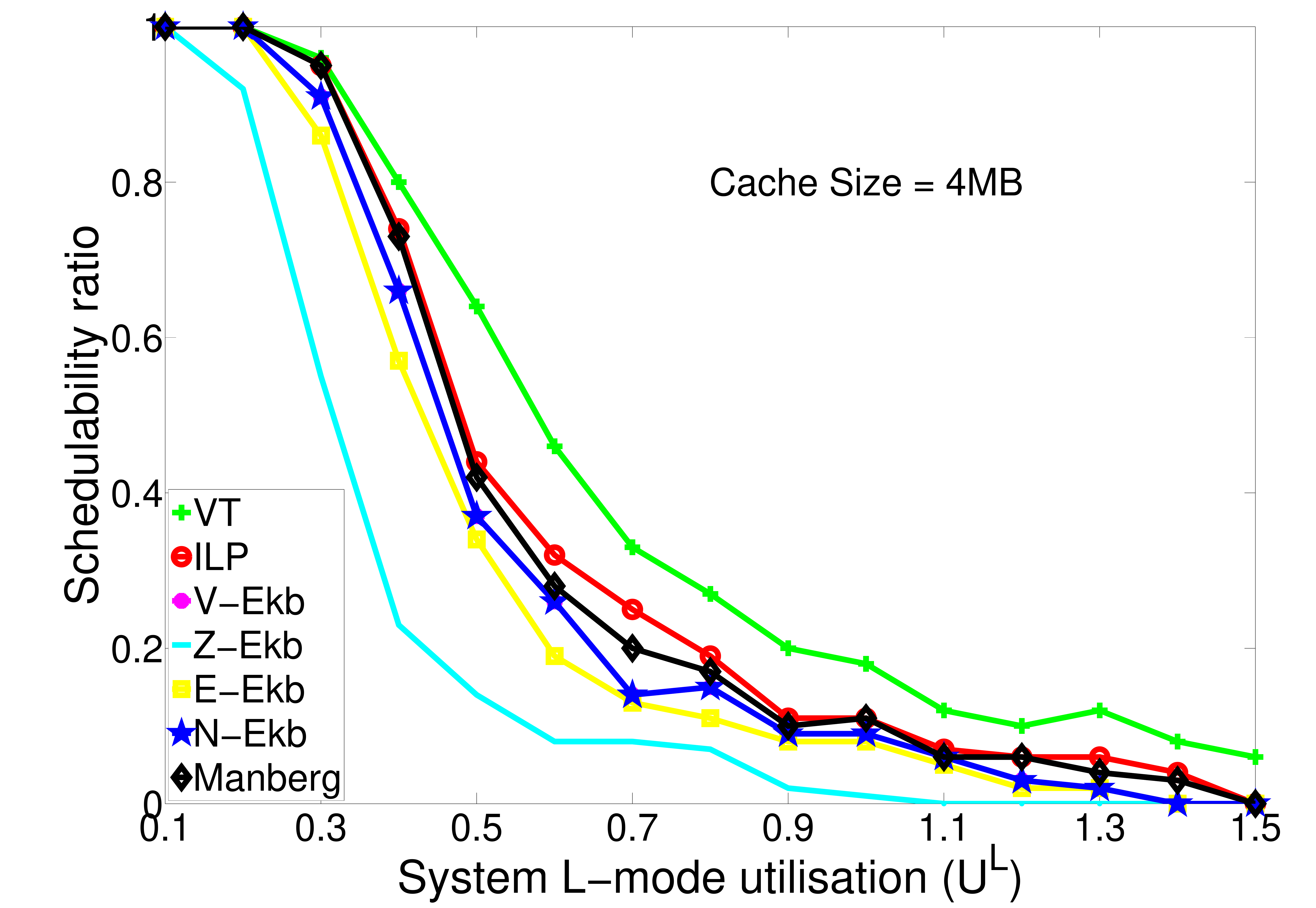}
  \caption{\label{fig:exp:SR:15}}
  \vspace{3ex}
\end{minipage}
\end{minipage}
\end{figure*}

\begin{figure*}[!t]
\begin{minipage}{1\linewidth}
\centering
\begin{minipage}[b]{0.49\linewidth}
\centering
\includegraphics[width=0.99\columnwidth]{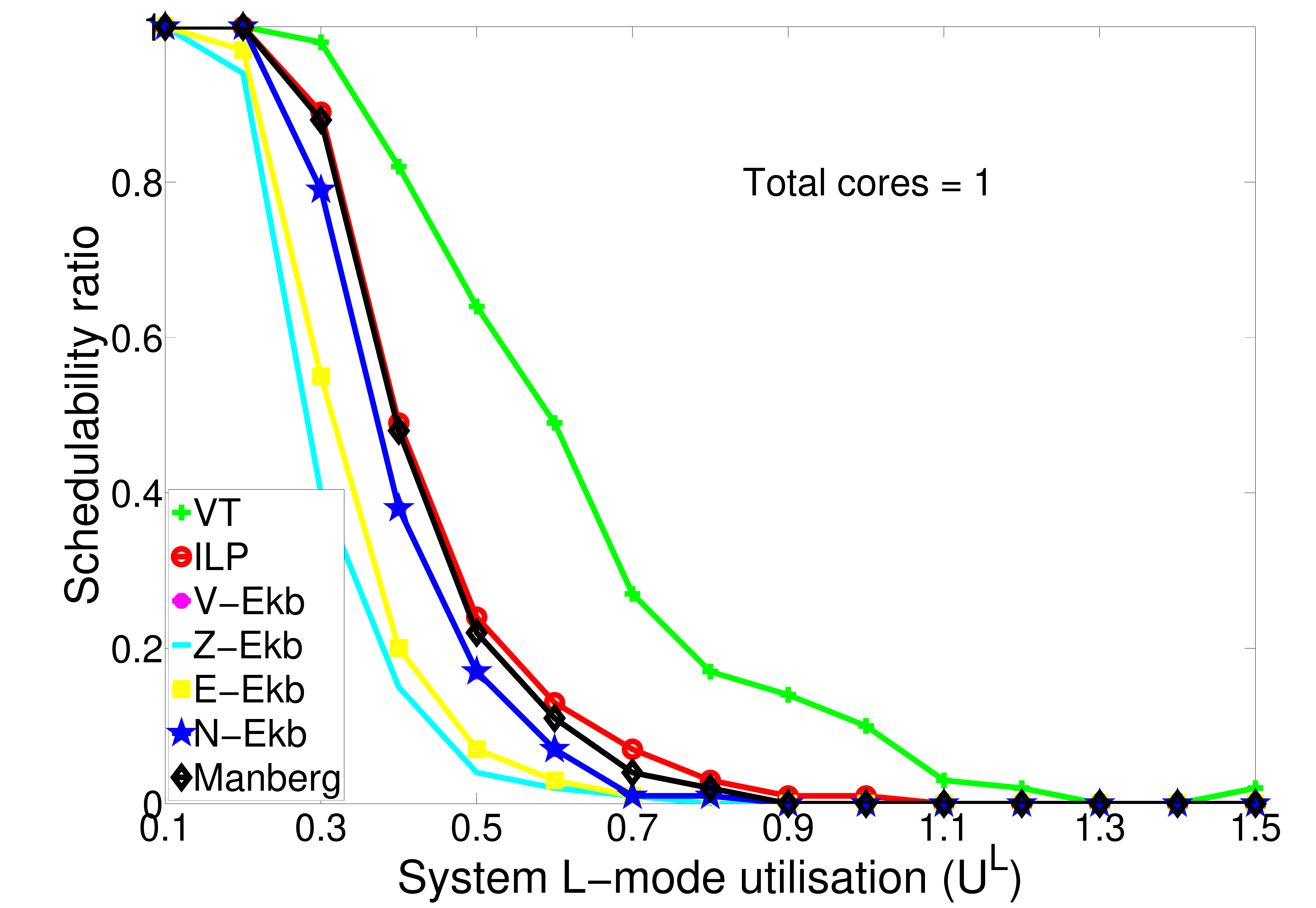}
  \caption{\label{fig:exp:SR:16}}
\end{minipage}
\hspace{-3ex}
\begin{minipage}[b]{0.49\linewidth}
\centering
\includegraphics[width=0.99\columnwidth]{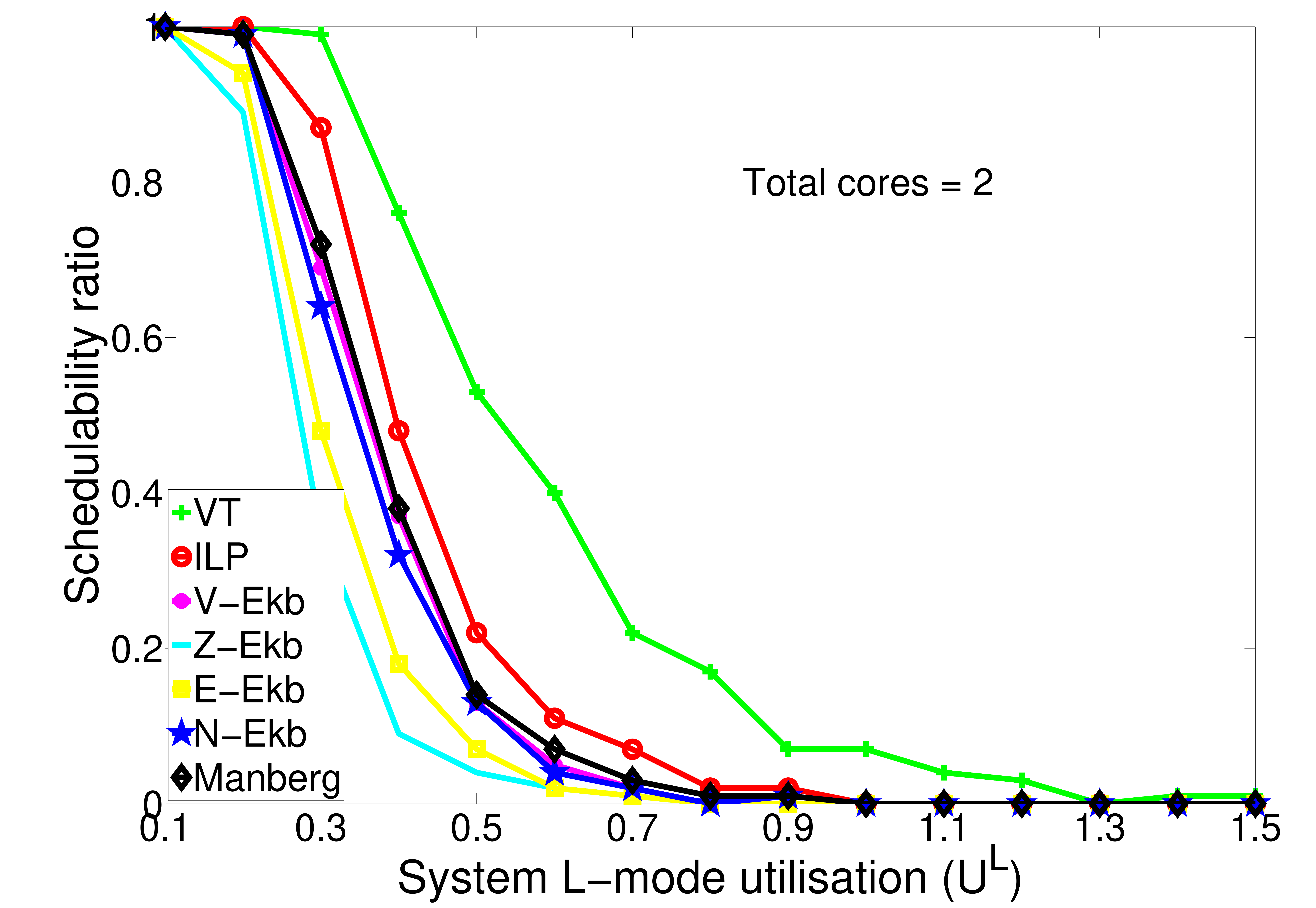}
  \caption{\label{fig:exp:SR:17}}
\end{minipage}
\end{minipage}
\end{figure*}

\begin{figure*}
\begin{minipage}{1\linewidth}
\centering
\begin{minipage}[b]{0.49\linewidth}
\centering
\includegraphics[width=0.99\columnwidth]{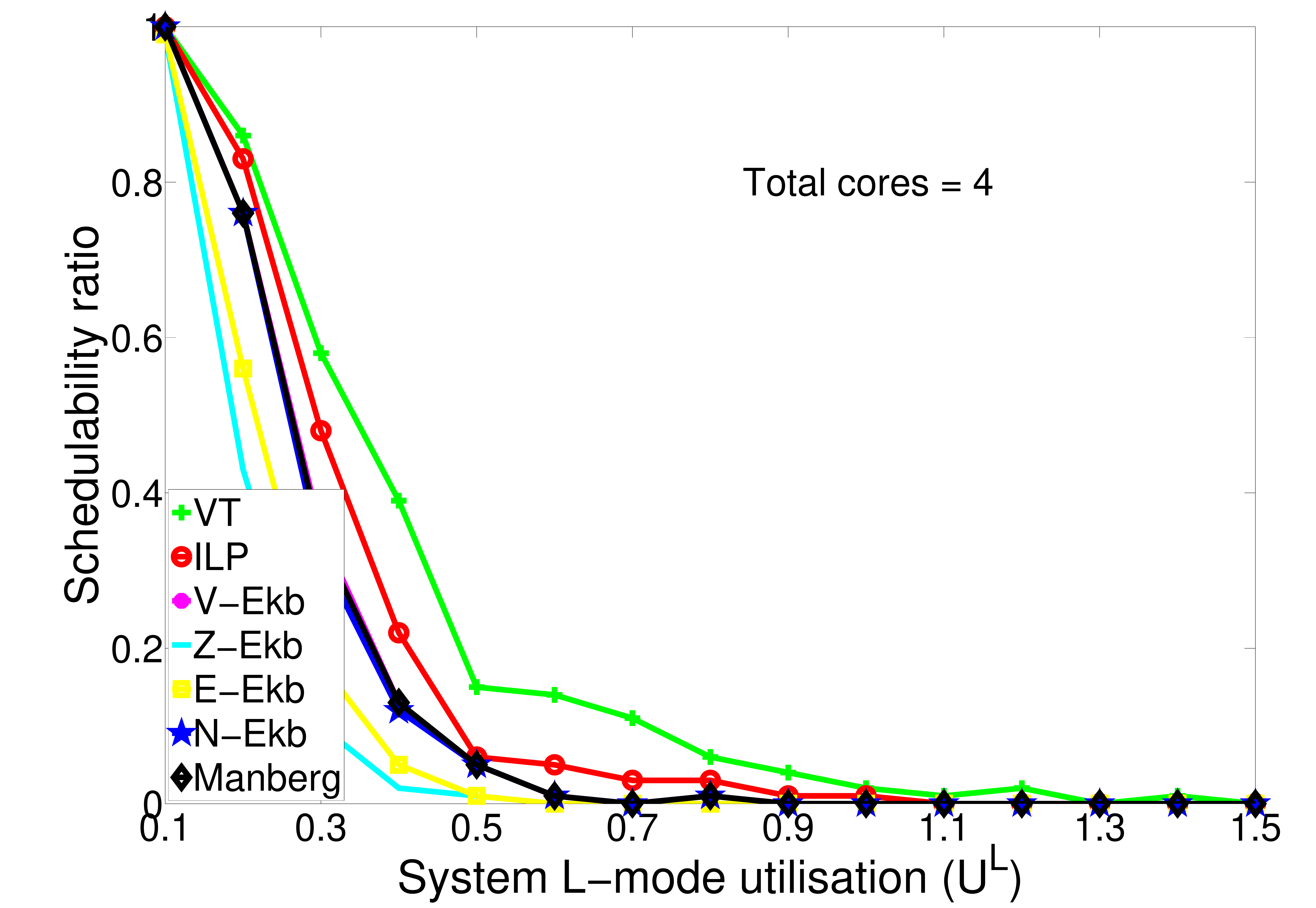}
  \caption{\label{fig:exp:SR:18}}
\end{minipage}
\hspace{-3ex}
\begin{minipage}[b]{0.49\linewidth}
\centering
\includegraphics[width=0.99\columnwidth]{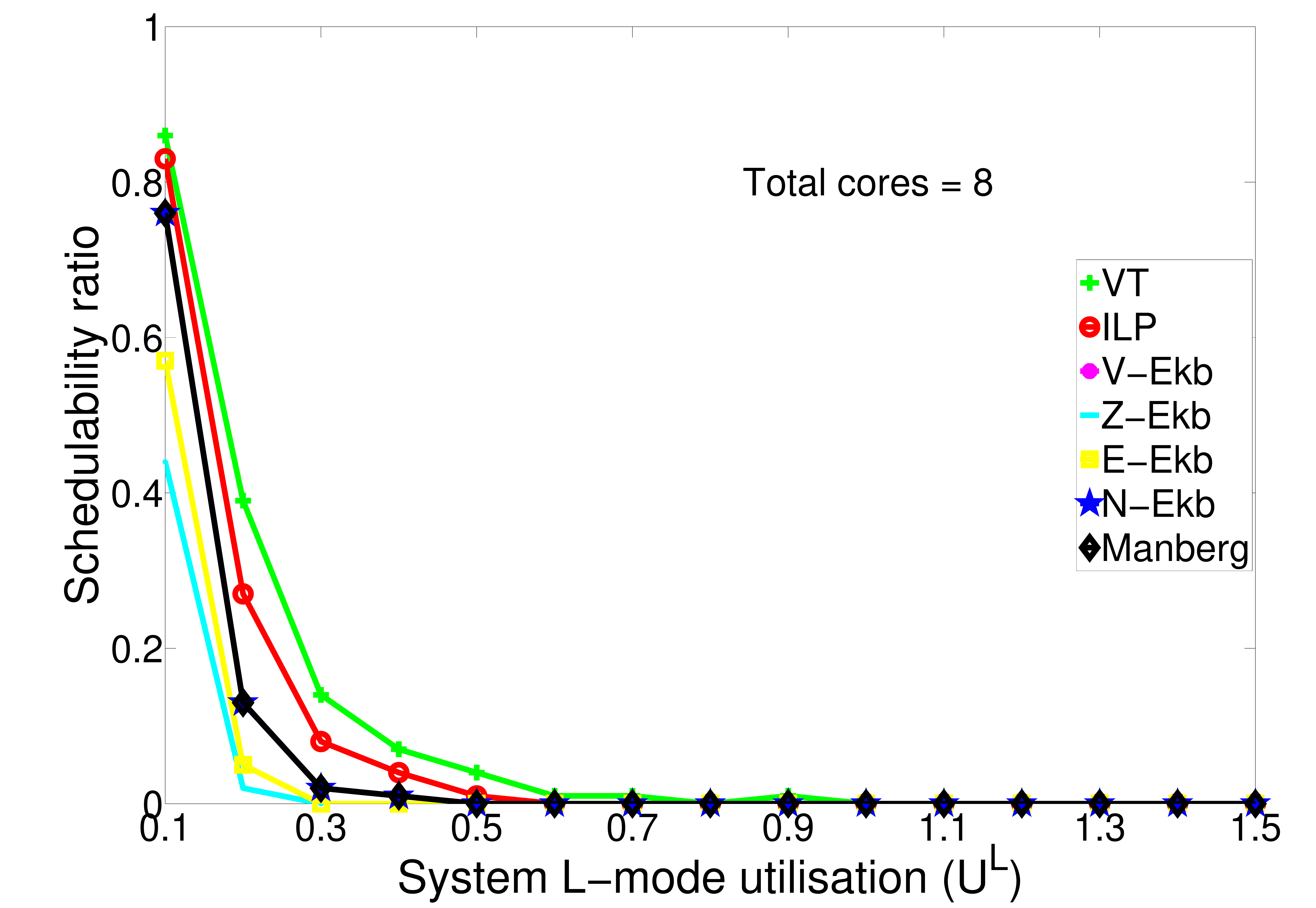}
  \caption{\label{fig:exp:SR:19}}
\end{minipage}
\end{minipage}
\end{figure*}

\begin{figure*}
\begin{minipage}{1\linewidth}
\centering
\begin{minipage}[b]{0.49\linewidth}
\centering
\includegraphics[width=0.99\columnwidth]{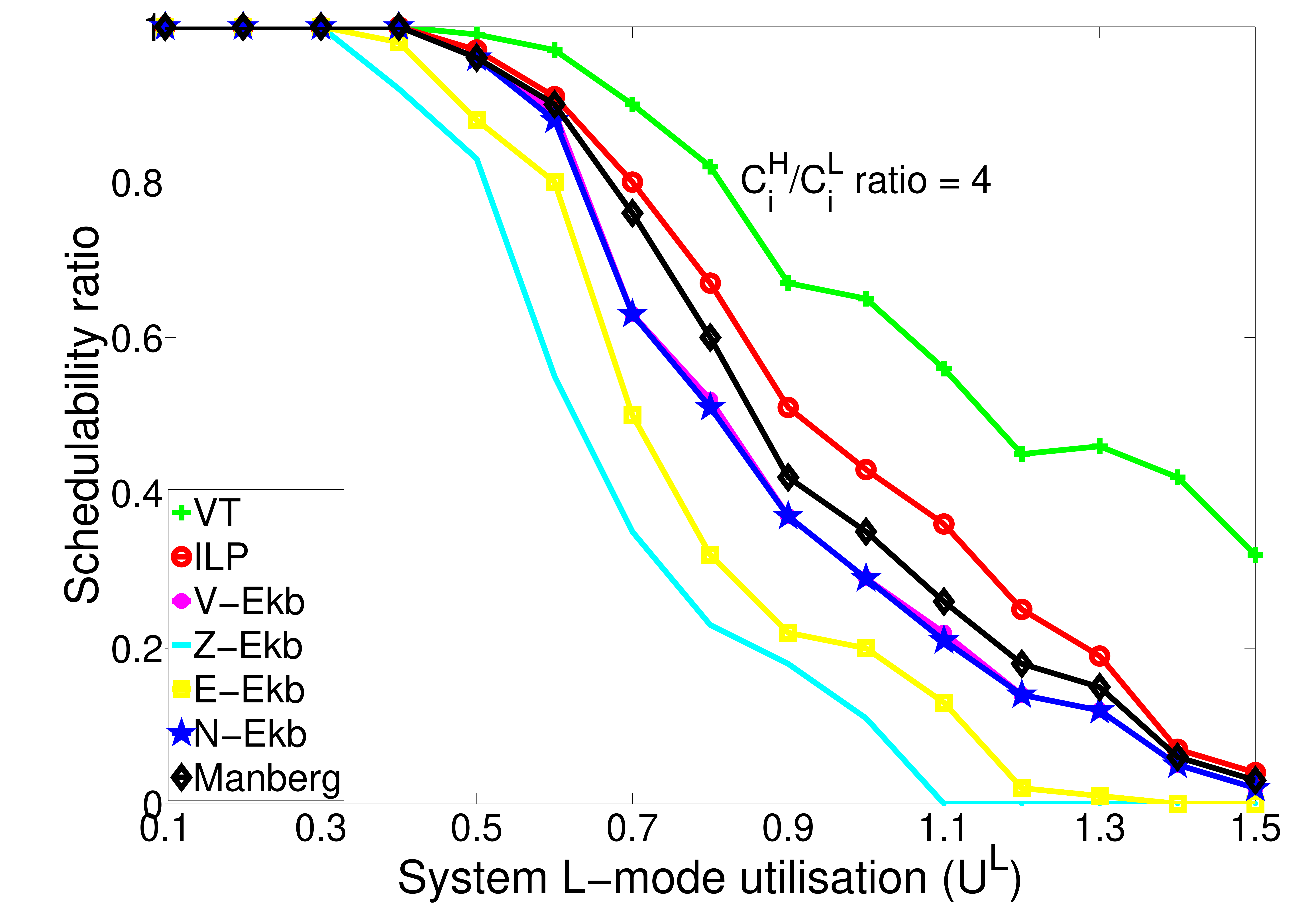}
  \caption{\label{fig:exp:SR:20}}
  \vspace{3ex}
\end{minipage}
\hspace{-3ex}
\begin{minipage}[b]{0.49\linewidth}
\centering
\includegraphics[width=0.99\columnwidth]{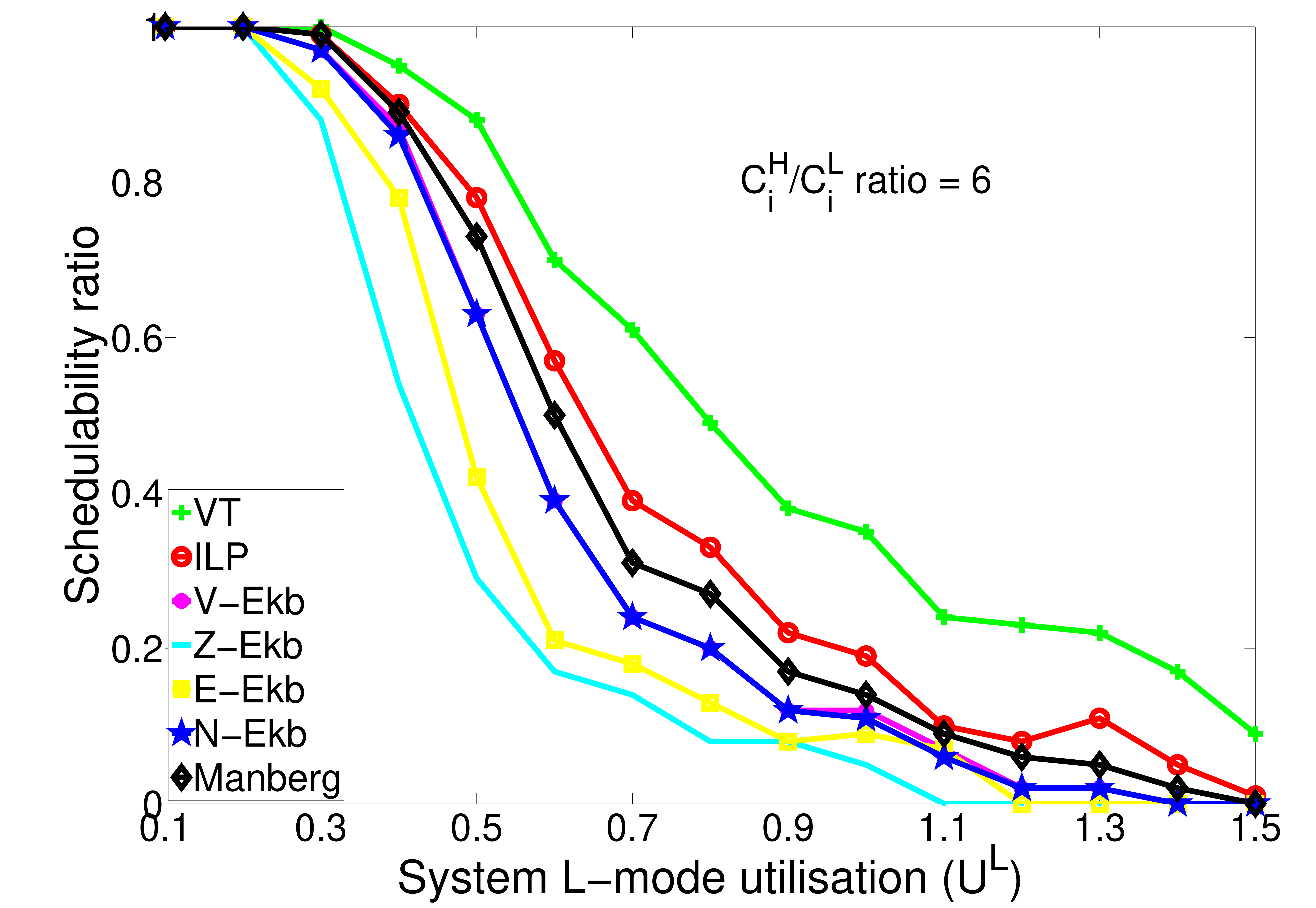}
  \caption{\label{fig:exp:SR:21}}
  \vspace{3ex}
\end{minipage}
\end{minipage}
\end{figure*}

\begin{figure*}
\begin{minipage}{1\linewidth}
\centering
\begin{minipage}[b]{0.49\linewidth}
\centering
\includegraphics[width=0.99\columnwidth]{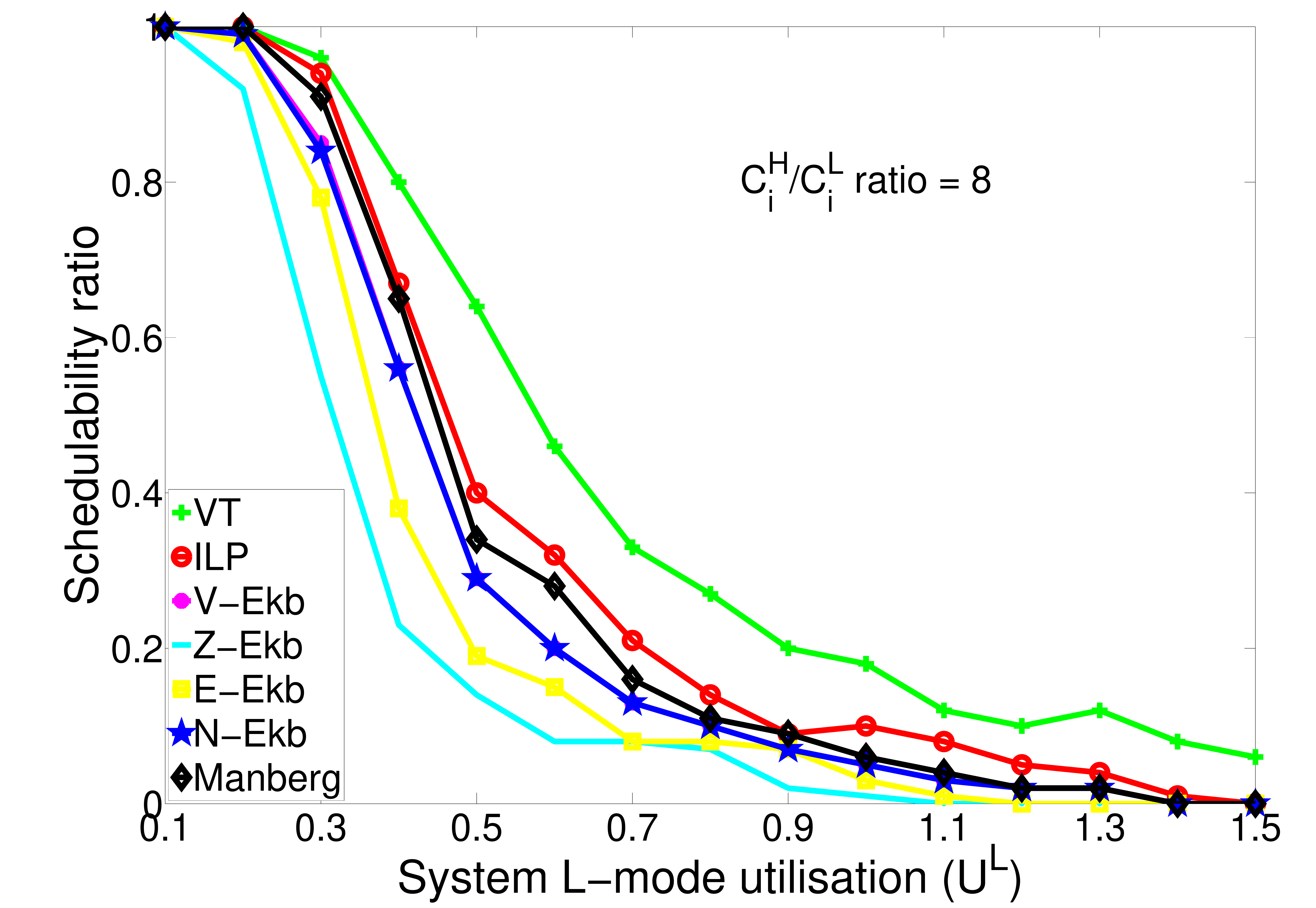}
  \caption{\label{fig:exp:SR:22}}
\end{minipage}
\hspace{-3ex}
\begin{minipage}[b]{0.49\linewidth}
\centering
\includegraphics[width=0.99\columnwidth]{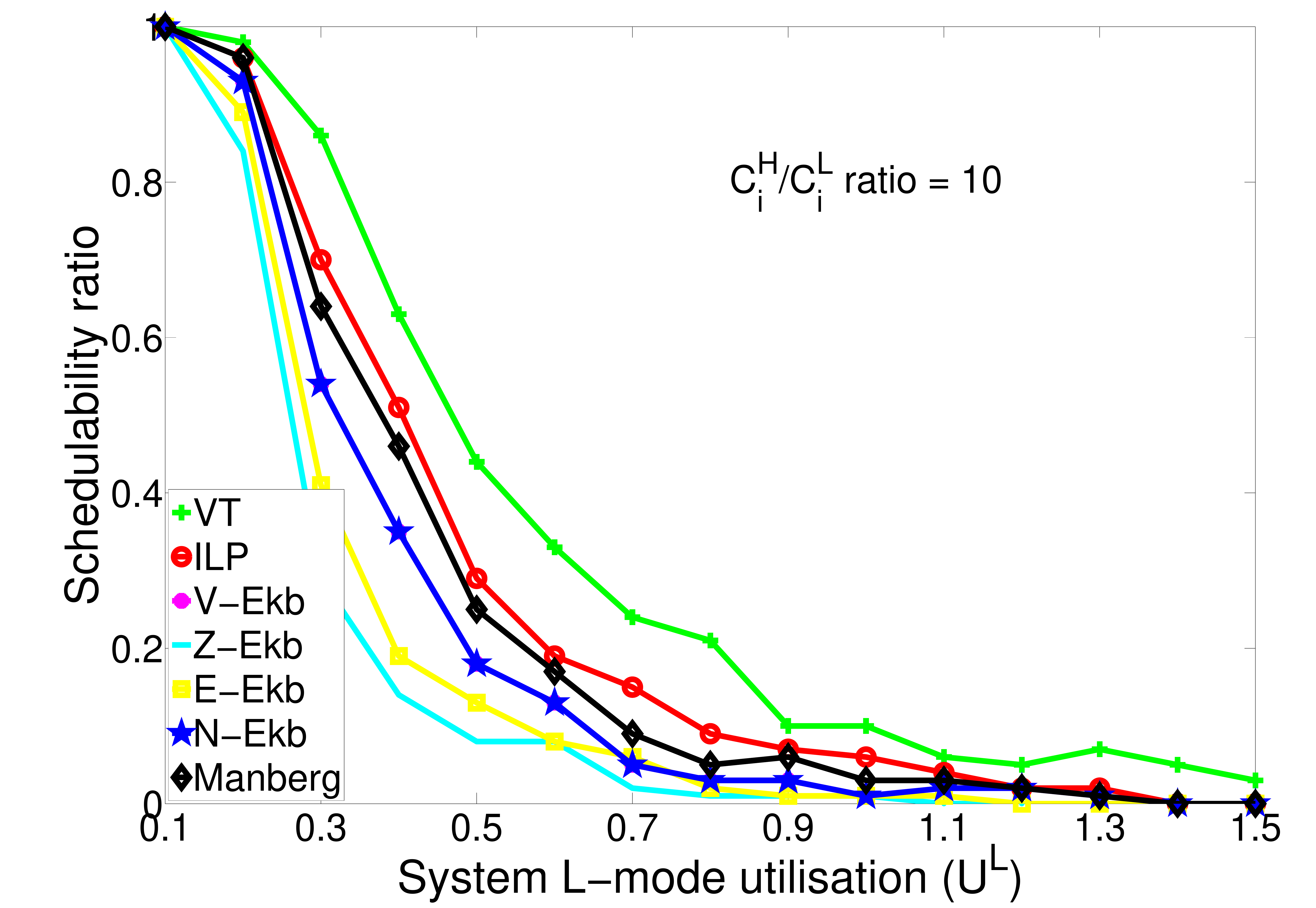}
  \caption{\label{fig:exp:SR:23}}
\end{minipage}
\end{minipage}
\end{figure*}

\begin{figure*}[!t]
\begin{minipage}{1\linewidth}
\centering
\begin{minipage}[b]{0.49\linewidth}
\centering
\includegraphics[width=0.99\columnwidth]{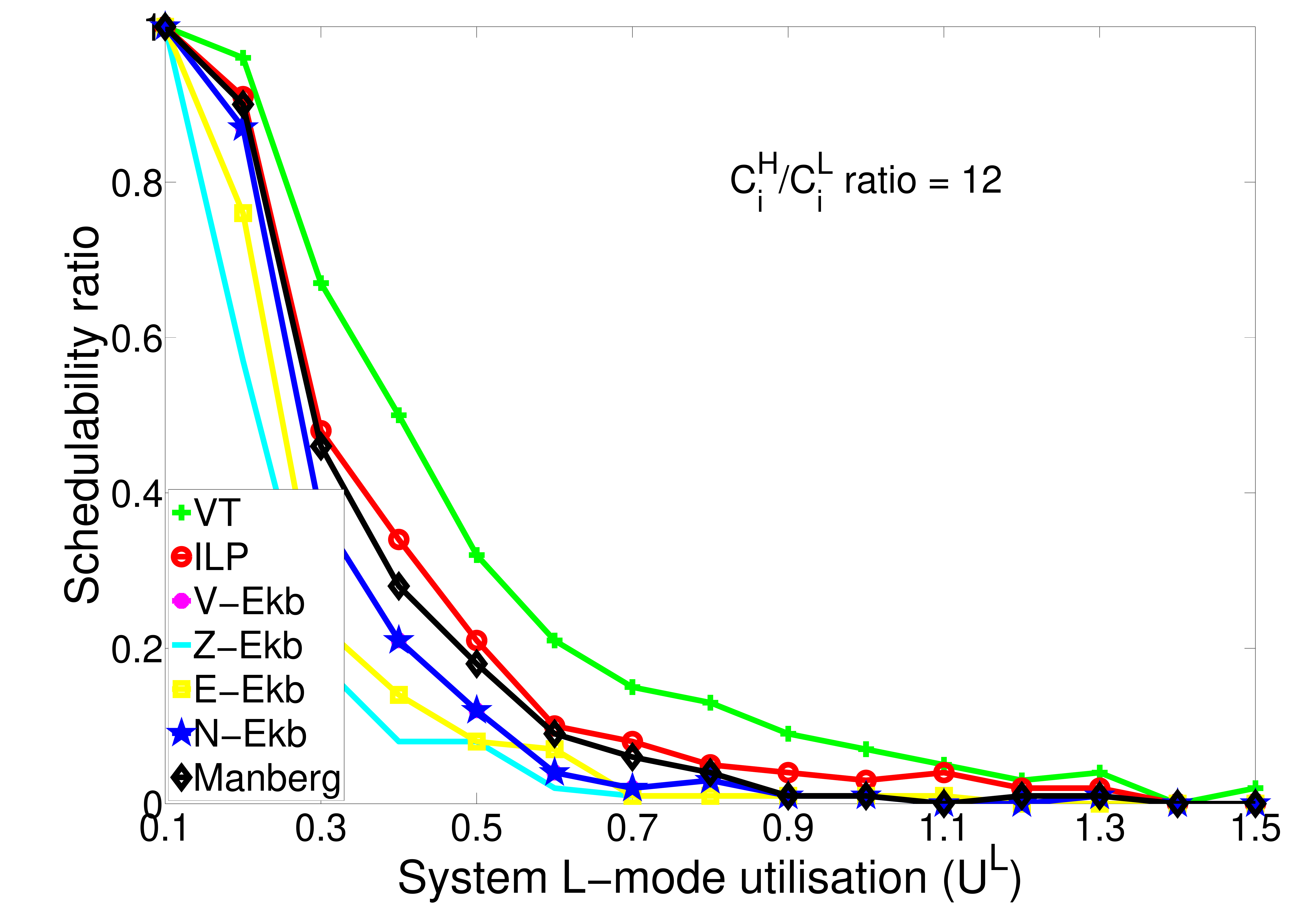}
  \caption{\label{fig:exp:SR:24}}
\end{minipage}
\hspace{-3ex}
\begin{minipage}[b]{0.49\linewidth}
\centering
\includegraphics[width=0.99\columnwidth]{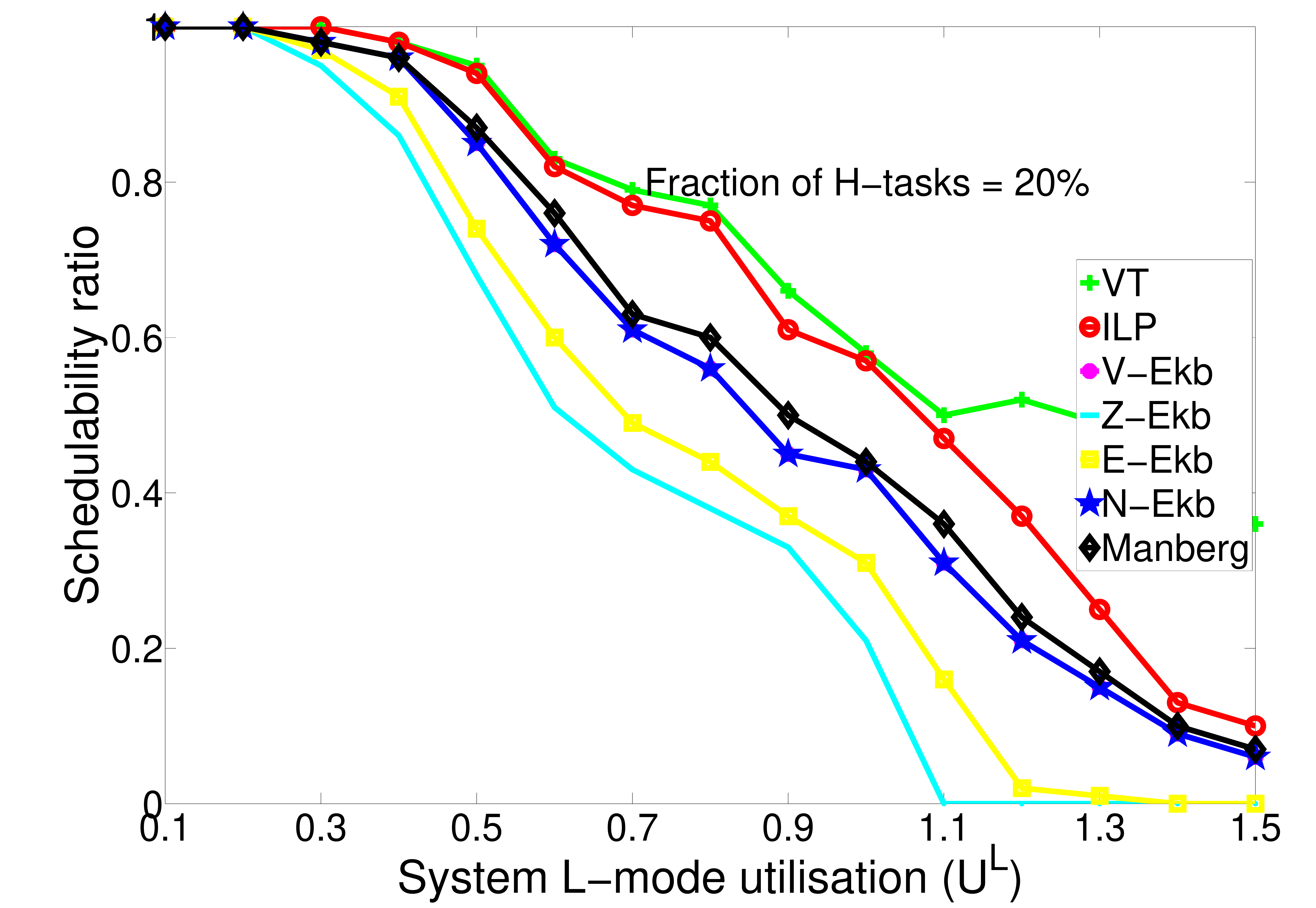}
  \caption{\label{fig:exp:SR:25}}
\end{minipage}
\end{minipage}
\end{figure*}

\begin{figure*}
\begin{minipage}{1\linewidth}
\centering
\begin{minipage}[b]{0.49\linewidth}
\centering
\includegraphics[width=0.99\columnwidth]{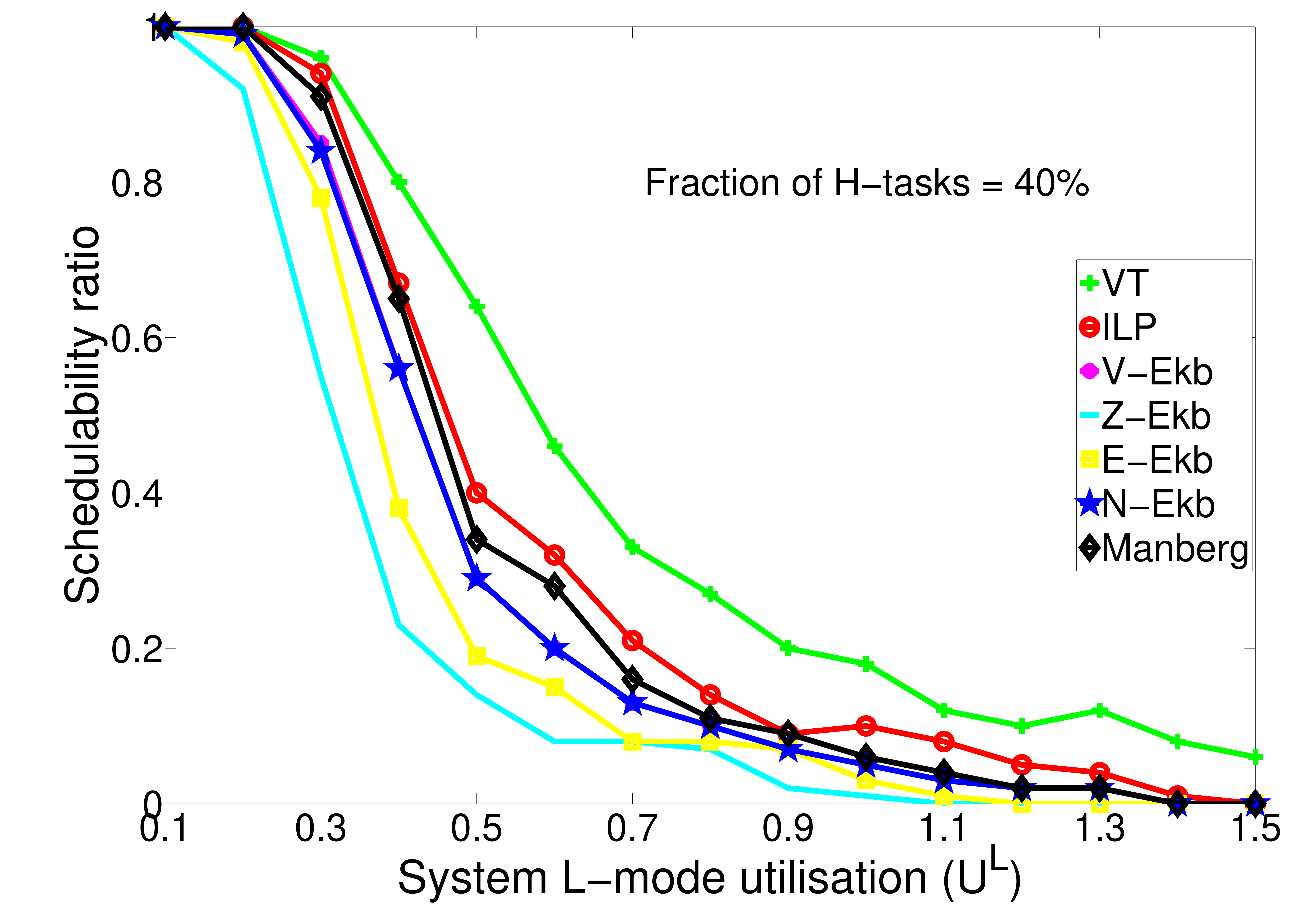}
  \caption{\label{fig:exp:SR:26}}
  \vspace{3ex}
\end{minipage}
\hspace{-3ex}
\begin{minipage}[b]{0.49\linewidth}
\centering
\includegraphics[width=0.99\columnwidth]{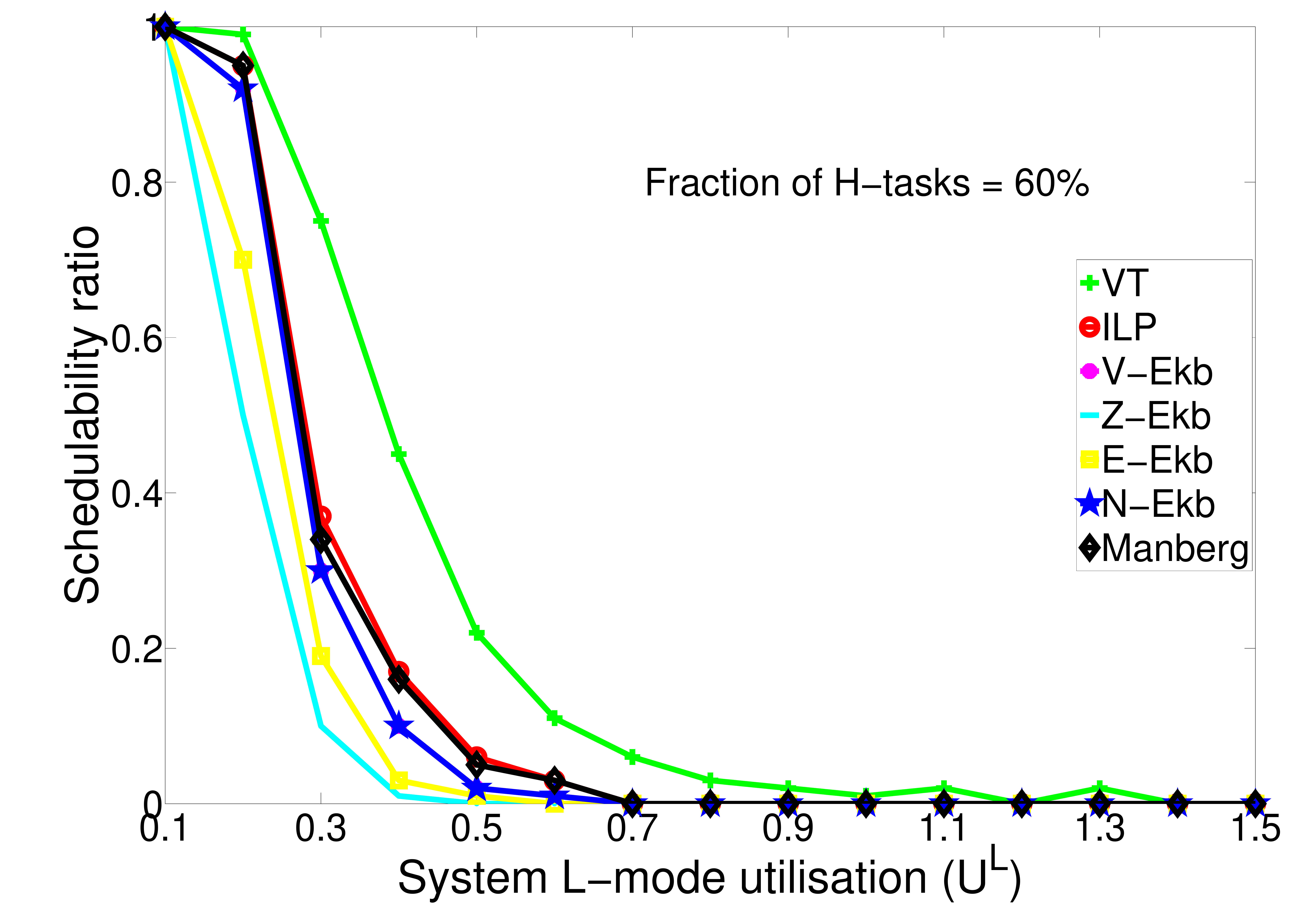}
  \caption{\label{fig:exp:SR:27}}
  \vspace{3ex}
\end{minipage}
\end{minipage}
\end{figure*}

\begin{figure*}
\begin{minipage}{1\linewidth}
\centering
\begin{minipage}[b]{0.49\linewidth}
\centering
\includegraphics[width=0.99\columnwidth]{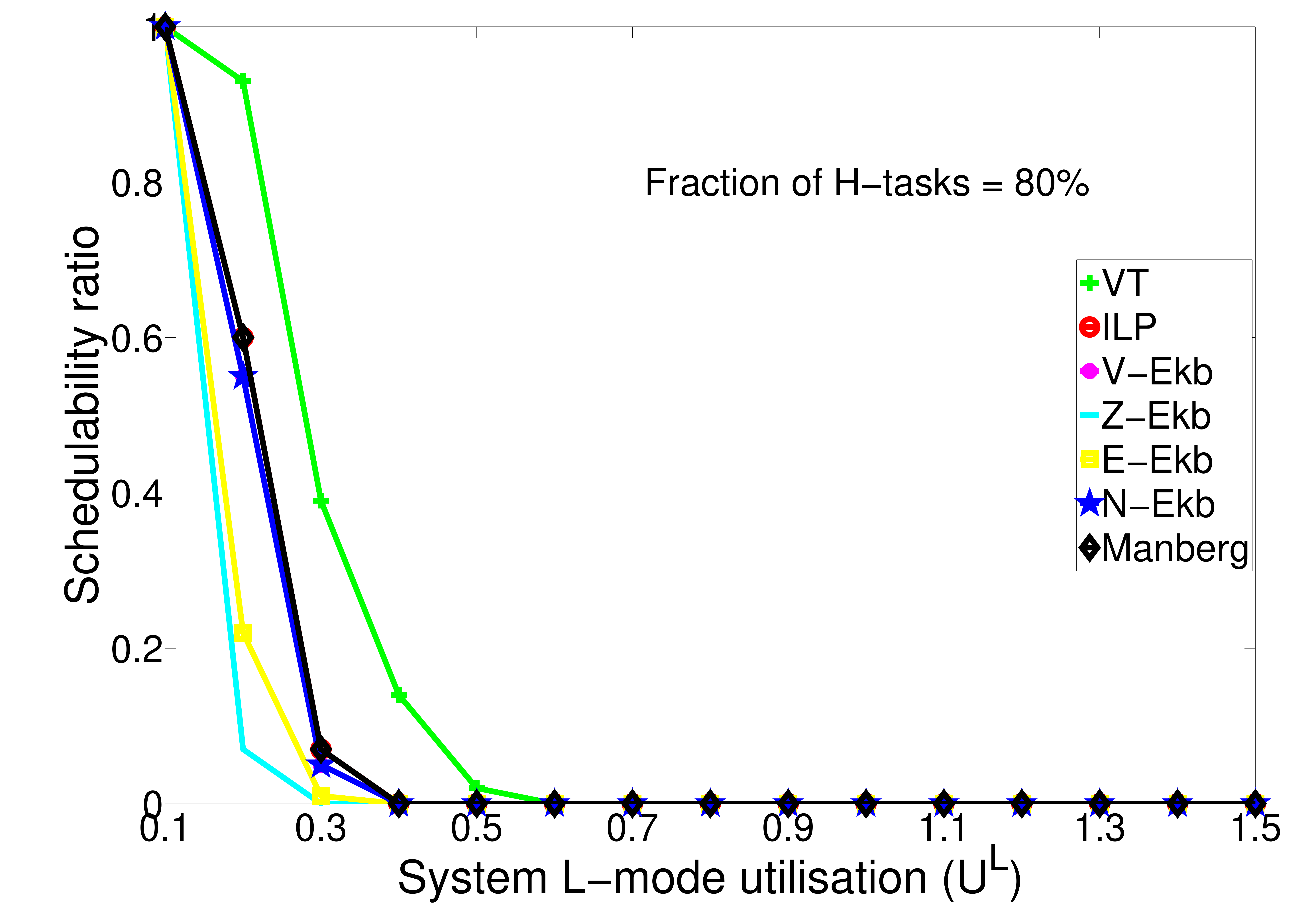}
  \caption{\label{fig:exp:SR:28}}
\end{minipage}
\hspace{-3ex}
\begin{minipage}[b]{0.49\linewidth}
\centering
\includegraphics[width=0.99\columnwidth]{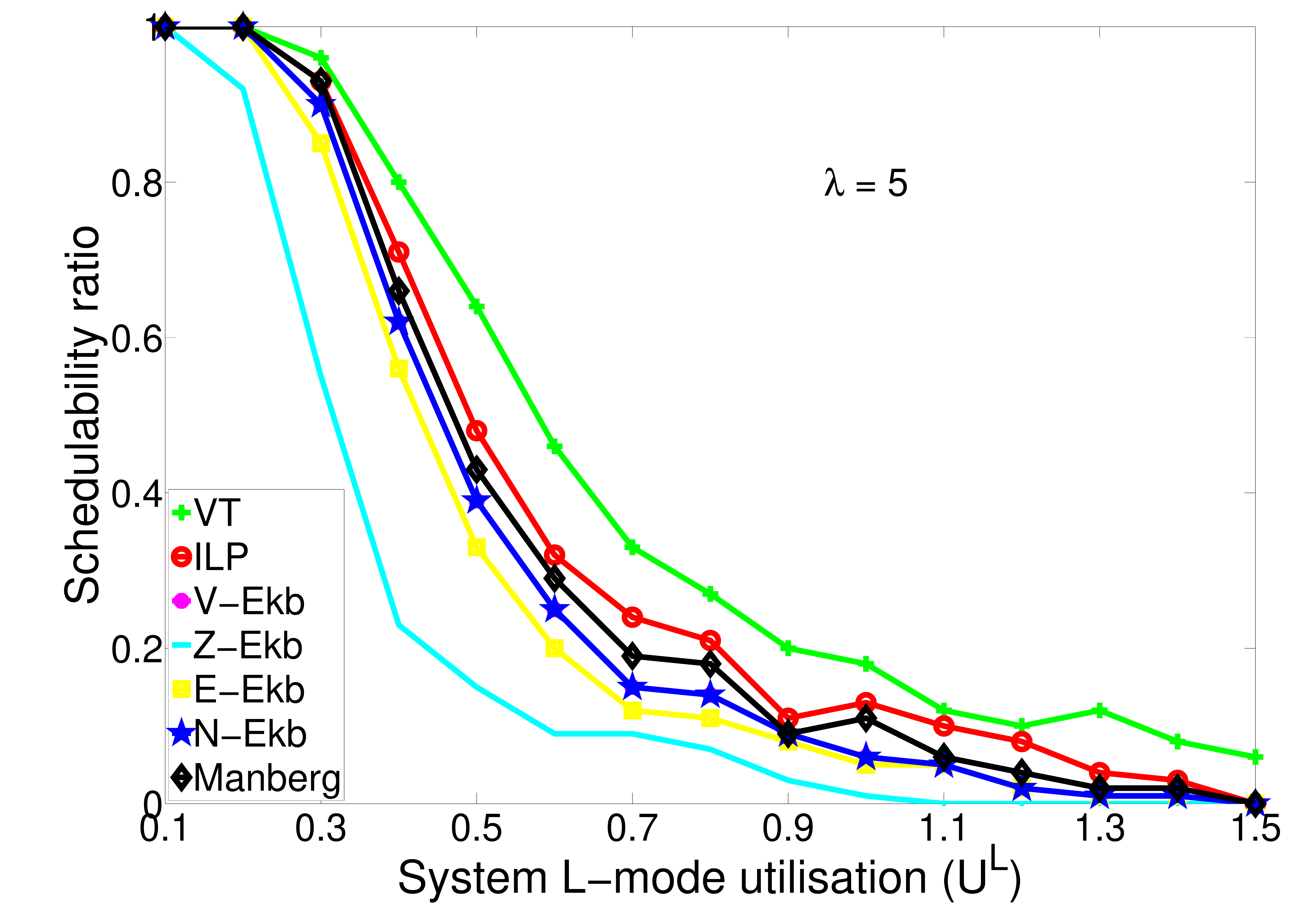}
  \caption{\label{fig:exp:SR:29}}
\end{minipage}
\end{minipage}
\end{figure*}

\begin{figure*}
\begin{minipage}{1\linewidth}
\centering
\begin{minipage}[b]{0.49\linewidth}
\centering
\includegraphics[width=0.99\columnwidth]{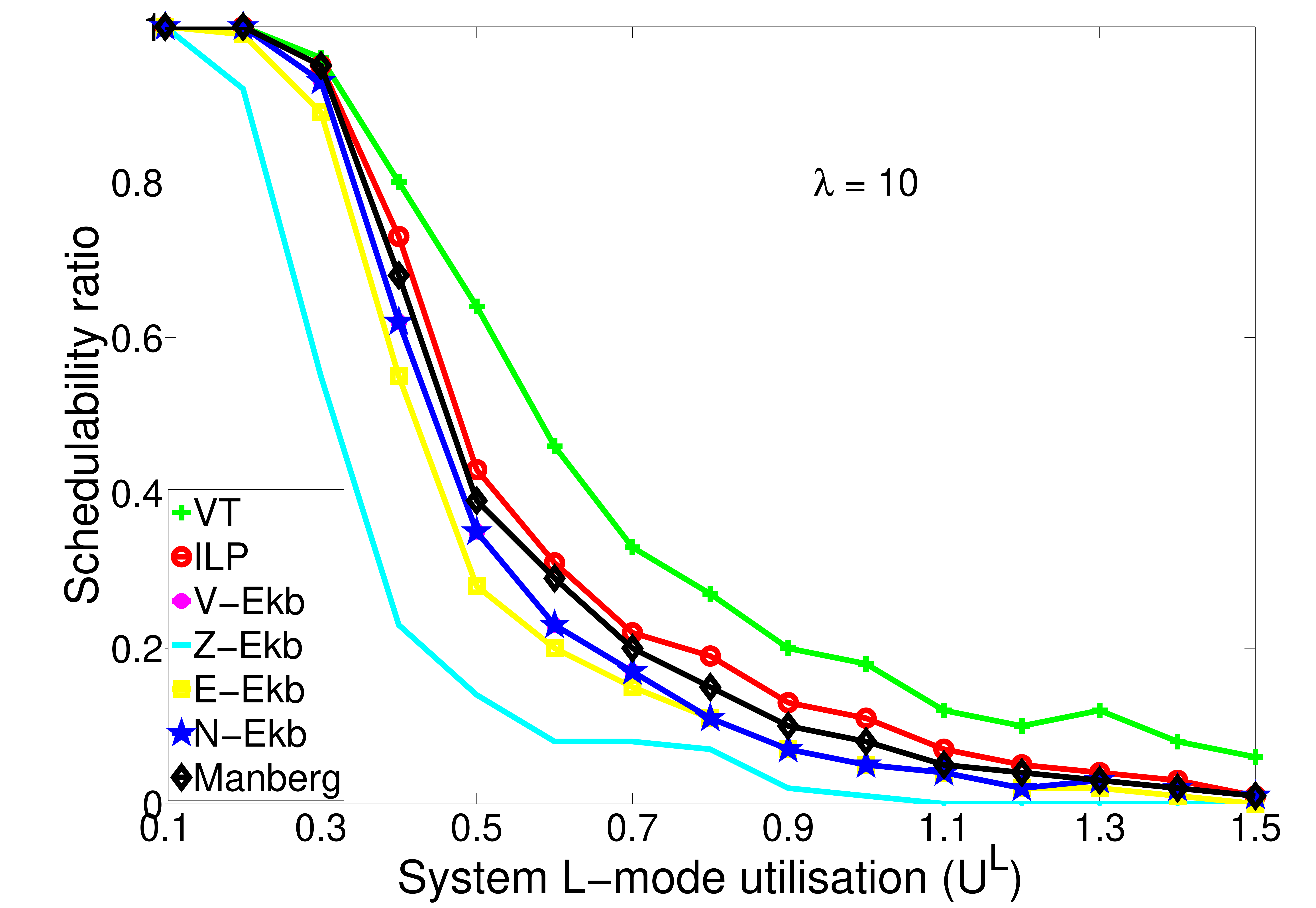}
  \caption{\label{fig:exp:SR:30}}
\end{minipage}
\hspace{-3ex}
\begin{minipage}[b]{0.49\linewidth}
\centering
\includegraphics[width=0.99\columnwidth]{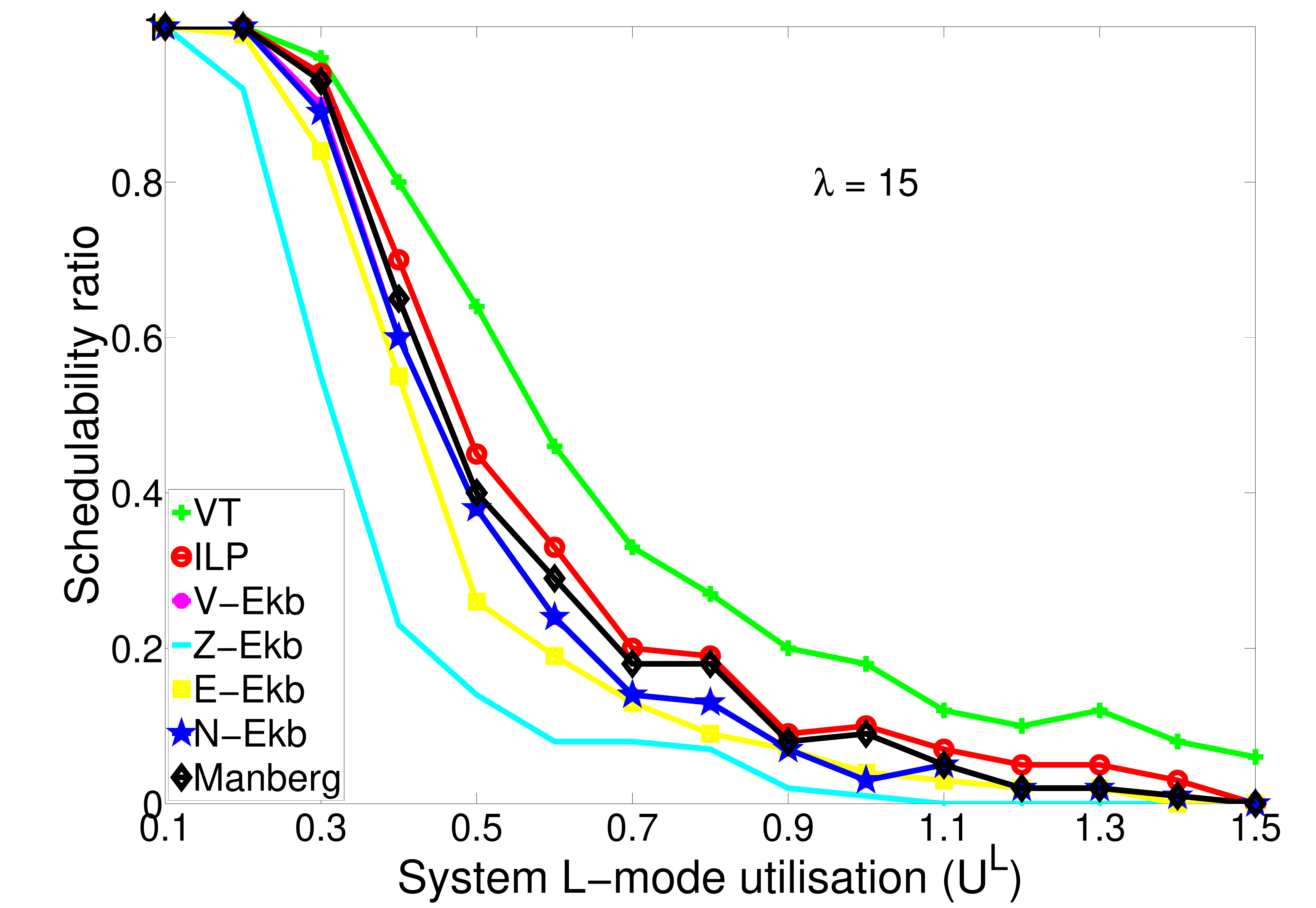}
  \caption{\label{fig:exp:SR:31}}
\end{minipage}
\end{minipage}
\end{figure*}

\begin{figure*}[!t]
\begin{minipage}{1\linewidth}
\centering
\begin{minipage}[b]{0.49\linewidth}
\centering
\includegraphics[width=0.99\columnwidth]{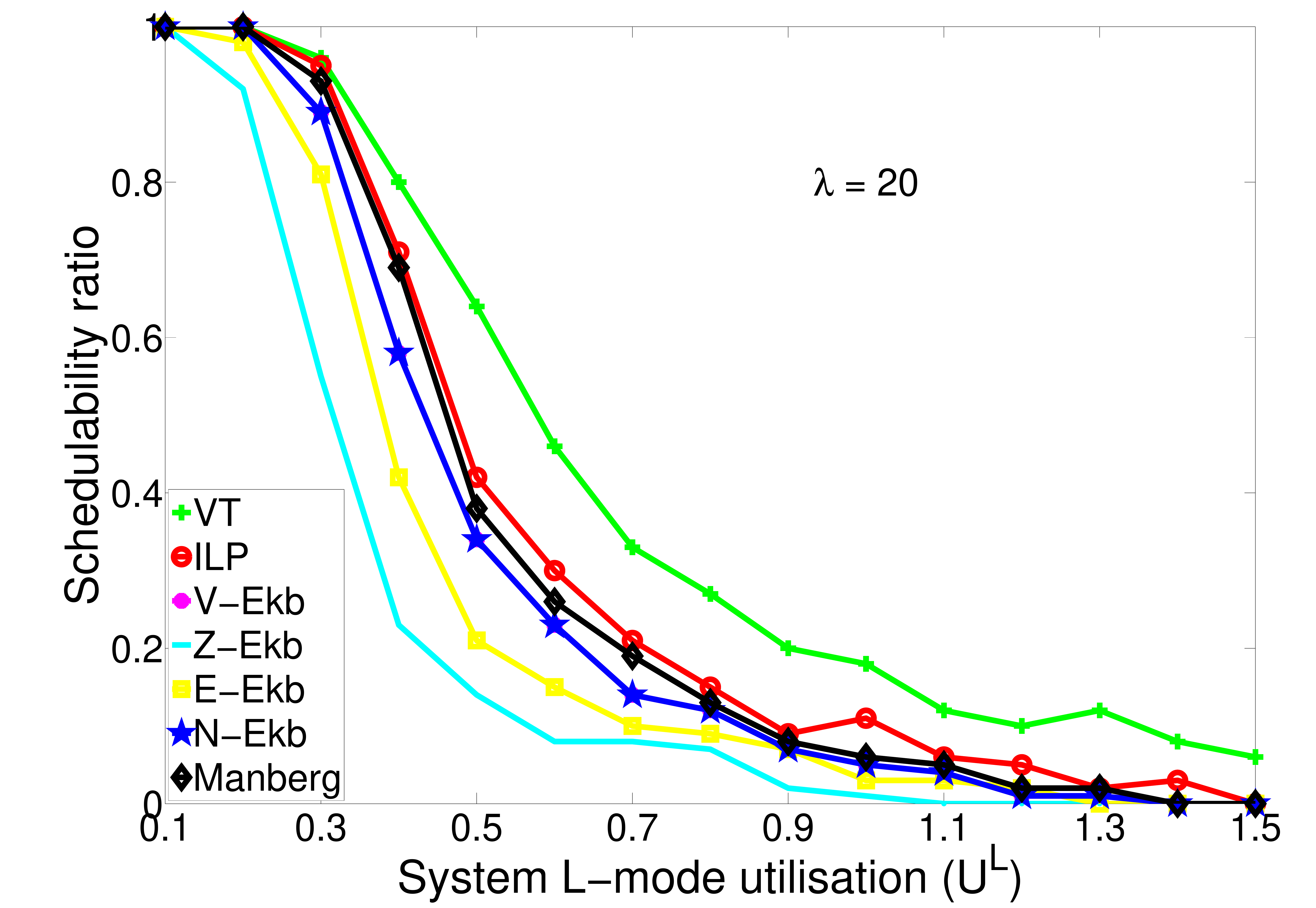}
  \caption{\label{fig:exp:SR:32}}
  \vspace{3ex}
\end{minipage}
\hspace{-3ex}
\begin{minipage}[b]{0.49\linewidth}
\centering
\includegraphics[width=0.99\columnwidth]{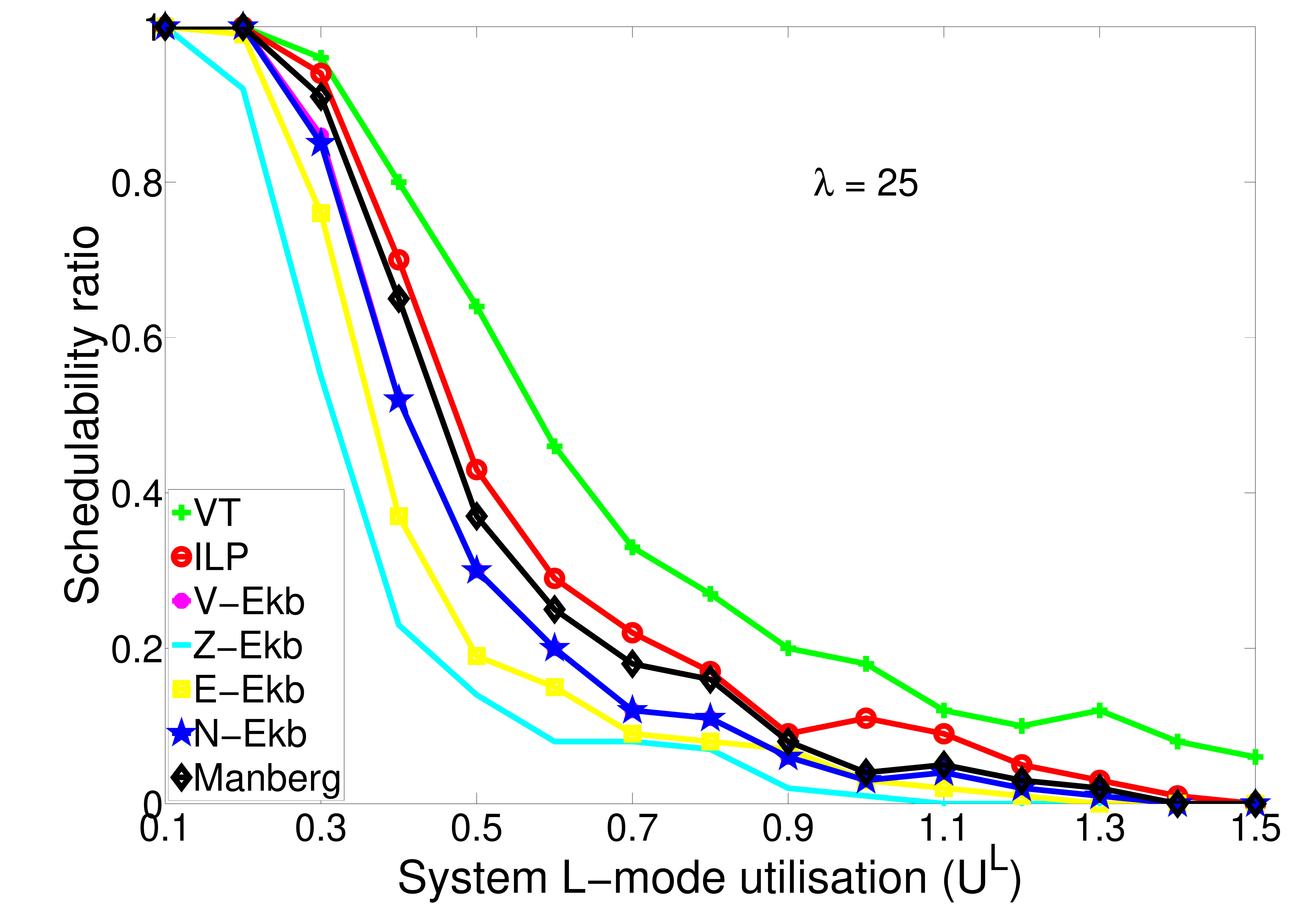}
  \caption{\label{fig:exp:SR:33}}
  \vspace{3ex}
\end{minipage}
\end{minipage}
\end{figure*}

\begin{figure*}
\begin{minipage}{1\linewidth}
\centering
\begin{minipage}[b]{0.49\linewidth}
\centering
\includegraphics[width=0.99\columnwidth]{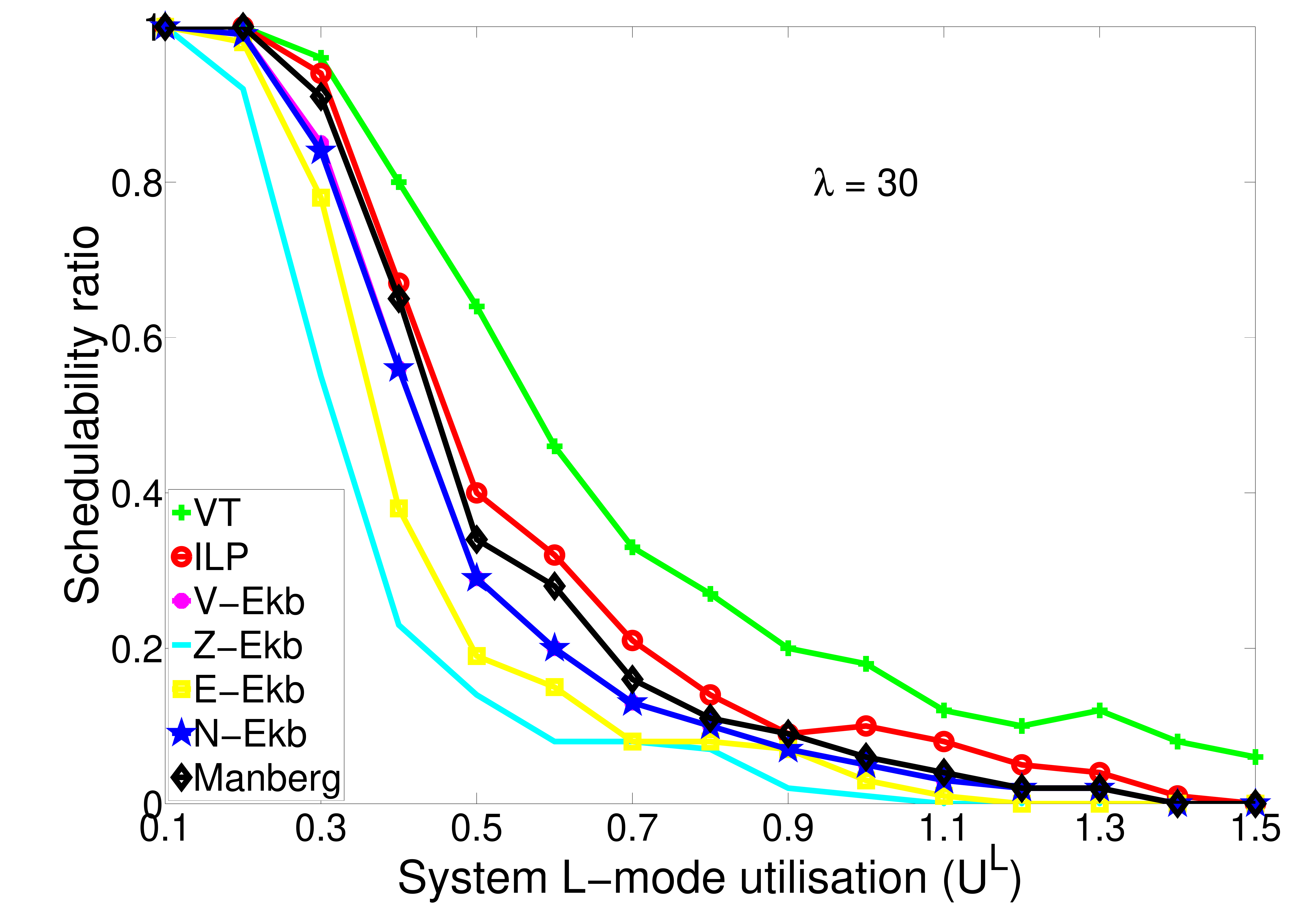}
  \caption{\label{fig:exp:SR:34}}
\end{minipage}
\hspace{-3ex}
\begin{minipage}[b]{0.49\linewidth}
\centering
\includegraphics[width=0.99\columnwidth]{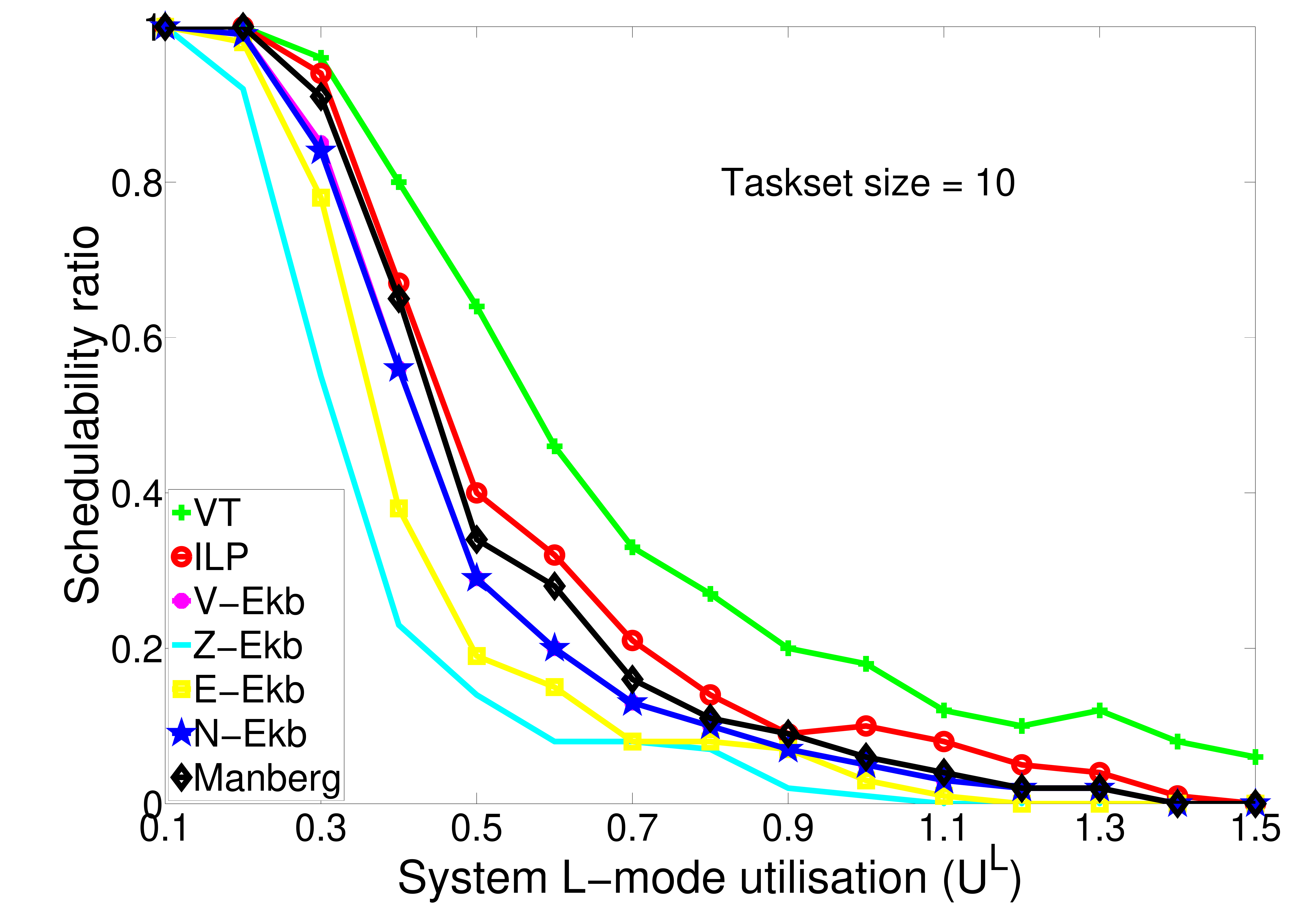}
  \caption{\label{fig:exp:SR:35}}
\end{minipage}
\end{minipage}
\end{figure*}

\begin{figure*}
\begin{minipage}{1\linewidth}
\centering
\begin{minipage}[b]{0.49\linewidth}
\centering
\includegraphics[width=0.99\columnwidth]{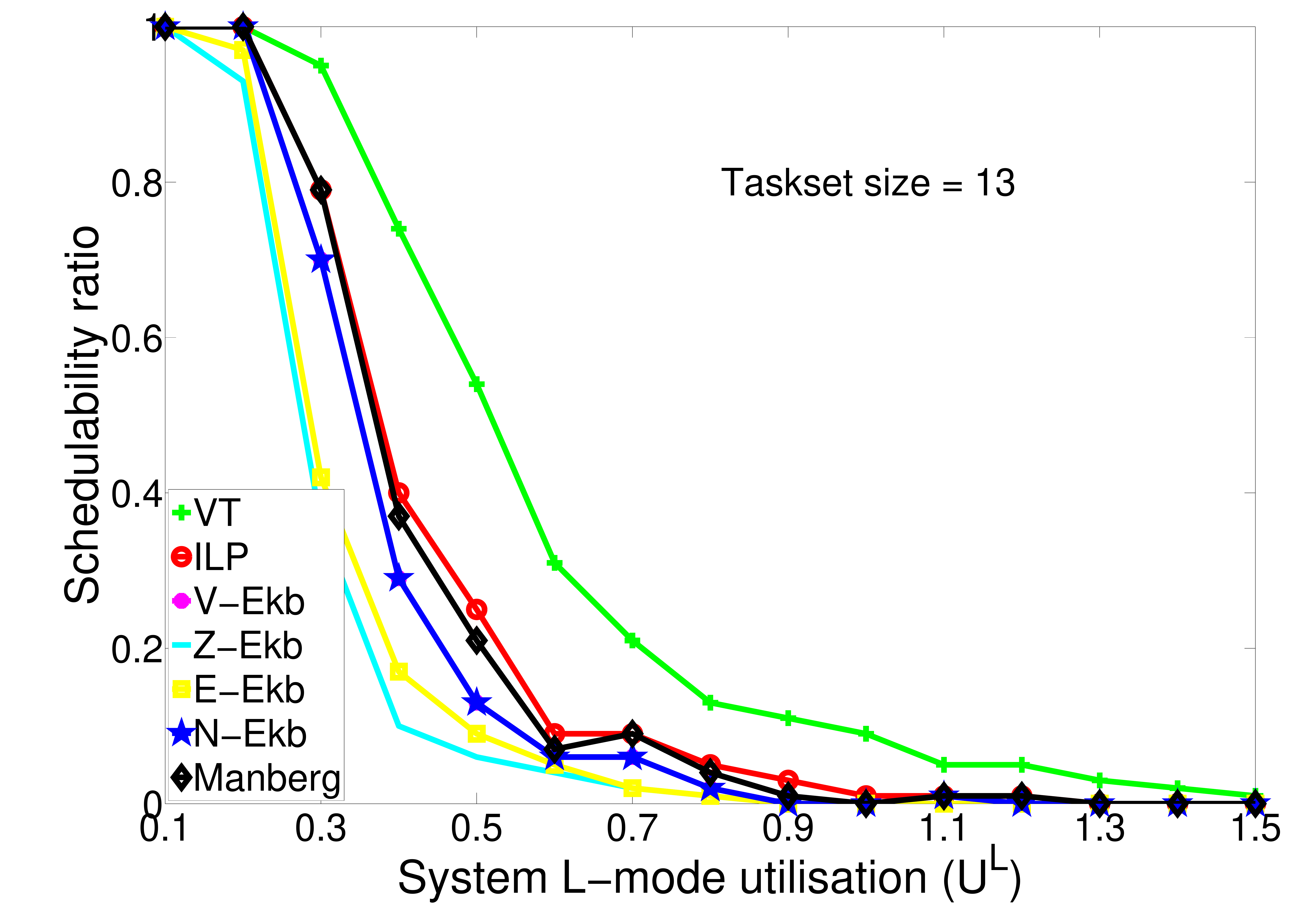}
  \caption{\label{fig:exp:SR:36}}
\end{minipage}
\begin{minipage}[b]{0.49\linewidth}
\centering
\includegraphics[width=0.99\columnwidth]{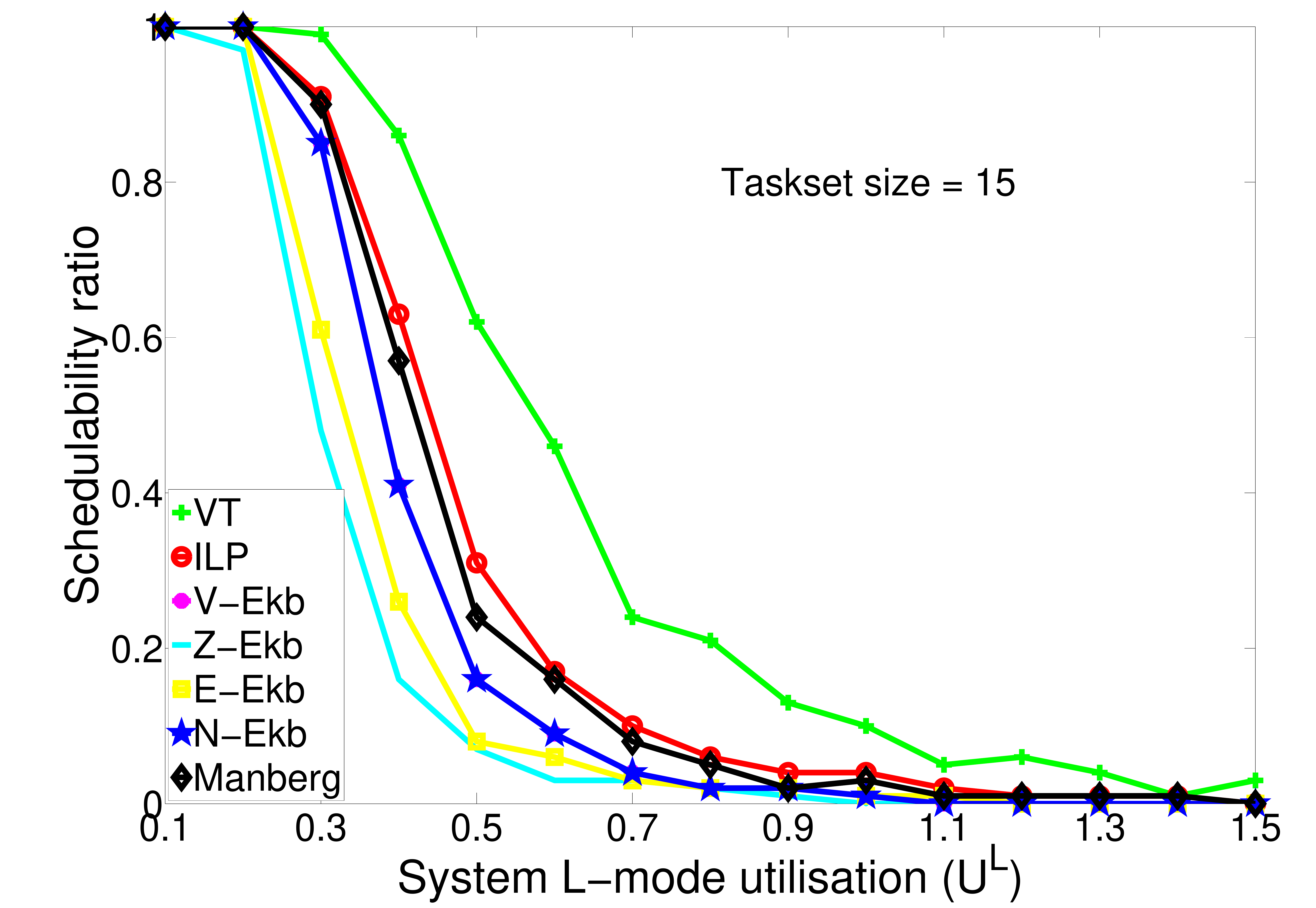}
  \caption{\label{fig:exp:SR:37}}
\end{minipage}
\end{minipage}
\end{figure*}

\begin{figure*}
\begin{minipage}{1\linewidth}
\centering
\begin{minipage}[b]{0.49\linewidth}
\centering
 \includegraphics[width=0.99\columnwidth]{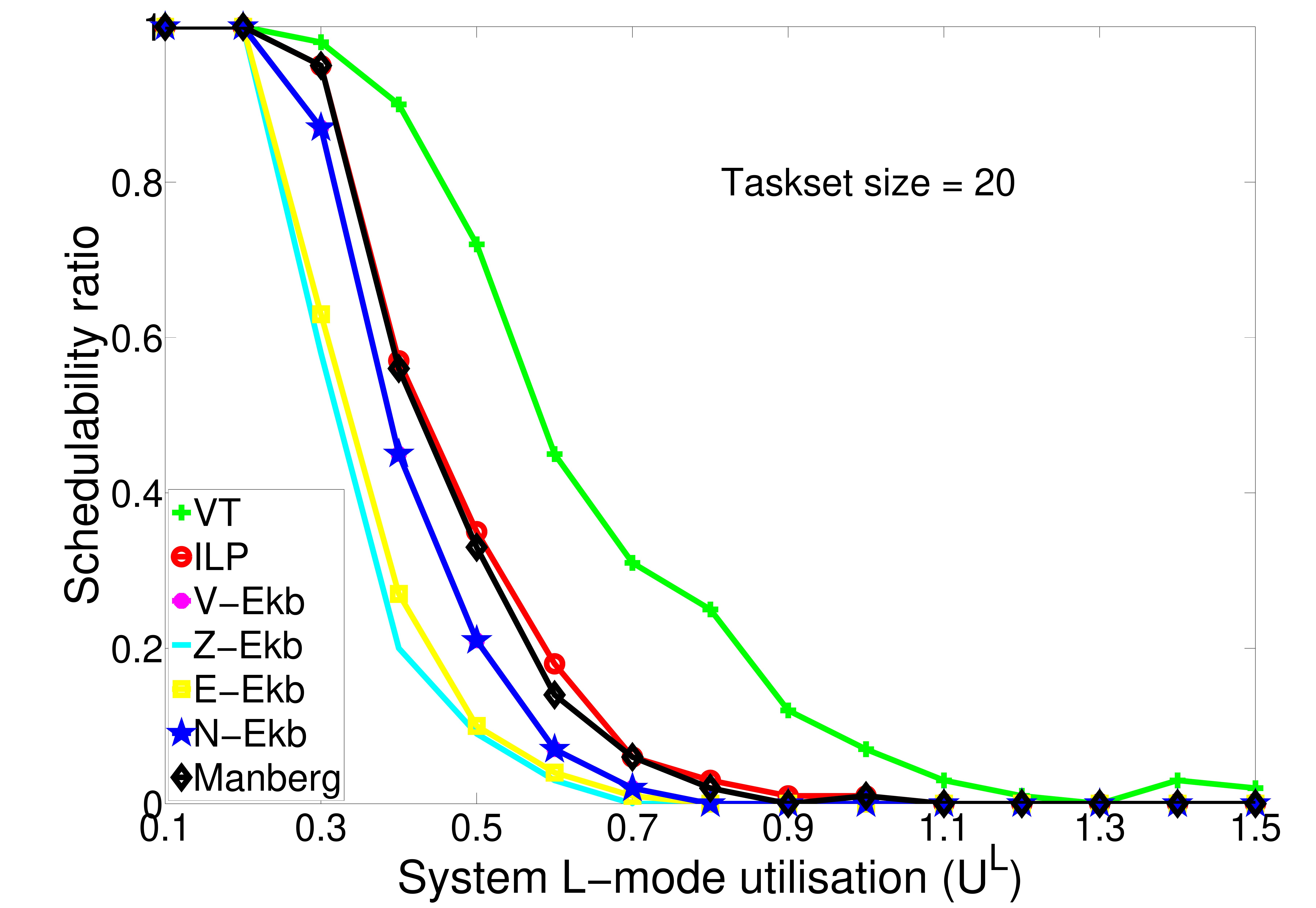}
  \caption{\label{fig:exp:SR:38}}
\end{minipage}
\end{minipage}
\end{figure*}

\end{document}